\documentclass{article}
\usepackage{bbold,dsfont}
\usepackage[ruled]{algorithm2e}
\usepackage{graphicx}
\usepackage{tikz}
\usetikzlibrary{arrows}
\usepackage[utf8]{inputenc}
\usepackage{amsmath}
\usepackage{amssymb}
\usepackage{amsthm}
\usepackage{dsfont}
\usepackage{bbold}
\usepackage{mathtools}
\usepackage{graphicx}
\usepackage{tikz}
\usepackage{xcolor}
\usepackage{verbatim}
\usepackage{authblk}
\usetikzlibrary{arrows}

\newtheorem{theorem}{Theorem}
\newtheorem{proposition}{Proposition}
\newtheorem{lemma}{Lemma}
\newtheorem{example}{Example}
\newtheorem{definition}{Definition}[section] 
\newtheorem{corollary}{Corollary}
\newtheorem{remark}{Remark}
\newtheorem{problem}{Problem}
\newtheorem{assumption}{Assumption}

\title{Optimal intervention in traffic networks}
\author[1]{Leonardo Cianfanelli} 
\author[1,2]{Giacomo Como} 
\author[3]{Asuman Ozdaglar} 
\author[4]{Francesca Parise}
\affil[1]{\footnotesize Dipartimento di Scienze Matematiche, Politecnico di Torino}
\affil[2]{\footnotesize Department of Automatic Control, Lund University, BOX 118, SE-22100, Lund, Sweden}
\affil[3]{\footnotesize Department of Electrical Engineering and Computer Science, MIT}
\affil[4]{\footnotesize Department of Electrical and Computer Engineering, Cornell University}
\date{\footnotesize \EMAIL{leonardo.cianfanelli@polito.it}, \EMAIL{giacomo.como@polito.it},
	\EMAIL{asuman@mit.edu},
	\EMAIL{fp264@cornell.edu}\\
	\URL{https://sites.google.com/view/leonardo-cianfanelli/}, \URL{https://staff.polito.it/giacomo.como/},  \URL{https://asu.mit.edu/},
	\URL{https://sites.coecis.cornell.edu/parise/}}

\usepackage[colorlinks=true,breaklinks=true,bookmarks=true,urlcolor=blue,
citecolor=blue,linkcolor=blue,bookmarksopen=false,draft=false]{hyperref}

\def\EMAIL#1{\href{mailto:#1}{#1}}
\def\URL#1{\href{#1}{#1}}         

\DeclareMathOperator*{\argmax}{arg\,max}

\newcommand{\mc}{\mathcal}
\newcommand{\mb}{\mathbf}

\newcommand{\dd}{\mathrm{d}}
\newcommand{\oo}{\mathrm{o}}

\newcommand{\ov}{\overline}
\newcommand{\tv}{\tilde{v}}
\newcommand{\ty}{\tilde{y}}
\newcommand{\control}{u}
\newcommand{\Control}{\mathcal{U}}

\begin{document}
\maketitle

\begin{abstract}
We study a network design problem (NDP) where the planner aims at selecting the optimal single-link intervention on a transportation network to minimize the travel time under Wardrop equilibrium flows. Our first result is that, if the delay functions are affine and the support of the equilibrium is not modified with interventions, the NDP may be formulated in terms of electrical quantities computed on a related resistor network. In particular, we show that the travel time variation corresponding to an intervention on a given link depends on the effective resistance between the endpoints of the link.
We suggest an approach to approximate such an effective resistance by performing only local computation, and exploit it to design an efficient algorithm to solve the NDP.
We discuss the optimality of this procedure in the limit of infinitely large networks, and provide a sufficient condition for its optimality. We then provide numerical simulations, showing that
our algorithm achieves good performance even if the equilibrium support varies and the delay functions are non-linear.
\end{abstract}

\section{Introduction}
Due to increasing populations living in urban areas, many cities are facing the problem of traffic congestion, which leads to increasing levels of pollution and massive waste of time and money \cite{congestion}. 
The problem of mitigating congestion has been tackled in the literature from two main perspectives. 
One approach is to influence the user behaviour by incentive-design mechanisms, for instance by road tolling \cite{brown2017studies,fleischer2004tolls,zhao2006line,cole2006much,Como.ea:16,Como.Maggistro:22}, information design \cite{das2017reducing,meigs2020optimal,wu2019information,wu2021value} or lottery rewards \cite{yue2015reducing}, to minimize the inefficiencies due to the autonomous uncoordinated decisions of the agents. A second approach is to intervene on the transportation network, by building new roads or enlarging the existing ones. The corresponding \emph{network design problem} (i.e., the problem of optimizing the intervention on a transportation network subject to some budget constraints, see e.g. \cite{leblanc1975algorithm}) is very challenging because of its bi-level nature,
i.e., it involves a network intervention optimization problem given the flow distribution for that particular network. We assume that each link of the network is endowed with a delay function and the flow distributes according to a Wardrop equilibrium, taking paths with minimum cost, defined as the sum of the delay functions of the links along the path (see \cite{beckmann1956studies,wardrop1952road}). A characterization of the Wardrop equilibrium is used to construct the lower level of the bi-level network design problem.

In this work we define a network design problem (NDP), and analyze in details a special instance of the problem, where the delay functions are affine, and the planner can improve the delay function of a single link. Our objective is to strike a balance between a problem that is simple enough to guarantee tractable analysis, yet rich enough to allow insights for more general classes of NDPs. We then extend the validity of the proposed method by a numerical analysis, showing that good performance are achieved even if the delay functions are non-linear.
For single-link affine NDPs, our first theoretical result provides an analytical characterization of the cost variation (i.e., the total travel time at the equilibrium) corresponding to an intervention on a particular link under a regularity assumption, which states that the set of links carrying positive flow remain unchanged with an intervention. This assumption, which is not new in the traffic equilibrium literature (see e.g. \cite{steinberg1983prevalence,dafermos1984some}) 
leads to a characterization of Wardrop equilibria using a system of linear equations and enables representing single-link interventions as rank-1 perturbations of the system. We show that this assumption is satisfied provided that the total incoming flow to the network is large enough and the network is series-parallel, which may be of independent interest. We exploit the structure of our characterization and the linearity of the delay functions to express the cost variation using the effective resistance of a link (i.e., between the endpoints of the link), defined with respect to a related resistor network, obtained by making the directed transportation network undirected, and assigning a conductance to each link based on the delay function of the link. Computing the effective resistance of a single link requires the solution of a linear system whose dimension scales with the network size (we indistinctly refer to the network size as the cardinality of the node and the link sets, implicitly assuming that transportation networks are sparse in a such a way that the average degree of the nodes is independent of the number of nodes, inducing then a proportionality between the number of nodes and links). Hence, solving the NDP requires the solution of $\mathrm{E}$ of these problems, with $\mathrm{E}$ denoting the number of links. Since this can be computationally intractable for large networks, our second main result proposes a method based on Rayleigh's monotonicity laws to approximate the effective resistance of each link with a number of iterations independent of the network size, thus leading to a significant reduction of complexity.
The key idea is that the effective resistance between two adjacent nodes $i$ and $j$ depends mainly on the local structure of the network around the two nodes (i.e., the set of nodes $\mathcal{N}_{\le d}$ that are at distance no greater than a small constant $d$ from at least one of $i$ and $j$), and may therefore be approximate by performing only local computation. 
Since typically in transportation networks the local structure of the network is independent of the network size (think for instance of a bidimensional square grid), the size of $\mathcal{N}_{\le d}$ does not scale with the network size, thus we can guarantee that the approximation error and computational complexity of our method also do not scale.
Our third main result establishes sufficient conditions under which the approximation error vanishes asymptotically in the limit of infinite networks, proving that if the related resistor network is recurrent the approximation error tends to vanish for large distance $d$. In the conclusive section we conduct a numerical analysis on synthetic and real transportation networks, showing that a good approximation of the effective resistance of a link can be achieved by looking at a small portion of the network. Moreover, while several assumptions are made to establish theoretical results (e.g., affine delay functions, support of equilibrium flows not varying with the intervention), we conduct a numerical analysis showing that good performance are achieved even if some assumptions are relaxed, i.e., if the delay functions are non-linear and the support of the equilibrium is allowed to vary with interventions.

In our work we consider a special case of NDPs. These problems have been formalized in the last decades via many different formulations. Both \textit{continuous} network design problems \cite{chiou2005bilevel,li2012global,wang2014models},
where the budget can be allocated continuously among the links, and \textit{discrete} formulations, in which the decision variables include which new roads to build \cite{gao2005solution}, how many lanes to add to existing roads \cite{wang2013global}, or a mix of those two problems \cite{poorzahedy2007hybrid}, 
have been considered in the literature, together with \textit{dynamical} formulations \cite{fontaine2017dynamic}, and formulations where the optimum is achieved by removing, instead of adding, links, because of Braess' paradox \cite{roughgarden2006severity,fotakis2012efficient}. 
For comprehensive surveys on the literature on NDP we refer to \cite{yang1998models,farahani2013review}. We stress that most of the literature focuses on finding polynomial algorithms to solve in approximation NDPs in their most general form. Our main contribution is to provide a tractable approach to solve a single-link network design problem in quasi-linear time, as well as providing intuition and a completely new formulation. For the future we aim at extending our techniques to more general cases, like the multiple interventions case.
In the setting of affine delay functions, our NDP formulation is also related to the literature on marginal cost pricing. We assume that interventions modify the linear coefficient of the delay function of link $e$ from $a_e$ to $\tilde{a}_e$, leading to $\tilde{\tau}_e(f_e) = \tau_e(f_e)-(a_e-\tilde{a}_e)f_e$, which is equivalent to adding a negative marginal cost toll on a link. In the literature the problem of optimal toll design has been widely explored, also dealing with the problem of the support of the Wardrop equilibrium varying after the intervention, i.e., without imposing restrictive assumptions. However, most of the toll literature aim at finding conditions under which a general NP-hard problem may be solved in polynomial time. The scope of our work is instead to provide a new formulation to a more tractable problem. Moreover, to relax the regularity assumption on the support of the equilibrium, in the toll literature it is often assumed that the network has parallel links, which is unrealistic for transportation networks (see, e.g., \cite{hoefer2008taxing,jelinek2014computing}. 
Our work is also related to \cite{steinberg1983prevalence,dafermos1984some}, where the authors investigate the sign of total travel time variation when a new path is added to a two-terminal network, under similar assumptions to ours, providing sufficient conditions under which the Braess' paradox arises. In our work we instead compute the total travel time variation with an intervention, and suggest an efficient algorithm to select the optimal intervention.
As mentioned, the key step of our approach is to reformulate the NDP in terms of a resistance problem, and also exploits the parallelism between resistor networks and random walks. From a methodological perspective it is worthwhile mentioning that the relation between Wardrop equilibria and resistor networks has been first investigated in \cite{klimm2019computing}, while the parallelism between random walks and Wardrop equilibria has been investigated in \cite{rebeschini2018locality}, although with different purposes. The relation between random walks and resistor networks is quite standard and well-known (see e.g. \cite{doyle1984random}). To summarize, the contribution of this paper is two-fold. From a methodological perspective, we provide a method to locally approximate the effective resistance between adjacent nodes, which may be of independent interest (effective resistance of a link is related to spanning tree centrality \cite{hayashi2016efficient}). From NDP perspective, we provide a new formulation of the NDP in terms of resistor networks, and exploit our methodological result to approximate efficiently single-link NDPs.

The rest of the paper is organized as follows. In Section~\ref{sec:model} we define the model and formulate the NDP as a bi-level program. In Section~\ref{sec3} we define single-link NDPs, rephrase the problem in terms of resistor networks, and discuss the regularity assumption. In Section~\ref{approximation} we provide our method to approximate the effective resistance of a link and exploit such a method to construct an algorithm to solve the problem. We then analyze the asymptotic performance of the proposed method in the limit of infinite networks in Section~\ref{performance}. In Section~\ref{sec:simulations} we provide numerical simulations. Finally, in the conclusive section, we summarize the work and discuss future research lines.

\subsection{Notation}
We let $\delta^{(i)}$, $\mathbf{1}$, $\mathbf{0}$ and $\mb{I}$ denote the unitary vector with $1$ in position $i$ and $0$ in all the other positions, the column vector of all ones, the column vector of all zeros, and the identity matrix, respectively, where the size of them may be deduced from the context. $A^T$ and $v^T$ denote the transpose of matrix $A$ and vector $v$, respectively. Given a vector $v$, we let $\mb{I}_v$ denote the matrix whose off-diagonal elements are zero and with diagonal elements $(\mb{I}_v)_{ii} = v_i$. 

\section{Model and problem formulation}
\label{sec:model}
We model the transportation network as a directed multigraph $\mathcal{G}=(\mathcal{N},\mathcal{E})$, and denote by $\oo, \dd \in \mc{N}$ the origin and the destination of the network. We assume for simplicity of notation that $\mathcal{N}=\{1,\cdots,\mathrm{N}\}$ and $\mathcal{E}=\{1,\cdots,\mathrm{E}\}$, and assume that $\oo$ and $\dd$ are respectively the first and the last node of the network. Every link $e$ is endowed with a tail $\xi(e)$ and a head $\theta(e)$ in $\mc{N}$.  We allow multiple links between the same pair of nodes, and assume that every link belongs to at least a path from $\mathrm{o}$ to $\mathrm{d}$, otherwise such a link may be removed without loss of generality. 
Let $m > 0$ denote the throughput from the origin $\mathrm{o}$ to the destination $\mathrm{d}$, and $\nu =  m(\delta^{(\mathrm{o})}-\delta^{(\mathrm{d})}) \in \mathds{R}^\mathrm{N}$. Let $\mathcal{P}=\{1,\cdots,\mathrm{P}\}$ denote the set of paths from $\mathrm{o}$ to $\mathrm{d}$. 
An admissible path flow is a vector $z \in \mathds{R}^{\mathrm{P}}_+$ satisfying the mass constraint
\begin{equation}
	\label{constraints}
\mathbf{1}^T z=m.    
\end{equation}
Let $A \in \mathds{R}^{\mathrm{E} \times \mathrm{P}}$ denote the link-path incidence matrix, with entries $A_{ep}=1$ if link $e$ belongs to the path $p$ or $0$ otherwise. The path flow induces a unique link flow $f \in \mathds{R}^\mathrm{E}$ via
\begin{equation}
	f=Az.
	\label{incidence_path}
\end{equation}
Every link $e$ is endowed with a non-negative and strictly increasing delay function $\tau_e : \mathds{R}_+ \to \mathds{R}_+$. We assume that the delay functions are in the form $\tau_e(f_e) = \tau_e(0) + a_e(f_e)$, where $\tau_e(0)$ is the travel time of the link when there is no flow on it, and $a_e(f_e)$ describes congestion effects, with $a_e(0)=0$.
The cost of path $p$ under flow distribution $f$ is the sum of the delay functions of the links belonging to $p$, i.e.,
\begin{equation}
	c_p(f)=\sum_{e \in \mathcal{E}}A_{ep}\tau_e(f_e).
	\label{cost_path}
\end{equation}
\begin{definition}[Routing game]
	A \emph{routing game} is a triple $(\mathcal{G}, \tau, \nu)$.
\end{definition}

A Wardrop equilibrium is a flow distribution such that no one has incentive in changing path. More precisely, we have the following definition.
\begin{definition}[Wardrop equilibrium]
	A path flow $z^*$, with associated link flow $f^*=Az^*$, is a Wardrop equilibrium if for every path $p$
	\begin{equation*}
		z^*_p>0 \implies c_p(f^*)\le c_q(f^*), \quad \forall q \in \mathcal{P}.
	\end{equation*}
\end{definition}\medskip

Let $B \in \mathds{R}^{\mathrm{N} \times \mathrm{E}}$ denote the node-link incidence matrix, with entries $B_{ne}=1$ if $n=\xi(e)$, $B_{ne}=-1$ if $n=\theta(e)$, or $B_{ne}=0$ otherwise.
It is proved in \cite{beckmann1956studies} that a link flow $f^*$ is a Wardrop equilibrium of a routing game if and only if
\begin{equation}
	\begin{aligned}
		f^* = \ & \underset{f \in \mathds{R}_+^\mathrm{E}, Bf=\nu}{\arg\min}
		& & \sum_{e \in \mathcal{E}} \int_0^{f_e} \tau_e(s) ds,
	\end{aligned}
	\label{convex_prob}
\end{equation}
where $Bf=\nu$ is the projection of \eqref{constraints} on the link set.
Since the delay functions are assumed strictly increasing, the objective function in \eqref{convex_prob} is strictly convex and the Wardrop equilibrium $f^*$ is unique. 
\begin{definition}[Social cost]
	The social cost of a routing game is the total travel time at the equilibrium, i.e.,
	\begin{equation*}
		C^{(0)}=\sum_{e \in \mathcal{E}} f_e^*\tau_e(f_e^*).
	\end{equation*}
\end{definition}
\medskip
The social cost can be interpreted as a measure of performance by a planner that aims at minimizing the overall congestion on the transportation network.
We now provide an equivalent characterization of the social cost of a routing game. To this end, let $\lambda^*$ and $\gamma^*$ denote the Lagrangian multipliers associated to $f^*\ge \mathbf{0}$ and $Bf=\nu$, respectively.
The KKT conditions of \eqref{convex_prob} read:
\begin{align}
	\label{kkt2}
	\begin{cases}
		\tau_e(f_e^*)+\gamma_{\theta(e)}^*-\gamma^*_{\xi(e)} - \lambda^*_{e}=0 & \forall e \in \mathcal{E},\\
		\sum_{e \in \mathcal{E}: \theta(e)=i} f_e - \sum_{e \in \mathcal{E}: \xi(e)=i} f_e + \nu_i=0 & \forall i \in \mathcal{N},\\
		\lambda_{e}^*f_{e}^*=0 & \forall e \in \mathcal{E},\\
		\lambda_{e}^*\ge 0 & \forall e  \in \mathcal{E},\\
		f_{e}^*\ge 0 & \forall e \in \mathcal{E}.
	\end{cases}
\end{align}
The third condition, known as complementary slackness, implies that if $\lambda_e^*>0$, then $f_e^*=0$, i.e., link $e$ is not used at the equilibrium. We let $\mathcal{E}_+$ denote the set of the links $e$ such that $\lambda_e^*>0$. The next lemma shows that the social cost may be characterized in terms of the Lagrangian multiplier $\gamma^*$.
\begin{lemma}
	\label{cost}
	Let $(\mc{G},\tau,\nu)$ denote a routing game. Then, 
	\begin{equation*}
		\begin{gathered}
			C^{(0)}=m(\gamma^*_\oo-\gamma_\dd^*).
		\end{gathered}
	\end{equation*}
\end{lemma}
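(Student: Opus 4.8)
The plan is to substitute the stationarity condition from the KKT system \eqref{kkt2} into the definition of the social cost and then collapse the resulting expression using complementary slackness together with the flow conservation constraint. The underlying intuition is that $\gamma^*$ plays the role of a node potential, so that the total travel time telescopes to the potential drop between origin and destination scaled by the throughput.

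First I would rewrite the first line of \eqref{kkt2} as $\tau_e(f_e^*) = \lambda_e^* + \gamma_{\xi(e)}^* - \gamma_{\theta(e)}^*$ for every link $e$, expressing each equilibrium delay as a combination of the two families of multipliers. Plugging this into $C^{(0)} = \sum_{e \in \mathcal{E}} f_e^* \tau_e(f_e^*)$ produces a term $\sum_{e} f_e^* \lambda_e^*$, which vanishes identically by the complementary slackness condition $\lambda_e^* f_e^* = 0$. This leaves $C^{(0)} = \sum_{e \in \mathcal{E}} f_e^* (\gamma_{\xi(e)}^* - \gamma_{\theta(e)}^*)$.

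Next I would recognize, from the definition of the node-link incidence matrix $B$, that $\gamma_{\xi(e)}^* - \gamma_{\theta(e)}^* = \sum_{n \in \mathcal{N}} B_{ne} \gamma_n^*$, so that exchanging the order of summation gives $C^{(0)} = \sum_{n} \gamma_n^* \sum_{e} B_{ne} f_e^* = (\gamma^*)^T B f^*$. Applying the flow conservation constraint $B f^* = \nu$ (the second line of \eqref{kkt2}) yields $C^{(0)} = (\gamma^*)^T \nu$. Finally, since $\nu = m(\delta^{(\oo)} - \delta^{(\dd)})$ has only the two nonzero entries $\nu_\oo = m$ and $\nu_\dd = -m$, this collapses to $m(\gamma_\oo^* - \gamma_\dd^*)$, as claimed.

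The argument is a short chain of algebraic substitutions, so no single step poses a genuine obstacle. The only point requiring care is bookkeeping of the sign conventions: one must check that the tail/head signs in $B$ are consistent with the direction of $\nu$ so that the origin term appears with a $+$ and the destination term with a $-$. I would verify this consistency explicitly by comparing $(Bf^*)_i = \sum_{e:\xi(e)=i} f_e^* - \sum_{e:\theta(e)=i} f_e^*$ with the second KKT equation, confirming that $(Bf^*)_i = \nu_i$ and hence that the final potential difference carries the correct orientation.
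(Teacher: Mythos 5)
Your proof is correct, but it takes a genuinely different route from the paper's. The paper argues at the \emph{path} level: it telescopes the stationarity relation $\gamma^*_{\xi(e)}-\gamma^*_{\theta(e)}=\tau_e(f_e^*)$ along the links of any used path to conclude that every path carrying flow has common cost $c_p(f^*)=\gamma^*_\oo-\gamma^*_\dd$, and then computes $C^{(0)}=\sum_p z_p^*\,c_p(f^*)=(\gamma^*_\oo-\gamma^*_\dd)\,\mathbf{1}^Tz^*=m(\gamma^*_\oo-\gamma^*_\dd)$ using the path-flow decomposition \eqref{incidence_path} and the mass constraint \eqref{constraints}. You instead stay entirely at the \emph{link} level: substituting $\tau_e(f_e^*)=\lambda_e^*+\gamma^*_{\xi(e)}-\gamma^*_{\theta(e)}$ into $C^{(0)}=\sum_e f_e^*\tau_e(f_e^*)$, annihilating $\sum_e f_e^*\lambda_e^*$ by complementary slackness, and contracting the remainder to $(\gamma^*)^TBf^*=(\gamma^*)^T\nu=m(\gamma^*_\oo-\gamma^*_\dd)$. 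Your version is slightly more economical and more general: it never invokes the path flow $z^*$, it handles unused links uniformly through the $\lambda^*$ term rather than implicitly restricting the telescoping to used paths, and the intermediate identity $C^{(0)}=(\gamma^*)^T\nu$ would carry over verbatim to arbitrary demand vectors $\nu$ (e.g.\ multiple origin--destination pairs), where the paper's single-commodity path argument would need rewording. What the paper's route buys is the interpretively useful intermediate fact that all used paths share the common cost $\gamma^*_\oo-\gamma^*_\dd$, identifying the multiplier difference with the equilibrium travel time, a fact the surrounding text leans on. Your sign bookkeeping is also consistent with the paper's conventions: with $B_{ne}=1$ for $n=\xi(e)$ and $B_{ne}=-1$ for $n=\theta(e)$, the second line of \eqref{kkt2} is exactly $(Bf^*)_i=\nu_i$, so the origin and destination terms acquire the correct signs, as you verified.
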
\medskip
\begin{proof}
	See Appendix~\ref{app:proofs}.
\end{proof}

We consider a NDP where the planner can improve the delay functions of the network with the goal of minimizing a combination of the social cost after the intervention and the cost of the intervention itself. Specifically, let $\control \in \mathds{R}_+^{\mathrm{E}}$ denote the intervention vector, with corresponding delay functions
$$
\tau_e^{(\control_e)}(f_e) = \tau_e(0) + \frac{a_e(f_e)}{1+\control_e}.
$$
This type of interventions may correspond for instance to enlarging some roads of the network. We let $h_e : [0,+\infty) \to [0,+\infty)$ denote the cost associated to the intervention on link $e$. The goal of the planner is to minimize a combination of the social cost and the intervention cost, where $\alpha \ge 0$ is the trade-off parameter. More precisely, by letting $f^*(\control)$ denote the Wardrop equilibrium corresponding to intervention $\control$, the NDP reads as follows.
\begin{problem}
	\label{prob:general}
	Let $(\mc{G},\tau,\nu)$ be a routing game, and $\alpha\ge0$ be the trade-off parameter. The goal is to select $\control^*$ such that
\begin{equation}
	\label{eq:ndp}
		\control^* \in \underset{\control \in \mathds{R}^\mathrm{E}_+}{\arg\min} \
		\sum_{e \in \mc{E}} f^*_e(\control) \tau_e^{(\control_e)}(f^*_e(\control)) +  \alpha h(\control),
\end{equation}
where $h(\control)=\sum_e h_e(\control_e)$, and
\begin{equation}
	\begin{aligned}
	f^*(\control) = \ & \underset{f \in \mathds{R}_+^\mathrm{E}, Bf=\nu}{\arg\min}
	& & \sum_{e \in \mathcal{E}} \int_0^{f_e} \tau_e^{(\control_e)}(s) ds.
	\end{aligned}
\label{eq:f(k)}
\end{equation}
\end{problem}
\medskip
\begin{remark}
We stress the fact that Problem~\ref{prob:general} is bi-level, in the sense that the planner optimizes the intervention $\control$ according to a cost function that depends on the Wardrop equilibrium $f^*(\control)$, which in turn is the solution of the optimization problem \eqref{eq:f(k)}, whose objective function depends on the intervention $\control$ itself.
\end{remark}

\begin{remark}
Problem~\ref{prob:general} is not equivalent to the toll design problem. The key difference between the two problems is that tolls modify the Wardrop equilibrium, but the performance of tolls is evaluated with respect to the original delay functions $\tau_e$. On the contrary, in Problem~\ref{prob:general} the intervention is evaluated with respect to the new delay functions $\tau_e^{(\control_e)}$.
\end{remark}

Problem~\ref{prob:general} is in general non-convex, and hard to solve because of its bi-level nature. For these reasons, in the next section we shall study a simplified problem where the delay functions are affine and the planner may intervene on one link only. In this setting we are able to rephrase the problem as a single-level optimization problem, and provide an electrical network interpretation of the problem.

\section{Single-link interventions in affine networks}
\label{sec3}
In this section we provide an electrical network formulation of the NDP under some restrictive assumptions. In particular, we provide a closed formula for the social cost variation in terms of electrical quantities computed on a related resistor network. To this end, we restrict our analysis to the space of feasible interventions $\Control$, defined as
$$
\Control := \{\control :  \control_e \delta^{(e)} \ \text{for a link} \ e \in \mc{E}, \control_e \ge 0  \}.
$$
In other words, $\Control$ represents the space of interventions on a single link of the network. We also assume that the delay functions are affine, i.e., $\tau_e(f_e) = a_ef_e + b_e$ for every $e$, and denote by $(\mc{G},a,b,\nu)$ routing games with affine delay functions.
For an intervention $\control$, let $(\mc{G},a^{(\control)},b,\nu)$ denote the corresponding affine routing game, $C^{(\control)}$ denote the corresponding social cost, and $\Delta C^{(\control)} = C^{(0)} - C^{(\control)}$ denote the social cost gain. Our problem can be expressed as follows.
\begin{problem}
	Let $(\mathcal{G}, a, b, \nu)$ be an affine routing game and $\alpha \ge0$ be the trade-off parameter. Find
	\begin{equation*}
			\control^* \in \ \underset{\control \in \Control}{\arg\max}
			 \ (\Delta C^{(\control)} - \alpha h(\control)).
	\end{equation*}
	\label{prob1}
\end{problem}

The next example shows that the problem cannot be decoupled by first selecting the optimal link $e^*$ and then the optimal strength of the intervention $\control^*_e$. 
\begin{example}
	\label{ex:ex}
	Consider the transportation network in Figure~\ref{fig:ex}, with linear delay functions
	$\tau_e(f_e) = a_ef_e$.
	By some computation, one can prove that
	\begin{equation*}
		\begin{aligned}
			\Delta C^{(\control_1\delta^{(1)})} & = m \frac{a_1a_2^2 \control_1}{(a_1+a_2)((\control_1+1) a_1 + a_2)},\\
			\Delta C^{(\control_2 \delta^{(2)})} & = m \frac{a_1^2a_2 \control_2}{(a_1+a_2)(a_1 + (\control_2+1) a_2)},\\
			\Delta C^{(\control_3 \delta^{(3)})} & = m \frac{\control_3}{\control_3+1}.
		\end{aligned}
	\end{equation*}
	In Figure~\ref{fig:ex} the social cost variation corresponding to intervention on every link $e$ is illustrated as functions of $\control_e$. Observe that the link that maximizes the social cost gain depends on $\control_e$. Thus, 
	the problem cannot be decoupled by first selecting the optimal link $e^*$ and then the optimal $\control_e^*$.
	\begin{figure}
		\centering
		\begin{tikzpicture}
			\node[draw, circle] (1) at (0,0.5)  {o};
			\node[draw, circle] (2) at (0,-1)  {n};
			\node[draw, circle] (3) at (0,-2.5) {d};
			\node (4) at (0,-3) {};
			
			\path [->, >=latex]  (1) edge [bend right=30]
			node [left] {$e_1$} (2);
			\path [->, >=latex]  (1) edge [bend left=30]
			node [right] {$e_2$} (2);
			\path [->, >=latex]  (2) edge [bend left=0]
			node [right] {$e_3$} (3);
		\end{tikzpicture}
		\includegraphics[width = 6cm]{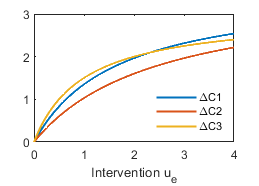}
		\caption{\emph{Left}: the graph of Example~\ref{ex:ex}. \emph{Right}: the social cost variation corresponding to single link interventions, with assignment $a_1 = 3, a_2 = 2, a_3 = 1, m = 3$.
			\label{fig:ex}}
	\end{figure}
\end{example}

Our theoretical results rely on the following technical assumption, stating that the support of the Wardrop equilibrium is not modified with an intervention.
\begin{assumption}
	Let $\mathcal{E}_+(\control)$ be the set of links $e$ such that for the routing game $(\mathcal{G}, a^{(\control)}, b, \nu)$ the Lagrangian multiplier $\lambda_e^*(\control)>0$.
	We assume that $\mathcal{E}_+(\control)=\mathcal{E}_+$ for every $\control$ in $\Control$.
	\label{assumption}
\end{assumption}

Assumption~\ref{assumption} is not new in the literature \cite{steinberg1983prevalence,dafermos1984some}.
We will get back to the assumption in Section~\ref{sec:assumption}.
With a slight abuse of notation, from now on let $\mc{E}$ denote $\mc{E} \setminus \mc{E}_+$.
We now define a mapping from the transportation network $\mc{G}$ to an associated resistor network $\mc{G}_R$.
\begin{definition}[Associated resistor network]
	Given the transportation network $\mc{G}=(\mc{N},\mc{E})$, the \emph{associated resistor network} $\mathcal{G}_R=(\mc{N},\mc{L},W)$ is constructed as follows:
		\begin{itemize}
			\item the node set $\mc{N}$ is the same.
			\item $W \in \mathds{R}^{\mathrm{N} \times \mathrm{N}}$ is the conductance matrix, with elements
			\begin{equation} 
				W_{ij}= \begin{cases}
					\sum_{\substack{e \in \mathcal{E}: \\
							\xi(e)=i, \theta(e)=j, \ \text{or} \\
							\xi(e)=j, \theta(e)=i}} \frac{1}{a_e} & \text{if}\ i \neq j \\
					0 & \text{if}\ i=j. 
				\end{cases}
				\label{W}
			\end{equation}
			Note that $W$ is symmetric, thus $\mc{G}_R$ is undirected. The element $W_{ij}$ has to be interpreted as the conductance between nodes $i$ and $j$.
			\item Multiple links connecting the same pair of nodes are not allowed, hence every link $l$ in $\mc{L}$ can be identified by a unordered pair of nodes $\{i,j\}$, and the set $\mc{L}$ is uniquely determined by $W$. Let $\mathrm{L}$ denote the cardinality of $\mc{L}$. The mapping $M: \mc{E} \to \mc{L}$ associates to every link $e$ of the transportation network the corresponding link $l = M(e) = \{\xi(e),\theta(e)\}$ of the resistor network. Note by \eqref{W} that $M(e)$ belongs to $\mc{L}$ for every $e$ in $\mc{E}$. 
		\end{itemize}
\end{definition}

Note that the coefficients $a_e$ correspond to resistances in the resistor networks. We let $w = W\mb{1}$ denote the degree distribution of the resistor network, and $w^* = \max_{i \in \mc{N}}w_i$ denote the maximal degree. Before establishing our first main result, we define two relevant quantities.
\begin{definition}
	\label{def:voltage}
	Let $v \in \mathds{R}^\mathrm{N}$ be the \emph{voltage} vector on $\mc G_R$ when a net electrical current $m$ is injected from $\oo$ to $\dd$, i.e., $v$ is the unique solution of
\begin{equation}
	\label{eq:voltage}
	\sum_{k \in \mc{N}} W_{hk}(v_h- v_k)=m(\delta^{(\oo)}-\delta^{(\dd)})\qquad \forall h\in\mathcal N.
\end{equation}
For a link $e$ in $\mc{E}$, let $y_e$ denote the \emph{electrical current} flowing from $\xi(e)$ to $\theta(e)$ on link $M(e)$ of $\mc G_R$, and let $\Delta v_e = v_{\xi(e)}-v_{\theta(e)}$. By Ohm's law, $\Delta v_e = a_e y_e$.
\end{definition}
\begin{definition}
	\label{def:effective}
	Let $\ov v \in \mathds{R}^\mathrm{N}$ be the voltage vector on $\mc G_R$ when a unitary current is injected from $i$ to $j$, i.e., 
	\begin{equation}
		\label{eq:potential}
		\sum_{k \in \mc{N}} W_{hk}(\ov v_h- \ov v_k)=\delta^{(i)}-\delta^{(j)}\qquad \forall h\in\mathcal N.
	\end{equation}
    The \emph{effective resistance} $r_{l}$ of link $l = \{i,j\}$ in $\mc L$ is the effective resistance between $i$ and $j$, i.e., $r_{l}=\ov{v}_i-\ov{v}_j$. Given a link $e$ in $\mc E$, we denote by $r_e$ the effective resistance of link $M(e)$ of the associated resistor network.
\end{definition}

The next theorem establishes a relation between the social cost gain with a single-link intervention and the associated resistor network. 
\begin{theorem}
	Let $(\mathcal{G}, a, b, \nu)$ be an affine routing game, and let Assumption~\ref{assumption} hold. Then,
	\begin{equation}
		 \Delta C^{(\control_e \delta^{(e)})}=a_e f_e^*\frac{y_e}{\frac{1}{\control_e}+\frac{r_e}{a_e}}.
		 		\label{el_form}
	\end{equation}
	\label{thm}
\end{theorem}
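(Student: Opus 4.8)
The plan is to reduce both the original and the intervened routing game to finite-dimensional linear systems via the KKT conditions \eqref{kkt2}, exploit the rank-one nature of a single-link intervention under Assumption~\ref{assumption}, and recognize the resulting correction as an electrical quantity on $\mc G_R$. First, under Assumption~\ref{assumption} the support is fixed, so on $\mc E$ (with the abuse of notation $\mc E=\mc E\setminus\mc E_+$) complementary slackness forces $\lambda^*_e=0$ and the first two lines of \eqref{kkt2} become, for affine delays $\tau_e(f_e)=a_ef_e+b_e$, the linear system $\mb{I}_a f^*+b=B^T\gamma^*$ and $Bf^*=\nu$, where $\mb{I}_a$ and $B$ are restricted to $\mc E$. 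Eliminating $f^*=\mb{I}_a^{-1}(B^T\gamma^*-b)$ yields $L\gamma^*=\nu+B\mb{I}_a^{-1}b$ with $L:=B\mb{I}_a^{-1}B^T$; a direct computation shows $L=\mathrm{diag}(w)-W$, i.e. $L$ is exactly the graph Laplacian of $\mc G_R$, since $\mb{I}_a^{-1}$ assigns conductance $1/a_e$ to each link.

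Second, I would write the same system for the intervened game $(\mc G,a^{(\control)},b,\nu)$ with $\control=\control_e\delta^{(e)}$, for which only the $e$-th resistance changes, from $a_e$ to $a_e/(1+\control_e)$. Subtracting the two systems and setting $\Delta\gamma:=\gamma^*(\control)-\gamma^*$ and $\Delta f:=f^*(\control)-f^*$, the common term $b$ cancels and one is left with $\mb{I}_a\Delta f=B^T\Delta\gamma+s\,\delta^{(e)}$ and $B\Delta f=\mb{0}$, where the scalar $s:=a_e\frac{\control_e}{1+\control_e}f^*_e(\control)$ collects the single-coordinate perturbation. Applying $B$ and using $B\delta^{(e)}=\delta^{(\xi(e))}-\delta^{(\theta(e))}$ gives the key identity $L\Delta\gamma=-\frac{s}{a_e}\big(\delta^{(\xi(e))}-\delta^{(\theta(e))}\big)$. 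Comparing with Definitions~\ref{def:voltage}--\ref{def:effective}, which read $Lv=\nu$ and $L\ov v=\delta^{(\xi(e))}-\delta^{(\theta(e))}$, identifies $\Delta\gamma=-\frac{s}{a_e}\ov v$ up to an additive multiple of $\mb1$, which is harmless since every quantity below is paired with a vector orthogonal to $\mb1$.

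Third, applying Lemma~\ref{cost} to both games and writing $\chi:=\delta^{(\oo)}-\delta^{(\dd)}$ (so $\nu=m\chi$), I get $\Delta C^{(\control_e\delta^{(e)})}=m\,\chi^T(\gamma^*-\gamma^*(\control))=-m\,\chi^T\Delta\gamma=\frac{ms}{a_e}\chi^T\ov v$. To evaluate $\chi^T\ov v$ I would use the reciprocity afforded by the symmetry of $L$: since $Lv=m\chi$, one has $\chi^T\ov v=\frac1m v^TL\ov v=\frac1m v^T(\delta^{(\xi(e))}-\delta^{(\theta(e))})=\frac{\Delta v_e}{m}=\frac{a_ey_e}{m}$ by Ohm's law. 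Hence $\Delta C^{(\control_e\delta^{(e)})}=s\,y_e$.

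The final step is also the main obstacle: the expression $s\,y_e$ still contains the post-intervention flow $f^*_e(\control)$ through $s$, whereas the claim is stated in terms of the original flow $f^*_e$, so I must close this self-reference. Reading off the $e$-th coordinate of $\Delta f=\mb{I}_a^{-1}B^T\Delta\gamma+\frac{s}{a_e}\delta^{(e)}$ and using $(\delta^{(\xi(e))}-\delta^{(\theta(e))})^T\ov v=r_e$ gives $\Delta f_e=\frac{s}{a_e}\big(1-\frac{r_e}{a_e}\big)$, which, once $s$ is expanded, is a scalar fixed-point equation in $f^*_e(\control)$. Solving it yields $f^*_e(\control)=(1+\control_e)f^*_e\big/\big(1+\control_e r_e/a_e\big)$, and substituting into $\Delta C=s\,y_e=a_e\frac{\control_e}{1+\control_e}f^*_e(\control)\,y_e$ and simplifying produces exactly $a_ef_e^*\,y_e\big/\big(\tfrac{1}{\control_e}+\tfrac{r_e}{a_e}\big)$. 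I expect this last algebraic elimination --- correctly isolating $r_e$ in the denominator through the fixed-point relation --- to be the only delicate point, as everything upstream is a routine consequence of the Laplacian and rank-one structure.
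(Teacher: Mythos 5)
Your proposal is correct, and it reaches \eqref{el_form} by a genuinely different route than the paper. The paper grounds the system at the destination ($\gamma^*_\dd=0$), block-inverts the full KKT matrix $H$ to get the explicit solution map \eqref{linear}, treats the intervention as a rank-one perturbation of $Q={_\dd L}$ handled by the Sherman--Morrison formula, and identifies $(B_-^e)^TQ^{-1}B_-^e=r_e$ via the Green's-function identities of Appendix~\ref{preliminaries} before a fairly heavy simplification in \eqref{eq:delta_gamma2}. You instead subtract the intervened KKT system from the original one, which produces the Poisson equation $L\Delta\gamma=-\frac{s}{a_e}\big(\delta^{(\xi(e))}-\delta^{(\theta(e))}\big)$ with a dipole source whose unknown strength $s=a_e\frac{\control_e}{1+\control_e}f^*_e(\control)$ still involves the post-intervention flow; you then evaluate the cost change via Lemma~\ref{cost} together with the reciprocity identity $\chi^T\ov v=\frac{1}{m}v^TL\ov v=\frac{a_ey_e}{m}$ (your $\chi=\delta^{(\oo)}-\delta^{(\dd)}$), and finally eliminate $f^*_e(\control)$ through the scalar fixed point $\Delta f_e=\frac{s}{a_e}\big(1-\frac{r_e}{a_e}\big)$, whose solution $f_e^*(\control)=(1+\control_e)f_e^*/(1+\control_e r_e/a_e)$ yields exactly \eqref{el_form}. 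The fixed-point closure is Sherman--Morrison in disguise --- both encode the same rank-one resolvent correction --- but your version needs neither the explicit block inverse nor the killed-random-walk detour: $r_e$ enters directly through Definition~\ref{def:effective} and the symmetry of $L$, so the argument is leaner and more self-contained; what the paper's heavier computation buys is the explicit formula \eqref{linear} for $(f^*,\gamma^*_-)$, which it reuses in the proof of Proposition~\ref{sp_prp}. Two details deserve an explicit line in a polished write-up, and both check out: the kernel of $L$ is harmless because $\chi^T\mathbf{1}=0$ and only differences $\Delta\gamma_{\xi(e)}-\Delta\gamma_{\theta(e)}$ appear (this requires the support network to be connected, which the paper also assumes for $\mc G_R$), and the fixed point is well posed since $1-\frac{\control_e}{1+\control_e}\big(1-\frac{r_e}{a_e}\big)=\frac{1+\control_e r_e/a_e}{1+\control_e}>0$.
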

\begin{proof}
See Appendix~\ref{app:proofs}.
\end{proof}

The ratio $r_{e}/a_e$ belongs to $(0,1]$ and is also known as \emph{spanning tree centrality}, which measures the fraction of spanning trees including link $M(e)$ among all spanning trees of the undirected network $\mc G_R$ \cite{hayashi2016efficient}. The spanning tree centrality of a link is maximized when removing the link disconnects the network. 
Theorem \ref{thm} states that the social cost variation due to intervention on link $e$ is: 
	\begin{itemize}
		\item proportional to $a_e f_e^*$, which measures the delay at the equilibrium due to congestion on link $e$;
		\item decreasing in the spanning tree centrality. Intuitively speaking, the benefits of intervention on link $e$ is larger when the intervention modifies the equilibrium flows so that agents can move from paths not including $e$ to paths including $e$, namely when $f_e^*$ increases after the intervention. This phenomenon does not occur if $e$ is a bridge, i.e., if $r_{e}/a_e=1$, and occurs largely when many paths from $\xi(e)$ to $\theta(e)$ exist, i.e., when $r_{e}/a_e$ is small;
		\item proportional to the current $y_e$. The role of this term is more clear in the special case of linear delay functions. In this case $y_e = f_e^*$ for all links $e$ in $\mc{E} \setminus \mc E_+ $, hence $a_e f_e^* y_e^* = a_e (f_e^*)^2$, which is the total travel time on link $e$ before the intervention. 
	\end{itemize}
The idea behind the proof is that with affine delay functions the KKT conditions of the Wardrop equilibrium are linear, and under Assumption~\ref{assumption} single-link interventions are equivalent to rank-1 perturbations of the system. Thus, by Lemma~\ref{cost} we can compute the cost variation by looking at Lagrangian multiplier $\gamma_\oo^*$, and then express such a variation in terms of electrical quantities.
In order to solve Problem~\ref{prob1} by the electrical formulation, we need to compute \eqref{el_form} for every link $e$ in $\mc E$. The Wardrop equilibrium $f^*$ is assumed to be observable and therefore given. The voltage $v$ (and thus $y$) can be derived by solving the linear system \eqref{eq:voltage} and has to be computed only once.
On the contrary, the computation of $r_{e}$ must be repeated for every link, hence it requires to solve $\mathrm{L}$ sparse linear systems. To reduce the computational effort, in Section~\ref{approximation} we shall propose a method to \textit{approximate} the effective resistance of a link that, under a suitable assumption on the sparseness of the network, does not scale with the network size, allowing for a more efficient solution to Problem~\ref{prob1}. The next result shows how to compute the derivative of the social cost variation for small interventions.
\begin{corollary}
Let $(\mathcal{G}, a, b, \nu)$ be a routing game, and assume that for every $i$ in $\mc E$ it holds either $f_i^*>0$ or $\lambda_i^*>0$. Then,
$$
\frac{\partial \Delta C(\control)}{\partial \control_e}\Big|_{\control = \mb{0}} = 
\begin{cases}
a_e f_e^* y_e \quad & \text{if} \ \lambda_e^* = 0, \\
0 & \text{if} \ \lambda_e^* > 0.
\end{cases}
$$
\end{corollary}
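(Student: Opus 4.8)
The plan is to differentiate the closed-form expression \eqref{el_form} of Theorem~\ref{thm} with respect to $\control_e$ and to treat the two cases $\lambda_e^*=0$ and $\lambda_e^*>0$ separately. The one subtlety is that Theorem~\ref{thm} presupposes Assumption~\ref{assumption}, whereas here I only assume strict complementarity at the \emph{unperturbed} game, i.e. that every link $i$ has either $f_i^*>0$ or $\lambda_i^*>0$. So the first step is to promote strict complementarity to a \emph{local} version of Assumption~\ref{assumption}. Because the delay functions are affine, $f^*(\control)$ solves a strictly convex parametric quadratic program over the fixed polyhedron $\{f\ge\mb 0,\ Bf=\nu\}$, so $f^*(\control)$ together with its multipliers $(\gamma^*(\control),\lambda^*(\control))$ depends continuously on $\control$. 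Combined with strict complementarity, this standard active-set stability guarantees that the links with $f_i^*>0$ keep strictly positive flow and those with $\lambda_i^*>0$ keep a strictly positive multiplier for all $\control_e$ in a right-neighborhood of $0$; that is, $\mc E_+(\control_e\delta^{(e)})=\mc E_+$ for $\control_e\in[0,\varepsilon)$. This is exactly what is needed to invoke \eqref{el_form} on this neighborhood and to compute the one-sided derivative at $\control=\mb 0$.

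Next I would handle the case $\lambda_e^*=0$. Strict complementarity then forces $f_e^*>0$, and by the previous step \eqref{el_form} holds for small $\control_e$ with $f_e^*,y_e,r_e,a_e$ the fixed quantities of the unperturbed resistor network, the whole $\control_e$-dependence being explicit. Writing $\phi(\control_e) := \bigl(\tfrac{1}{\control_e}+\tfrac{r_e}{a_e}\bigr)^{-1} = \control_e\bigl(1+\tfrac{r_e}{a_e}\control_e\bigr)^{-1}$, a direct computation gives $\phi(0)=0$ and $\phi'(0)=1$, so that
\begin{equation*}
\frac{\partial \Delta C(\control)}{\partial \control_e}\Big|_{\control=\mb 0} = a_e f_e^*\, y_e\,\phi'(0) = a_e f_e^*\, y_e.
\end{equation*}

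Then I would treat $\lambda_e^*>0$. Complementary slackness in \eqref{kkt2} gives $f_e^*=0$. Here I would argue that $\Delta C^{(\control_e\delta^{(e)})}=0$ for \emph{every} $\control_e\ge0$, not merely to first order: the intervention rescales only the congestion term $a_ef_e$ of link $e$, which vanishes at $f_e=0$, so $\tau_e^{(\control_e)}(0)=b_e=\tau_e(0)$ and the reduced cost $b_e+\gamma^*_{\theta(e)}-\gamma^*_{\xi(e)}=\lambda_e^*>0$ is unchanged. Hence the unperturbed triple $(f^*,\gamma^*,\lambda^*)$ still satisfies the KKT system \eqref{kkt2} of the perturbed game, and by uniqueness it remains its equilibrium; the social cost is unchanged because the only modified link carries no flow. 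The derivative is therefore $0$, consistently with substituting $f_e^*=0$ into \eqref{el_form}.

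The hard part will be the first step, namely promoting pointwise strict complementarity to active-set stability on a neighborhood of $\control=\mb 0$, since this is what legitimizes the use of Theorem~\ref{thm}, whose formula is otherwise guaranteed only under the global Assumption~\ref{assumption}. Everything after that is either a one-line derivative or the observation that improving an unused link cannot alter the equilibrium.
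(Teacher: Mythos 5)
Your proposal is correct and follows essentially the same route as the paper's proof: strict complementarity at $\control=\mb 0$ gives local invariance of the equilibrium support, the case $\lambda_e^*=0$ is settled by differentiating \eqref{el_form} at $\control_e=0$, and the case $\lambda_e^*>0$ by noting that $f_e^*(\control)=0$ persists so the social cost is unaffected. Your two refinements---spelling out the active-set stability argument that the paper merely asserts, and showing for $\lambda_e^*>0$ that the unperturbed KKT triple remains the equilibrium for \emph{every} $\control_e\ge 0$ (an exact verification, where the paper only invokes continuity and complementary slackness)---strengthen rather than change the paper's argument.
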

\medskip
\begin{proof}
The fact that for every link $i$ it holds either $f_i^*>0$ or $\lambda_i^*>0$ implies that for infinitesimal interventions the support of $f^*$ is not modified. If $\lambda_e^*=0$, then $f_e^*$ and we can derive the social cost variation in \eqref{el_form} with respect to $\control_e$. The case $\lambda_e^*>0$ follows from continuity arguments and from the complementary slackness condition, which implies that $f_e^*(\control)=0$ in a neighborhood of $\control = \mb{0}$.
\end{proof}
\begin{remark}
Observe that the derivative of the social cost does not depend on the effective resistance of the link.
\end{remark}

\subsection{On the validity of Assumption~\ref{assumption}}
\label{sec:assumption}
In this section we discuss Assumption~\ref{assumption}. In particular, we show that the assumption is without loss of generality on series-parallel networks, if the throughput is sufficiently large. We first recall the definition of directed series-parallel networks, and then present the result in Proposition~\ref{sp_prp}.
\begin{definition}
	A directed network $\mathcal{G}$ is series-parallel if and only if
	(i) it is composed of two nodes only ($\oo$ and $\dd$), connected by single link from $\oo$ to $\dd$, or
	(ii) it is the result of connecting two directed series-parallel networks $\mathcal{G}_1$ and $\mathcal{G}_2$ in parallel, by merging $\mathrm{o}_1$ with $\mathrm{o}_2$ and  $\mathrm{d}_1$ with $\mathrm{d}_2$, or
	(iii) it is the result of connecting two directed series-parallel networks $\mathcal{G}_1$ and $\mathcal{G}_2$ in series, by merging $\mathrm{d}_1$ with $\mathrm{o}_2$.
\end{definition}

\begin{proposition}
	\label{sp_prp}
	Let $(\mathcal{G}, a, b, \nu)$ be a routing game. If $\mathcal{G}$ is series-parallel, there exists $\overline{m}$ such that for every $m \ge \overline{m}$, $\mathcal{E}_+=\emptyset$. Furthermore, if $b=\mathbf{0}$, $\mathcal{E}_+=\emptyset$ for every $m >0$.
\end{proposition}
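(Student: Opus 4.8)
The plan is to reduce the statement to a claim about which links carry positive flow. By the complementary slackness condition in the KKT system~\eqref{kkt2}, a link $e$ lies in $\mc E_+$ exactly when $\lambda_e^*>0$, which forces $f_e^*=0$; conversely, $f_e^*>0$ implies $\lambda_e^*=0$. Hence $\mc E_+=\emptyset$ is equivalent to $f_e^*>0$ for every link $e$, i.e. to the whole network being used at equilibrium. I would prove the two assertions separately: first the homogeneous case $b=\mb0$ for all $m>0$, and then the general case for $m$ large, deducing the latter from the former by a scaling-and-continuity argument.

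For the case $b=\mb0$ the delay functions are linear, so~\eqref{convex_prob} minimizes $\tfrac12\sum_e a_e f_e^2$ over the polytope $\{f\ge\mb0:Bf=\nu\}$, which is compact because a series-parallel network is acyclic. I would argue by induction on the series-parallel construction, strengthening the hypothesis to: the equilibrium satisfies $f_e^*>0$ on every link, and the origin-destination potential drop $\gamma_\oo^*-\gamma_\dd^*$ equals $R\,m$ for some effective resistance $R>0$. The base case (a single link of resistance $a$, so $f=m$ and drop $am$) is immediate. For a series composition the full throughput $m$ traverses each factor, so the inductive hypothesis applies to each with the same $m$ and the drops add, giving $R=R_1+R_2$. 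For a parallel composition the key point is that no branch can be starved: if one branch carried zero flow, all its links would have zero delay and that branch would offer an $\oo$--$\dd$ cost of $0$, strictly cheaper than the positive cost of the used branch, contradicting the Wardrop condition. Thus both branches carry flows $m_1,m_2>0$ with $m_1+m_2=m$, the equal-drop condition $R_1 m_1=R_2 m_2$ is consistent, the effective resistance is $R=R_1R_2/(R_1+R_2)>0$, and the inductive hypothesis applied to each branch gives positivity on all of its links. This establishes $\mc E_+=\emptyset$ for every $m>0$ when $b=\mb0$.

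For the general case I would rescale by setting $g=f/m$, so that $g$ ranges over the fixed compact polytope $K=\{g\ge\mb0:Bg=\delta^{(\oo)}-\delta^{(\dd)}\}$ and the equilibrium of $(\mc G,a,b,\nu)$ becomes the Wardrop equilibrium of the game with delays $a_e g_e+b_e/m$; equivalently, $g$ is the unique minimizer over $K$ of $\Phi_m(g)=\sum_e(\tfrac{a_e}{2}g_e^2+\tfrac{b_e}{m}g_e)$. As $m\to\infty$ these objectives converge uniformly on $K$ to the strictly convex $\Phi_\infty(g)=\tfrac12\sum_e a_e g_e^2$, whose unique minimizer $g^{(\infty)}$ is the rescaled equilibrium of the homogeneous game; by the first part $g^{(\infty)}_e>0$ for every $e$. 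Strict convexity and compactness make the minimizer depend continuously on $m$ (e.g. by Berge's theorem), so $g\to g^{(\infty)}$. Since $\{g:g_e>0\ \forall e\}$ is open and contains $g^{(\infty)}$, there is $\ov m$ such that for all $m\ge\ov m$ the rescaled equilibrium, and hence $f^*=mg$, is strictly positive on every link, i.e. $\mc E_+=\emptyset$.

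The main obstacle is the parallel-composition step, which requires exactly the right strengthened hypothesis: knowing only that each factor uses all its links is insufficient, since to split the incoming flow one must know how the $\oo$--$\dd$ cost of each branch scales with the flow through it. Carrying the effective-resistance/linear-drop statement through the induction resolves this, and the starvation argument (an empty branch has zero cost under linear delays) is precisely what fails once $b\neq\mb0$, which is why a throughput threshold $\ov m$ is unavoidable in general. A secondary technical point is the continuity of the equilibrium in $m$, for which the compactness of $K$ and the strict convexity of the objective suffice.
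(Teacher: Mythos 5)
Your argument is correct, but it takes a genuinely different route from the paper's, so a comparison is worth recording. The paper never reasons about strict positivity of the equilibrium flows directly: it drops the sign constraints, solves the resulting equality-constrained KKT system in closed form (see \eqref{linear}), observes that the relaxed flow is affine in $m$, namely $f_e^*=-b_e/a_e+[KQ^{-1}K^T]_{e:}b+m\,\Delta\tv_e/a_e$, and reduces the whole proposition to showing $\Delta\tv_e>0$; that positivity is then proved on the \emph{associated resistor network} by recursively collapsing series and parallel compositions and tracking the sign of the unit injected current, which yields an explicit threshold $\ov m=\max_e\ov m_e$ with $\ov m_e=(b_e-a_e[KQ^{-1}K^T]_{e:}b)/\Delta\tv_e$. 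Your induction on the series-parallel construction, with the strengthened hypothesis ``full support plus potential drop $Rm$'' and the starvation argument at parallel steps, is in essence the same series-parallel recursion, but executed at the level of Wardrop equilibria rather than electrical currents: it buys a self-contained and more elementary proof of the $b=\mb{0}$ case (no Green's functions, no relaxed linear system, no matrix inversion), at the price of relying on the standard but implicit lemma that the equilibrium of a composition restricts to equilibria of the components with the induced throughputs (justifiable by noting that, for a fixed split $(m_1,m_2)$, the Beckmann potential in \eqref{convex_prob} decouples across branches). Your second step --- rescaling $g=f/m$ and letting $m\to\infty$ so that the perturbed objective converges uniformly on the compact polytope to the homogeneous one, then invoking continuity of the unique minimizer in $t=1/m$ at $t=0$ --- replaces the paper's affine-in-$m$ formula by a soft compactness-and-continuity argument; this is conceptually clean and correctly yields positivity for \emph{all} $m\ge\ov m$ (not just along a sequence), but it is purely qualitative, whereas the paper's $\ov m$ is computable and its electrical lemma is reused elsewhere (it is the same current-positivity fact exploited around Theorem~\ref{thm}). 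One small caution: your opening claim that $\mc{E}_+=\emptyset$ is \emph{equivalent} to $f_e^*>0$ for every $e$ is false in one direction, since a link may have $f_e^*=0$ and $\lambda_e^*=0$ simultaneously; this is harmless here because you only use the valid implication that full support forces $\lambda^*=\mb{0}$, but the equivalence should not be asserted.
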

\begin{proof}
	See Appendix~\ref{app:proofs}.
\end{proof}

\begin{remark}
	Proposition~\ref{sp_prp} implies that Assumption~\ref{assumption} is without loss of generality on series-parallel networks if $m \ge \overline{m}$. 
\end{remark}

The next example shows that, if the throughput is not sufficiently large, Assumption~\ref{assumption} may be violated. 
\begin{example}
	Consider the series-parallel network in Figure~\ref{wheatstone}.
	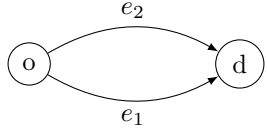
\begin{figure}
		\centering
		\begin{tikzpicture}[scale = 0.7]
			\node[draw, circle] (1) at (0,0)  {o};
			\node[draw, circle] (2) at (4,0) {d};
			\path [->, >=latex]  (1) edge [bend right=30]
			node [below] {$e_1$} (2);
			\path [->, >=latex]  (1) edge [bend left=30]
			node [above] {$e_2$} (2);
		\end{tikzpicture}
		\caption{A directed series-parallel network. If the throughput is not sufficiently large, Assumption~\ref{assumption} is not guaranteed to hold.}
		\label{wheatstone}
	\end{figure}
	Let $m=1$, and consider affine delay functions $\tau_1(f_1)=f_1+1, \tau_2(f_2)=f_2+3/2$. One can verify that
	\begin{equation*}
		f_1^*=3/4, \quad f_2^*=1/4, \quad \lambda_1^*=\lambda_2^*=0.
	\end{equation*}
	Modifying $a_1$ from $1$ to $1/3$ (i.e., with $\control=2\delta^{(1)}$), we get:
	\begin{equation*}
		f_1^*(\control)=1, \quad f_2^*(\control)=0, \quad \lambda_1^*(\control)=0, \quad \lambda_2^*(\control)=1/6,
	\end{equation*}
	violating Assumption~\ref{assumption}. Proposition~\ref{sp_prp} proves that this does not occur if $m$ is sufficiently large.
\end{example}

\section{An approximate solution to Problem 1}
\label{approximation}
As shown in the previous section, Problem~\ref{prob1} may be rephrased in terms of electrical quantities over a related resistor network. Solving the NDP problem in this formulation requires to solve $\mathrm{L}$ linear systems whose dimension scales linearly with $\mathrm{N}$. Since the voltage $v$ may be computed in quasi-linear time by solving the sparse linear system \eqref{eq:voltage} (see \cite{cohen2014solving} for more details), the computational bottleneck is given by the computation of the effective resistance of every link of the resistor network. 
The main idea of our method is that, although the effective resistance of a link depends on the entire network, it can be approximate by looking at a local portion of the network only.
We then formulate an algorithm to solve Problem~\ref{prob1} by exploiting our approximation method.
\subsection{Approximating the effective resistance}
\label{method}
We introduce the following operations on resistor networks.
\begin{definition}[Cutting at distance $d$]
	\label{def:cutting}
	A resistor network $\mathcal{G}_R$ is cut at distance $d$ from link $l = \{i,j\}$ in $\mc{L}$ if every node at distance greater than $d$ from link $l$ (i.e., from both $i$ and $j$) is removed, and every link with at least one endpoint in the set of the removed nodes is removed. Let $\mathcal{G}^{U_{d}}_{l}$ and $r_{l}^{U_d}$ denote such a network and the effective resistance of link $l$ on it, respectively.
\end{definition}

\begin{definition}[Shorting at distance $d$]
	A resistor network $\mathcal{G}_R$ is shorted at distance $d$ from $l$ in $\mc{L}$ if all the nodes at distance greater than $d$ from link $l$ are shorted together, i.e., an infinite conductance is added between each pair of such nodes. Let $\mathcal{G}^{L_{d}}_{l}$ and $r_{l}^{L_d}$ denote such a network and the effective resistance of link $l$ on it, respectively.
\end{definition}

We refer to Figure~\ref{quad} for an example of these techniques applied to a regular grid.
We next prove that $r^{U_d}_{l}$ and $r^{L_d}_{l}$ are respectively an upper and a lower bound for the effective resistance $r_{l}$ for every link $l$. To this end, let us introduce Rayleigh's monotonicity laws.
\begin{figure}
	\centering
	\begin{tikzpicture}[scale=0.7, transform shape]
		\node[draw,fill=green] (1) at (0,0)  {};
		\node[draw,fill=green] (2) at (1,0)  {};
		\node[draw,fill=yellow] (3) at (-1,0) {};
		\node[draw,fill=yellow] (4) at (0,1) {};
		\node[draw,fill=yellow] (5) at (1,1)  {};
		\node[draw,fill=yellow] (6) at (2,0)  {};
		\node[draw,fill=yellow] (7) at (1,-1)  {};
		\node[draw,fill=yellow] (8) at (0,-1)  {};
		\node[draw,fill=orange] (9) at (-2,0) {};
		\node[draw,fill=orange] (10) at (-1,1) {};
		\node[draw,fill=orange] (11) at (0,2)  {};
		\node[draw,fill=orange] (12) at (1,2)  {};
		\node[draw,fill=orange] (13) at (2,1)  {};
		\node[draw,fill=orange] (14) at (3,0)  {};
		\node[draw,fill=orange] (15) at (2,-1)  {};
		\node[draw,fill=orange] (16) at (1,-2)  {};
		\node[draw,fill=orange] (17) at (0,-2)  {};
		\node[draw,fill=orange] (18) at (-1,-1)  {};
		\node[draw,fill=red] (19) at (-3,0) {};
		\node[draw,fill=red] (20) at (-2,1) {};
		\node[draw,fill=red] (21) at (-1,2)  {};
		\node[draw,fill=red] (22) at (0,3)  {};
		\node[draw,fill=red] (23) at (1,3)  {};
		\node[draw,fill=red] (24) at (2,2)  {};
		\node[draw,fill=red] (25) at (3,1)  {};
		\node[draw,fill=red] (26) at (4,0)  {};
		\node[draw,fill=red] (27) at (3,-1)  {};
		\node[draw,fill=red] (28) at (2,-2)  {};
		\node[draw,fill=red] (29) at (1,-3) {};
		\node[draw,fill=red] (30) at (0,-3) {};
		\node[draw,fill=red] (31) at (-1,-2)  {};
		\node[draw,fill=red] (32) at (-2,-1)  {};
		\node[draw,fill=none] (33) at (-3,1)  {};
		\node[draw,fill=none] (34) at (-2,2)  {};
		\node[draw,fill=none] (35) at (-1,3)  {};
		\node[draw,fill=none] (36) at (2,3)  {};
		\node[draw,fill=none] (37) at (3,2)  {};
		\node[draw,fill=none] (38) at (4,1)  {};
		\node[draw,fill=none] (39) at (4,-1) {};
		\node[draw,fill=none] (40) at (3,-2) {};
		\node[draw,fill=none] (41) at (2,-3)  {};
		\node[draw,fill=none] (42) at (-1,-3)  {};
		\node[draw,fill=none] (43) at (-2,-2)  {};
		\node[draw,fill=none] (44) at (-3,-1)  {};
		\node[draw,fill=none] (45) at (-3,2)  {};
		\node[draw,fill=none] (46) at (-2,3)  {};
		\node[draw,fill=none] (47) at (3,3)  {};
		\node[draw,fill=none] (48) at (4,2)  {};
		\node[draw,fill=none] (49) at (4,-2) {};
		\node[draw,fill=none] (50) at (3,-3) {};
		\node[draw,fill=none] (51) at (-2,-3)  {};
		\node[draw,fill=none] (52) at (-3,-2)  {};
		\node[draw,fill=none] (53) at (-3,3)  {};
		\node[draw,fill=none] (54) at (4,3)  {};
		\node[draw,fill=none] (55) at (4,-3)  {};
		\node[draw,fill=none] (56) at (-3,-3)  {};
		\path (1) edge node [above] {$l$} (2);
		\path (1) edge (3);
		\path (1) edge (4);
		\path (1) edge (8);
		\path (2) edge (5);
		\path (2) edge (6);
		\path (2) edge (7);
		\path (4) edge (5);
		\path (7) edge (8);
		\path (3) edge (9);
		\path (3) edge (10);
		\path (10) edge (4);
		\path (4) edge (11);
		\path (5) edge (12);
		\path (11) edge (12);
		\path (3) edge (9);
		\path (3) edge (10);
		\path (10) edge (4);
		\path (5) edge (13);
		\path (6) edge (13);
		\path (6) edge (14);
		\path (6) edge (15);
		\path (15) edge (7);
		\path (16) edge (7);
		\path (16) edge (17);
		\path (8) edge (17);
		\path (8) edge (18);
		\path (3) edge (18);
		\path (19) edge (9);
		\path (9) edge (20);
		\path (10) edge (20);
		\path (10) edge (21);
		\path (11) edge (21);
		\path (11) edge (22);
		\path (22) edge (23);
		\path (12) edge (23);
		\path (12) edge (24);
		\path (24) edge (13);
		\path (25) edge (13);
		\path (25) edge (14);
		\path (26) edge (14);
		\path (27) edge (14);
		\path (15) edge (28);
		\path (15) edge (27);
		\path (16) edge (28);
		\path (16) edge (29);
		\path (29) edge (30);
		\path (17) edge (30);
		\path (17) edge (31);
		\path (31) edge (18);
		\path (32) edge (9);
		\path (32) edge (18);
		\path (53) edge (46);
		\path (46) edge (35);
		\path (35) edge (22);
		\path (23) edge (36);
		\path (36) edge (47);
		\path (47) edge (54);
		\path (45) edge (34);
		\path (34) edge (21);
		\path (24) edge (37);
		\path (37) edge (48);
		\path (33) edge (20);
		\path (25) edge (38);
		\path (44) edge (32);
		\path (27) edge (39);
		\path (52) edge (43);
		\path (43) edge (31);
		\path (28) edge (40);
		\path (40) edge (49);
		\path (56) edge (51);
		\path (51) edge (42);
		\path (42) edge (30);
		\path (29) edge (41);
		\path (41) edge (50);
		\path (50) edge (55);
		\path (53) edge (45);
		\path (46) edge (34);
		\path (35) edge (21);
		\path (24) edge (36);
		\path (37) edge (47);
		\path (48) edge (54);
		\path (45) edge (33);
		\path (34) edge (20);
		\path (25) edge (37);
		\path (38) edge (48);
		\path (33) edge (19);
		\path (26) edge (38);
		\path (44) edge (19);
		\path (26) edge (39);
		\path (52) edge (44);
		\path (43) edge (32);
		\path (27) edge (40);
		\path (39) edge (49);
		\path (56) edge (52);
		\path (51) edge (43);
		\path (42) edge (31);
		\path (28) edge (41);
		\path (40) edge (50);
		\path (49) edge (55);
	\end{tikzpicture}
	\vspace{0.3cm} \\
	\centering{$\mathcal{G}_{l}^{U_1}$ \hspace{2.6cm} $\mathcal{G}_{l}^{L_1}$} \\
	\vspace{0.2cm}
		\centering
		\begin{tikzpicture}[scale=0.85, transform shape]
			\node[draw,fill=green] (1) at (0,0)  {};
			\node[draw,fill=green] (2) at (1,0)  {};
			\node[draw,fill=yellow] (3) at (-1,0) {};
			\node[draw,fill=yellow] (4) at (0,1) {};
			\node[draw,fill=yellow] (5) at (1,1)  {};
			\node[draw,fill=yellow] (6) at (2,0)  {};
			\node[draw,fill=yellow] (7) at (1,-1)  {};
			\node[draw,fill=yellow] (8) at (0,-1)  {};
			\path (1) edge node [above] {$l$} (2);
			\path (1) edge (3);
			\path (1) edge (4);
			\path (1) edge (8);
			\path (2) edge (5);
			\path (2) edge (6);
			\path (2) edge (7);
			\path (4) edge (5);
			\path (7) edge (8);
		\end{tikzpicture}
	\quad
		\begin{tikzpicture}[scale=0.65, transform shape]
			\node[draw,fill=green] (1) at (0,0)  {};
			\node[draw,fill=green] (2) at (1,0)  {};
			\node[draw,fill=yellow] (3) at (-1,0) {};
			\node[draw,fill=yellow] (4) at (0,1) {};
			\node[draw,fill=yellow] (5) at (1,1)  {};
			\node[draw,fill=yellow] (6) at (2,0)  {};
			\node[draw,fill=yellow] (7) at (1,-1)  {};
			\node[draw,fill=yellow] (8) at (0,-1)  {};
			\node[draw,fill=orange] (9) at 
			(0.5,2) {s};
			\path (1) edge node [above] {$l$} (2);
			\path (1) edge (3);
			\path (1) edge (4);
			\path (1) edge (8);
			\path (2) edge (5);
			\path (2) edge (6);
			\path (2) edge (7);
			\path (4) edge (5);
			\path (7) edge (8);
			\path (4) edge [bend right=0]
			node [below] {} (9);
			\path (5) edge [bend right=0]
			node [below] {} (9);
			\path (3) edge [bend right=-20]
			node [below] {} (9);
			\path (6) edge [bend right=20]
			node [below] {} (9);
			\coordinate (ghost) at (3, 0);
			\coordinate (ghost2) at (-2,0);
			\draw[-] (7) to[out=0, in=-90] (ghost) to[out=90, in=0] (9);
			\draw[-] (8) to[out=180, in=-90] (ghost2)
			to[out=90, in=180] (9);
		\end{tikzpicture}
	\caption{Square grid. \emph{Above}: the yellow, orange and red nodes are at distance $1$, $2$ and $3$, respectively from the green nodes.
	\emph{Bottom-left}: the grid cut at distance $1$ from link $l$.
		\emph{Bottom-right}: the grid shorted at distance $1$ from link $l$. 
		Note that in the bottom right network the links connecting yellow nodes with node $s$ do not have unitary weights.}
	\label{quad}
\end{figure}
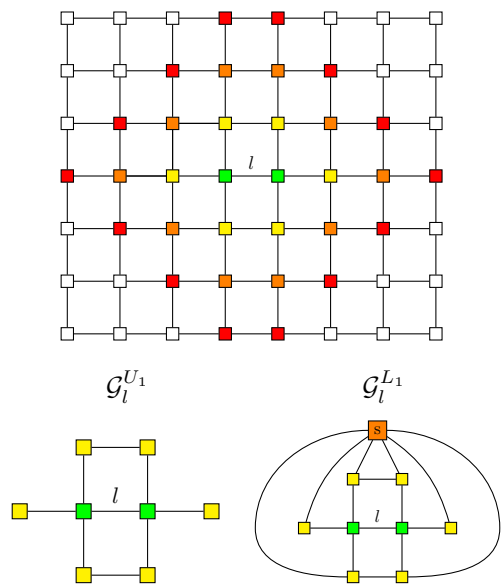
\begin{lemma}[Rayleigh's monotonicity laws \cite{levin2017markov}]
	If the resistances of one or more links are increased, the effective
	resistance between two arbitrary nodes cannot decrease.
	If the resistances of one or more links are decreased, the effective resistance cannot increase.
	\label{ray}
\end{lemma}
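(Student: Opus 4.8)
The plan is to deduce both statements from \emph{Thomson's principle}, the minimum-dissipation characterization of effective resistance, which exhibits the effective resistance as a pointwise minimum of functions that are manifestly monotone in the individual link resistances. Throughout, fix two nodes $i,j$, write $\rho_l$ for the resistance of link $l$ in the resistor network $\mc G_R$ (i.e., the reciprocal of its conductance), and call a vector $\theta \in \mathds{R}^{\mathrm{L}}$, one entry per link under a fixed orientation, a \emph{unit flow} from $i$ to $j$ if it obeys Kirchhoff's current law at every node with a net unit current entering at $i$ and leaving at $j$.

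First I would recall and, if needed, reprove Thomson's principle in the form
\begin{equation}
	R_{\mathrm{eff}}(i,j) \;=\; \min_{\theta \ \text{unit flow}} \ \sum_{l \in \mc L} \rho_l\, \theta_l^2 ,
	\label{eq:thomson}
\end{equation}
where $R_{\mathrm{eff}}(i,j) = \ov v_i - \ov v_j$ with $\ov v$ the solution of \eqref{eq:potential}, and where the minimizer is precisely the physical current induced by $\ov v$ through Ohm's law. To justify \eqref{eq:thomson} I would first check, by summation by parts against the conservation law \eqref{eq:potential} and Ohm's law $\ov v_x - \ov v_y = \rho_l y_l$, that the physical current $y$ dissipates exactly $\sum_{l} \rho_l y_l^2 = \ov v_i - \ov v_j = R_{\mathrm{eff}}(i,j)$; then, writing any competing unit flow as $\theta = y + (\theta - y)$ with $\theta - y$ a divergence-free circulation, the cross term $\sum_l \rho_l y_l (\theta_l - y_l)$ telescopes to zero, leaving $\sum_l \rho_l \theta_l^2 = \sum_l \rho_l y_l^2 + \sum_l \rho_l (\theta_l - y_l)^2 \ge R_{\mathrm{eff}}(i,j)$. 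This is classical and is the content of \cite{levin2017markov}, so I would either reproduce this short computation or simply cite it.

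Given \eqref{eq:thomson} the monotonicity is immediate. Increase the resistances from $\rho$ to $\rho'$ with $\rho'_l \ge \rho_l$ for all $l$ and the same link set, and let $\theta^*$ attain the minimum in \eqref{eq:thomson} for the network with resistances $\rho'$, so that $R'_{\mathrm{eff}}(i,j) = \sum_l \rho'_l (\theta^*_l)^2$. Since the flow constraints depend only on the graph and not on the resistances, $\theta^*$ is still a unit flow from $i$ to $j$ and hence an admissible competitor for the network with resistances $\rho$. Therefore
\begin{equation}
	R_{\mathrm{eff}}(i,j) \;\le\; \sum_{l \in \mc L} \rho_l (\theta^*_l)^2 \;\le\; \sum_{l \in \mc L} \rho'_l (\theta^*_l)^2 \;=\; R'_{\mathrm{eff}}(i,j),
	\label{eq:chain}
\end{equation}
the first inequality by optimality of the minimum and the second by $\rho_l \le \rho'_l$ together with $(\theta^*_l)^2 \ge 0$. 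This is the first claim. The second claim is the same inequality read backwards: swapping the roles of $\rho$ and $\rho'$ in \eqref{eq:chain} shows that decreasing resistances cannot increase $R_{\mathrm{eff}}$, so no separate argument is required.

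The main obstacle is not the inequality \eqref{eq:chain}, which is a one-line consequence of \eqref{eq:thomson}, but rather the clean statement and proof of Thomson's principle itself, and the handling of the boundary cases that matter for our application: sending a resistance to $+\infty$ (removing a link, as in cutting at distance $d$) or to $0$ (adding an infinite conductance, as in shorting at distance $d$). These limits fall outside the literal statement for finite $\rho,\rho'$, and I would dispose of them by a monotone limiting argument in \eqref{eq:chain}, letting $\rho'_l \to +\infty$ on removed links for the cutting bound and viewing the shorting links as the limit of vanishing positive resistances.
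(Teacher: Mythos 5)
Your proof is correct and follows essentially the same route as the source the paper relies on for this lemma: the paper gives no proof of its own, deferring to \cite{levin2017markov}, where Rayleigh's monotonicity is derived exactly as you do it, from Thomson's variational principle via the orthogonality of the physical current to divergence-free circulations and the observation that the set of unit flows is independent of the resistances. Your extra care with the degenerate cases $\rho_l \to +\infty$ (cutting) and $\rho_l \to 0$ (shorting) is well placed, since Proposition~\ref{prp_ray} applies the lemma precisely in those limiting forms, which the paper itself invokes without further justification.
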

\begin{proposition}
		\label{prp_ray}
	Let $\mathcal{G}_R$ be a resistor network. For every link $l=\{i,j\}$ in $\mc{L}$, 
	\begin{equation*}
		\label{bounds}
		r^{U_{d_1}}_{l}\ge r^{U_{d_2}}_{l} \ge r_{l} \ge r^{L_{d_2}}_{l} \ge r^{L_{d_1}}_{l}, \quad \forall d_2 \ge d_1 \ge 1.
	\end{equation*}
	Moreover,
	\begin{equation}
		1/w^* \le r^{L_d}_{l}\le r^{U_d}_{l} \le 1/W_{ij}, \quad \forall d \ge 1.
		\label{remark}
	\end{equation}
\end{proposition}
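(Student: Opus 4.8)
The plan is to derive every inequality from Rayleigh's monotonicity laws (Lemma~\ref{ray}) by recognizing that cutting and shorting are nothing but modifications of the link resistances in $\mc G_R$. For the chain of inequalities I would first argue that cutting $\mc G_R$ at distance $d$ from $l$ is equivalent to setting to $+\infty$ the resistance of every link incident to a node at distance greater than $d$; the subsequently isolated nodes carry no current and are irrelevant to the effective resistance between $i$ and $j$, both of which survive since $d\ge 1$. As this operation only increases resistances, Rayleigh gives $r^{U_d}_l \ge r_l$, and since cutting at $d_1\le d_2$ sends to infinity a superset of the links removed at $d_2$, the same law gives $r^{U_{d_1}}_l \ge r^{U_{d_2}}_l$. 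Symmetrically, shorting at distance $d$ amounts to adding zero-resistance links (infinite conductance) between the far nodes, i.e.\ to decreasing resistances, so Rayleigh yields $r_l \ge r^{L_d}_l$ and $r^{L_{d_2}}_l \ge r^{L_{d_1}}_l$ for $d_2 \ge d_1$. Concatenating these statements gives the full chain.

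For the two-sided bound \eqref{remark}, the middle inequality $r^{L_d}_l \le r^{U_d}_l$ is immediate from the chain, since both bracket $r_l$. For the upper bound I would note that because $d\ge 1$ the endpoints $i,j$ and the direct link $l=\{i,j\}$ of conductance $W_{ij}$ survive the cut; the effective resistance between two nodes joined by a resistor of resistance $1/W_{ij}$ can only be decreased by the remaining parallel paths, whence $r^{U_d}_l \le 1/W_{ij}$. For the lower bound I would invoke the standard fact that the effective resistance from a node $i$ to any other node is at least $1/c_i$, where $c_i$ denotes the total conductance incident to $i$: shorting every node other than $i$ into a single terminal places all of $i$'s incident edges in parallel, giving resistance exactly $1/c_i$, and by monotonicity this can only be smaller than $r^{L_d}_l$. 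The crucial local observation is that, since $d\ge 1$, every neighbor of $i$ lies at distance at most $1\le d$ from $l$ and is therefore not shorted, so the conductances incident to $i$ are untouched and $c_i = w_i \le w^*$; hence $r^{L_d}_l \ge 1/w_i \ge 1/w^*$ (the same argument applies verbatim to $j$).

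I expect the main obstacle to be conceptual rather than computational: the care needed to translate the purely topological operations of deleting and merging nodes into the resistance-increasing and resistance-decreasing moves to which Lemma~\ref{ray} applies, and in particular to verify, for the lower bound, that shorting at $d\ge 1$ leaves the one-hop neighborhood of $i$ (and of $j$) intact, so that the total conductance at that node is \emph{exactly} $w_i$ rather than some altered quantity. Once these equivalences are set up cleanly, each inequality reduces to a direct application of monotonicity together with the elementary parallel-resistor bounds.
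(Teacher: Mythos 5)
Your proof is correct and takes essentially the same approach as the paper's: both translate cutting and shorting into resistance-increasing and resistance-decreasing modifications so that every inequality in the chain follows from Rayleigh's monotonicity laws, and both obtain the endpoint bounds in \eqref{remark} from the extreme networks (keeping only the direct link of conductance $W_{ij}$ for the upper bound, and shorting all nodes other than $i$ into a single terminal for the lower bound). Your extra observation that shorting at $d\ge 1$ leaves the one-hop neighborhood of $i$ intact, so the merged two-node network has conductance exactly $w_i \le w^*$, is just a slightly sharper phrasing of the paper's remark that this conductance is not greater than $w^*$.
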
\medskip
\begin{proof}
Cutting a network at distance $d$ is equivalent to setting to infinity the resistance of all the links with at least one endpoint at distance greater than $d$. Shorting a network at distance $d$ is equivalent to setting to zero the resistance between any pair of nodes at distance greater than $d$. Then, by Rayleigh's monotonicity laws, it follows $r^{U_d}_{l}\ge r_{l} \ge r^{L_d}_{l}$. Similar arguments may be used to show that, if $d_1<d_2$, then $r^{U_{d_1}}_{l} \ge r^{U_{d_2}}_{l}$ and $r^{L_{d_1}}_{l} \le r^{L_{d_2}}_{l}$. The right inequality in \eqref{remark} follows from Rayleigh's monotonicity laws, by noticing that the effective resistance computed in the network with only nodes $i$ and $j$ (which is equal to $1/W_{ij}$) is an upper bound for $r_{l}^{U_{1}}$. The left inequality follows from noticing that the effective resistance on the network in which every node except $i$ is shorted with $j$, which results in a network with only two nodes and a conductance between $i$ and $j$ not greater than $w^*$ (hence, resistance no less than $1/w^*$) is a lower bound of $r^{L_1}_{l}$.
\end{proof}

Proposition~\ref{prp_ray} states that cutting and shorting a network provides upper and lower bound for the effective resistance of a link. Moreover, the bound gap is a monotone function of the distance $d$.

\subsection{Our algorithm}
We here propose an algorithm to solve in approximation Problem~\ref{prob1} based on our method for approximating the effective resistance. Our approach is detailed in Algorithm 1.
\begin{algorithm}
	\SetAlgoNoLine
	\KwIn{The affine routing game $(\mathcal{G},a,b,\nu)$, the cost functions $\{h_e\}_{e \in \mc{E}}$, and the distance $d\ge1$ for effective resistance approximation.}
	\KwOut{The optimal intervention $\control^{*d}$.}
	Construct the associated resistor network $\mc{G}_R$.\\
	Compute $v$ and $y$ by solving \eqref{eq:voltage}.\\
	\For{every $l$ in $\mathcal{L}$}
	{Construct $\mathcal{G}_{l}^{U_d}$ and $\mathcal{G}_{l}^{L_d}$. \\
		Compute $r_{l}^{U_d}$ and $r_{l}^{L_d}$ on $\mathcal{G}_{l}^{U_d}$ and $\mathcal{G}_{l}^{L_d}$.}
	\For{every $e$ in $\mathcal{E}$}
	{Find $\control_e^{*d}$ such that	
		\begin{equation}
			\label{step1alg}
			\begin{aligned}
				\control_e^{*d} \in \
				& \underset{\control_e \ge 0} {\argmax}
				& & a_e f_e^* \frac{y_e}{\frac{1}{\control_e}+\frac{r_e^{U_d}+r_e^{L_d}}{2a_e}} - \alpha h_e(\control_e).
			\end{aligned}
		\end{equation}}
	Find $e^{*d}$ such that	
	\begin{equation}
		\begin{aligned}
			e^{*d} \in \
			& \underset{e \in \mc{E}} {\argmax}
			& &  a_e f_e^*\frac{y_e}{\frac{1}{\control_e^{*d}}+\frac{r_e^{U_d}+r_e^{L_d}}{2a_e}} - \alpha h_e(\control_e^{*d}).
		\end{aligned}
		\label{step2alg}
	\end{equation}
	\caption{}
	\label{alg}
The optimal intervention is $\control^{*d} = \control_e^{*d} \delta^{(e^{*d})}$.
\end{algorithm}
Notice that the performance of Algorithm 1 depends on the choice of the parameter $d$. Specifically, the higher $d$ is the better is the approximation of the social cost variation.
\begin{theorem}
	\label{thm2}
	Let
	$\Delta C^{(\control)}$ be the social cost gain corresponding to intervention $\control = \control_e \delta^{(e)}$ as given in Theorem~\ref{thm}, and
	$$
	\Delta C_d^{(\control)}=a_ef_e^*\frac{y_e}{\frac{1}{\control_e}+\frac{r_e^{U_d}+r_{e}^{L_d}}{2a_e}}$$
	be the social cost gain estimated by Algorithm~\ref{alg} for a given distance $d\ge 1$. Then,
	\begin{equation*}
		\begin{aligned}
			\bigg|\frac{\Delta C^{(\control)}-\Delta C_d^{(\control)}}{\Delta C^{(\control)}}\bigg| 
			&\le \frac{\epsilon_{ed}}{2\Big(\frac{1}{\control_e}+\frac{r_e^{U_d}+r_e^{L_d}}{2a_e}\Big)} \\
			&\le \frac{\epsilon_{ed}}{2\Big(\frac{1}{\control_e}+\frac{1}{w^* \cdot a_e}\Big)},
			\label{err_rel2}
		\end{aligned}
	\end{equation*}
	where
	\begin{equation*}
		\label{epsilon}
		\epsilon_{ed}:=
		\frac{r_e^{U_d}-r_e^{L_d}}{a_e}.
	\end{equation*}
	Furthermore,
	\begin{equation}
		\Delta C^{(\control)} \ge a_e f_e^* \frac{y_e}{\frac{1}{\control_e}+\frac{r_e^{U_d}}{a_e}}.
		\label{min_gain}
	\end{equation}
\end{theorem}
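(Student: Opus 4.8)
The plan is to reduce the entire statement to a single algebraic identity for the relative error together with two monotonicity observations, all of which follow once Theorem~\ref{thm} and Proposition~\ref{prp_ray} are available. Writing, from Theorem~\ref{thm},
\[
\Delta C^{(\control)}=\frac{a_e f_e^* y_e}{\frac{1}{\control_e}+\frac{r_e}{a_e}},
\qquad
\Delta C_d^{(\control)}=\frac{a_e f_e^* y_e}{\frac{1}{\control_e}+\frac{\hat r_e}{a_e}},
\qquad
\hat r_e:=\frac{r_e^{U_d}+r_e^{L_d}}{2},
\]
the first step is to observe that the common prefactor $a_e f_e^* y_e$ cancels in the ratio, so that
\[
\frac{\Delta C^{(\control)}-\Delta C_d^{(\control)}}{\Delta C^{(\control)}}
=1-\frac{\frac{1}{\control_e}+\frac{r_e}{a_e}}{\frac{1}{\control_e}+\frac{\hat r_e}{a_e}}
=\frac{(\hat r_e-r_e)/a_e}{\frac{1}{\control_e}+\frac{\hat r_e}{a_e}}.
\]
Thus no positivity assumption beyond $\Delta C^{(\control)}\neq 0$ is needed for the relative error itself.

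The second step bounds the numerator. By Proposition~\ref{prp_ray} the true resistance satisfies $r_e^{L_d}\le r_e\le r_e^{U_d}$, so $r_e$ lies in the interval $[r_e^{L_d},r_e^{U_d}]$ whose midpoint is exactly $\hat r_e$; hence the deviation is at most half the interval width, $|\hat r_e-r_e|\le (r_e^{U_d}-r_e^{L_d})/2$. Dividing by $a_e$ and recalling $\epsilon_{ed}=(r_e^{U_d}-r_e^{L_d})/a_e$ gives $|\hat r_e-r_e|/a_e\le \epsilon_{ed}/2$, which, substituted into the identity above, yields the first displayed inequality. For the second inequality I would invoke the remaining part of Proposition~\ref{prp_ray}, namely $r_e^{L_d}\ge 1/w^*$: since then both $r_e^{U_d}$ and $r_e^{L_d}$ exceed $1/w^*$, their average obeys $\hat r_e\ge 1/w^*$, so replacing $\frac{r_e^{U_d}+r_e^{L_d}}{2a_e}$ in the denominator by the smaller quantity $\frac{1}{w^* a_e}$ only enlarges the fraction, establishing the claimed chain.

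Finally, the lower bound \eqref{min_gain} is again a monotonicity consequence of Proposition~\ref{prp_ray}: from $r_e\le r_e^{U_d}$ we get $\frac{1}{\control_e}+\frac{r_e}{a_e}\le \frac{1}{\control_e}+\frac{r_e^{U_d}}{a_e}$, and since $t\mapsto a_e f_e^* y_e/t$ is decreasing for $a_e f_e^* y_e>0$, enlarging the denominator can only decrease the value, so $\Delta C^{(\control)}\ge a_e f_e^* y_e/\big(\frac{1}{\control_e}+\frac{r_e^{U_d}}{a_e}\big)$. I do not expect a genuine obstacle here, as the argument is elementary; the only point demanding care is the strict positivity of the prefactor $a_e f_e^* y_e$, which both makes the relative error well defined and justifies the final monotonicity step. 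This positivity holds because link $e$ carries positive equilibrium flow ($f_e^*>0$) and the associated current $y_e$ is aligned with it, so that the social cost gain is strictly positive.
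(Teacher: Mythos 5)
Your proof is correct and takes essentially the same route as the paper's: your exact identity for the relative error combined with the midpoint half-width bound $|\hat r_e - r_e|\le (r_e^{U_d}-r_e^{L_d})/2$ is the same estimate the paper obtains by factoring the difference and applying the triangle inequality to $|r_e^{U_d}+r_e^{L_d}-2r_e|$, and both proofs then use $r_e^{L_d}\ge 1/w^*$ from Proposition~\ref{prp_ray} for the second inequality and the monotonicity $r_e\le r_e^{U_d}$ for \eqref{min_gain}. One caveat on your closing remark: the claim that $y_e$ is ``aligned'' with $f_e^*$ is only guaranteed for linear delays ($b=\mathbf{0}$, where $y_e=f_e^*$ on the support, as the paper notes after Theorem~\ref{thm}), and in general affine games $y_e$ can be negative (Braess-type links); your identity shows the relative-error bound itself is sign-independent, and for \eqref{min_gain} the paper's own proof tacitly makes the same nonnegativity assumption, so this is a shared imprecision rather than a gap relative to the paper.
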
\medskip
\begin{proof}
See Appendix~\ref{app:proofs}.
\end{proof}

In the next section we provide sufficient conditions for $\epsilon_{ed}$ to vanish for large distance $d$ in the limit of infinite networks. In the rest of this section we show that the bound gap (and therefore $\epsilon_{ed}$) and the computational complexity of the bounds (for a fixed $d$) depend only on the local structure around link $M(e)$ of the resistor network, and do not scale with the network size, under the following assumption.
\begin{assumption}
	\label{sparse}
	Let $\mathcal{G}_R$ be the resistor network corresponding to the transportation network $\mc{G}$. Let $l \in \mc L$ be an arbitrary link of $\mc{G}_R$, and $\mathcal{N}_{\le d}$ denote the set of nodes that are at distance no greater than $d$ from link $l$.
	We assume that the network $\mc{G}$ is sparse in such a way that the cardinality of $\mathcal{N}_{\le d}$ does not depend on $\mathrm{N}$ for any $d$.
\end{assumption}

Assumption~\ref{sparse} is suitable for transportation networks, because of physical constraints not allowing for the degree of the nodes to grow unlimitedly (think for instance of planar grids, where the degree of the nodes is given no matter what the size of the network is, and the local structure of the network around an arbitrary node does not depend on the network size $\mathrm{N}$). Notice also that, under Assumption~\ref{sparse}, $\mathrm{N}$ and $\mathrm{L}$ are proportional. 
\begin{proposition}
	Let $\mathcal{G}_R=(\mc{N},\mc{L},W)$ be a resistor network, $l = \{i,j\}$ in $\mc{L}$, and $d \ge 1$. Then, $r_l^{U_d}$ and $r_l^{L_d}$, and their computational complexity, depend only on the structure of $\mathcal{G}_R$ within distance $d+1$ from $i$ and $j$ only. Furthermore, under Assumption~\ref{sparse} they do not depend on $\mathrm{N}$.
	\label{speed}
\end{proposition}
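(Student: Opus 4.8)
The plan is to treat the two modified networks $\mathcal{G}_l^{U_d}$ and $\mathcal{G}_l^{L_d}$ separately, showing in each case that the network itself is completely determined by the ball of radius $d+1$ around the endpoints of $l=\{i,j\}$, and then to observe that the effective resistance of any network, being the solution of the Laplacian system \eqref{eq:potential} posed on that network, is a function of the network alone. The only graph-theoretic fact I will use repeatedly is the triangle inequality for the graph distance: if a node $n$ is at distance $k$ from $l$ then every neighbour of $n$ is at distance at most $k+1$ from $l$, and any shortest path from $\{i,j\}$ to a node at distance $k$ stays inside $\mathcal{N}_{\le k}$. I recall here that distance from the link $l$ means $\min$ of the distances to $i$ and to $j$.

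For the cut network, recall that $\mathcal{G}_l^{U_d}$ is obtained by deleting every node at distance greater than $d$ from $l$ together with every incident link; equivalently, it is the subgraph of $\mathcal{G}_R$ induced by $\mathcal{N}_{\le d}$. I would argue that this induced subgraph is recoverable from the structure within distance $d+1$: a breadth-first exploration from $\{i,j\}$ of depth $d$ already reveals all of $\mathcal{N}_{\le d}$, since shortest paths stay inside the ball, and a link of $\mathcal{G}_R$ is retained precisely when both of its endpoints lie in $\mathcal{N}_{\le d}$; deciding this for the links incident to the boundary nodes of $\mathcal{N}_{\le d}$ only touches nodes at distance $d+1$. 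Note also that the link $l$ itself is kept (both endpoints are at distance $0$), so $i$ and $j$ lie in the same component and $r_l^{U_d}$ is well defined. Hence $\mathcal{G}_l^{U_d}$, and therefore $r_l^{U_d}$, is a function of $\mathcal{G}_R$ restricted to $\mathcal{N}_{\le d+1}$.

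For the shorted network the bookkeeping is more delicate, so this is the step I expect to require the most care. Here $\mathcal{G}_l^{L_d}$ keeps the same nodes and internal links as $\mathcal{G}_l^{U_d}$ but, in addition, merges all of $\mathcal{N}\setminus\mathcal{N}_{\le d}$ into a single node $s$, since shorting a set of nodes is the same as identifying them and the links internal to that set become self-loops at $s$, which do not affect the effective resistance. What remains to be pinned down are the conductances joining $\mathcal{N}_{\le d}$ to $s$. By the triangle inequality, a node at distance strictly less than $d$ has all its neighbours in $\mathcal{N}_{\le d}$, so only the nodes at distance exactly $d$ contribute links to $s$; moreover the conductance from such a node $u$ to $s$ equals $\sum_k W_{uk}$ over the neighbours $k$ of $u$ lying at distance $d+1$. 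All of this data sits within distance $d+1$ of $l$, so $\mathcal{G}_l^{L_d}$, and hence $r_l^{L_d}$, are again determined by $\mathcal{G}_R$ on $\mathcal{N}_{\le d+1}$.

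Finally, both effective resistances are obtained by solving the Laplacian system associated with the respective modified network, as in \eqref{eq:potential}, whose dimension is $|\mathcal{N}_{\le d}|$ in the cut case and $|\mathcal{N}_{\le d}|+1$ in the shorted case; the cost of constructing each network and of solving its system therefore depends only on the local structure within distance $d+1$ and not on the remainder of $\mathcal{G}_R$. Invoking Assumption~\ref{sparse} with radius $d+1$, the cardinality $|\mathcal{N}_{\le d+1}|$ does not grow with $\mathrm{N}$, so neither the quantities $r_l^{U_d}, r_l^{L_d}$ nor their computational complexity depends on $\mathrm{N}$, which is the claim.
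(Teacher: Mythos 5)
Your proposal is correct and follows essentially the same route as the paper's proof: both construct $\mathcal{G}_l^{U_d}$ and $\mathcal{G}_l^{L_d}$ by a breadth-first exploration reaching nodes within distance $d+1$ of $\{i,j\}$, observe that the effective resistances are computed by solving the systems on these local subnetworks, and invoke Assumption~\ref{sparse} so that neither the subnetwork size nor the solve cost grows with $\mathrm{N}$. Your write-up is in fact more careful than the paper's brief sketch — in particular the triangle-inequality bookkeeping showing that only distance-$d$ nodes connect to the shorted node $s$, with conductances $\sum_k W_{uk}$ over distance-$(d+1)$ neighbours — but this is an elaboration, not a different argument.
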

\begin{proof} 
See Appendix~\ref{app:proofs}.
\end{proof}

\begin{remark}
	To the best of our knowledge, the complexity of the most efficient algorithm to compute the spanning tree centrality (or effective resistance) of a link in large networks scales with the number of links \cite{hayashi2016efficient}. On the contrary,
	Proposition~\ref{speed} states that under Assumption~\ref{sparse} the computational time for approximating a single effective resistance does not scale with $\mathrm{N}$. Therefore, approximating all the effective resistances requires a computational time linear in $\mathrm{N}$. Observe that $v$ (and thus $y$) is computed via a diagonally dominant, symmetric and positive definite linear systems. The design of fast algorithms to solve this class of problem is an active field of research in the last years. To the best of our knowledge, the best algorithm has been provided in \cite{cohen2014solving} and has complexity
	$O(M \log^k \mathrm{N} \log \mathrm{1/\epsilon})$, where $\epsilon$ is the tolerance error, $k$ is a constant, and $M$ is the number of non-zero elements in the matrix of the linear system. Since in our case $M$ scales with $\mathrm{L}$, and since $\mathrm{L}$ scales with $\mathrm{N}$ under Assumption~\ref{sparse}, Algorithm~\ref{alg} is quasilinear in $\mathrm{N}$. Step \eqref{step1alg} consists in maximizing a function of one variable. Finally, step \eqref{step2alg} consists in taking the maximum of $\mathrm{E}$ numbers.
\end{remark}

\section{Bound analysis}
\label{performance}
In this section we characterize the gap between the bounds on the effective resistance of a link in terms of random walks over the resistor networks $\mathcal{G}_R$, $\mathcal{G}_{l}^{U_d}$ and $\mathcal{G}_{l}^{L_d}$. We then leverage this characterization to provide a sufficient condition on the network under which the bound gap vanishes asymptotically for large distance $d$. To this end, we interpret the conductance matrix $W$ of the resistor network as the transition rates of continuous-time Markov chain whose state space is the node set of the network, and introduce the following notation. Let:
\begin{itemize}
	\item $T_\mc{S}$ and $T_\mc{S}^+$ denote the hitting time (i.e., the first time $t\ge0$ such that the random walk hits the set $\mc{S} \in \mc{N}$), and the return time (i.e., the first time $t>0$ such that the random walk hits the set $\mc{S}$), respectively.
	\item $\mc{N}_d$ denote the set of the nodes that are at distance $d$ from link $l = \{i,j\}$, i.e., at distance $d$ from $i$ (or $j$) and at distance greater or equal than $d$ from $j$ (or $i$). Index $l$ is omitted for simplicity of notation.
	\item $p_k(X)$, $p_k^{U_d}(X)$ and $p_k^{L_d}(X)$, denote the probability that event $X$ occurs, conditioned on the fact that the random walk starts in $k$ at time $0$ and evolves over the resistor networks $\mathcal{G}_R$, $\mathcal{G}_{l}^{U_d}$ and $\mathcal{G}_{l}^{L_d}$, respectively.
\end{itemize} 
The next result provides a characterization of the bound gap in terms of random walks over $\mathcal{G}_R$, $\mathcal{G}_{l}^{U_d}$ and $\mathcal{G}_{l}^{L_d}$.
\begin{proposition}
	\label{gap}
	Let $\mathcal{G}_R=(\mathcal{N},\mathcal{L},W)$ be a resistor network. For every link $l = \{i,j\}$ in $\mc{L}$,
	\begin{equation}
		\begin{aligned}
		r^{U_d}_{l}-r^{L_d}_{l}
		& \le \frac{w_{i}}{(W_{ij})^2} \underbrace{p_i(T_{\mc{N}_d} < T_j)}_{\text{Term 1}} \ \cdot \\
		& \cdot \underset{g \in \mc{N}_d}{ \max} \underbrace{\big(p^{U_d}_g(T_i < T_j)-p^{L_d}_g(T_i < T_j)\big)}_{\text{Term 2}},
		\end{aligned}
		\label{equat_gap}
	\end{equation}
where the quantities in \eqref{equat_gap} are computed with respect to the continuous-time Markov chain with transition rates $W$. 
\end{proposition}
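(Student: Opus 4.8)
The plan is to translate both sides of \eqref{equat_gap} into the language of the reversible continuous-time Markov chain with rates $W$, exploiting the classical identity relating escape probabilities to effective resistance. First I would recall that for any of the finite networks $\mathcal{G}_R$, $\mathcal{G}^{U_d}_l$, $\mathcal{G}^{L_d}_l$, a unit-current/grounding computation (set $v_i=1$, $v_j=0$ and extend harmonically, so that $v_k=p_k(T_i<T_j)$) yields, upon summing the current leaving $i$, the escape identity $1/r=w_i\,p_i(T_j<T_i^+)$. Since $i$ lies at distance $0$ from $l$ and for $d\ge 1$ all of its incident links fall within distance $d$, neither cutting nor shorting alters the links at $i$; hence $w_i$ is common to the three networks and
\begin{equation*}
\frac{1}{r^{L_d}_l}-\frac{1}{r^{U_d}_l}=w_i\big(p^{L_d}_i(T_j<T_i^+)-p^{U_d}_i(T_j<T_i^+)\big).
\end{equation*}
Writing $r^{U_d}_l-r^{L_d}_l = r^{U_d}_l r^{L_d}_l\big(1/r^{L_d}_l-1/r^{U_d}_l\big)$ and bounding $r^{U_d}_l r^{L_d}_l\le 1/(W_{ij})^2$ via \eqref{remark} of Proposition~\ref{prp_ray} produces the prefactor $w_i/(W_{ij})^2$, so it remains to estimate the difference of escape probabilities.

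The key step is a coupling/first-passage decomposition that localizes this difference to the boundary $\mathcal{N}_d$. All three networks share the same links incident to every node at distance at most $d-1$ (a node at distance $d-1$ has all neighbours at distance at most $d$, and only links touching $\mathcal{N}_{>d}$ are modified), so a walk started at $i$ has identical law in $\mathcal{G}^{U_d}_l$ and $\mathcal{G}^{L_d}_l$ up to the stopping time $T:=T_{\mathcal{N}_d}\wedge T_i^+\wedge T_j$. Decomposing $p_i(T_j<T_i^+)$ according to the triggering event of $T$: on $\{T=T_i^+\}$ the event fails, on $\{T=T_j\}$ it holds, and both contributions are common to the two networks and cancel in the difference; on $\{T=T_{\mathcal{N}_d}\}$, reaching some $g\in\mathcal{N}_d$, the strong Markov property contributes $q_g\,p_g(T_j<T_i)$, where $q_g$ is the network-independent probability of first hitting $\mathcal{N}_d$ at $g$ before $T_i^+\wedge T_j$. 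Using $p_g(T_j<T_i)=1-p_g(T_i<T_j)$ this gives
\begin{equation*}
p^{L_d}_i(T_j<T_i^+)-p^{U_d}_i(T_j<T_i^+)=\sum_{g\in\mathcal{N}_d}q_g\big(p^{U_d}_g(T_i<T_j)-p^{L_d}_g(T_i<T_j)\big).
\end{equation*}
Bounding each summand by its maximum over $g$ and using $\sum_g q_g = p_i(T_{\mathcal{N}_d}<T_i^+\wedge T_j)\le p_i(T_{\mathcal{N}_d}<T_j)$ then delivers Term~1 times the maximal Term~2, which is precisely \eqref{equat_gap}.

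The main obstacle is the bookkeeping in this decomposition rather than any deep estimate. I must argue carefully that $q_g$ and the $\{T=T_j\}$ contribution are genuinely identical across the two modified networks (this rests on the coupling up to $T$, valid because the walk visits only nodes in $\mathcal{N}_{\le d-1}$ before $T$), and that replacing $\sum_g q_g$ by $p_i(T_{\mathcal{N}_d}<T_j)$ is legitimate, which requires the maximal Term~2 to be nonnegative — a consequence of the already-established sign $1/r^{L_d}_l\ge 1/r^{U_d}_l$. One should also confirm that the walk almost surely reaches $\{i,j\}$ in each finite network, so that $p_g(T_j<T_i)=1-p_g(T_i<T_j)$; this holds since both the cut and shorted networks are finite and connected (every retained node joins $\{i,j\}$ through a path of nodes at strictly smaller distance).
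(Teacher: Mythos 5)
Your proof is correct, and it takes a genuinely different route from the paper's, even though both rest on the same two ingredients: the identity \eqref{reff} and the observation that the cut and shorted networks agree on all links incident to nodes within distance $d-1$ of $l$, so that differences localize to the first passage to $\mc{N}_d$. The paper works with Green's functions: it writes $r_l^{U_d}={_jG^{U_d}_{ii}}/w_i$ and $r_l^{L_d}={_jG^{L_d}_{ii}}/w_i$, splits the expected visits to $i$ into those before and after the first hit of $\mc{N}_d$ (the former being common to both networks), evaluates the latter as a first-passage weight $p_i(T_g=T_{j\cup\mc{N}_d})$ times $p^{\cdot}_g(T_i<T_j)$ times a geometric return factor $\big(1-p_i^{\cdot}(T_i^+<T_j)\big)^{-1}$, and then — after rewriting the return factors as $w_i r_l^{U_d}$ and $w_i r_l^{L_d}$ — arrives at a \emph{self-bounding} inequality in which $r_l^{U_d}-r_l^{L_d}$ appears on both sides; solving it and using $1-p_i(T_{\mc{N}_d}<T_j)\ge W_{ij}/w_i$ together with $r_l^{U_d}\le 1/W_{ij}$ yields the prefactor $w_i/(W_{ij})^2$. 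You instead use the escape-probability form $1/r=w_i\,p_i(T_j<T_i^+)$ of the same identity and obtain an \emph{exact} boundary decomposition of $1/r^{L_d}_l-1/r^{U_d}_l$ via the coupling up to $T=T_{\mc{N}_d}\wedge T_i^+\wedge T_j$ (note that your weights $q_g$ forbid intermediate returns to $i$, while the paper's $p_i(T_g=T_{j\cup\mc{N}_d})$ permit them — a genuinely different decomposition, of the escape probability rather than of the visit count); converting to $r_l^{U_d}-r_l^{L_d}$ through $r_l^{U_d}r_l^{L_d}\le 1/(W_{ij})^2$ from \eqref{remark} produces the same constant by a different accounting, with the bonus that your intermediate boundary identity is actually slightly tighter than the stated bound (your $\sum_g q_g=p_i(T_{\mc{N}_d}<T_i^+\wedge T_j)$ is then relaxed to Term 1). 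Your route avoids both the geometric-series computation and the self-referential step, so it is arguably cleaner; what the paper's Green's-function formalism buys is a single apparatus, set up once in the appendix, that also serves the proof of Theorem~\ref{thm}. You also correctly identify and discharge the delicate points: nonnegativity of the maximal Term 2 (from $r_l^{U_d}\ge r_l^{L_d}$, which is exactly what legitimizes replacing $\sum_g q_g$ by Term 1, and which also justifies multiplying the nonnegative difference of reciprocals by the upper bound on $r_l^{U_d}r_l^{L_d}$), the invariance of $w_i$ across the three networks for $d\ge1$, and almost-sure absorption in $\{i,j\}$ in the finite connected cut and shorted networks, which is needed for $p_g(T_j<T_i)=1-p_g(T_i<T_j)$.
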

\begin{proof}
See Appendix~\ref{app:proofs}.
\end{proof}

In the next sections we shall use this result to analyze the asymptotic behaviour of the bound gap for an arbitrary link $l$ in $\mc{L}$ as $d \to +\infty$, for networks whose node set is infinite and countable. In particular, we show in Section~\ref{rec_sec} that this error vanishes asymptotically for the class of recurrent networks. The core idea to prove this result is to show that Term 1 vanishes. To generalize our analysis beyond recurrent networks, in Section~\ref{beyond} we study both Term 1 and 2 and provide examples showing that all combinations in Table~\ref{table_bounds} are possible. In particular, it is possible that the bound gap vanishes asymptotically for non-recurrent networks (for which Term 1 $\nrightarrow 0$, see \cite[Section 21.2]{levin2017markov}) if Term 2 $\rightarrow 0$.
\begin{table}
	\caption{All the four cases are possible, as shown in Section~\ref{beyond}. Term 1 $\rightarrow 0$ under the assumption that the network is recurrent, as proved in  Section~\ref{rec_sec}. \label{table_bounds}}
	\centering
		\begin{tabular}{lll}
			& Term 2 $\rightarrow 0$ & Term 2 $\nrightarrow 0$ \\
			\hline
			Term 1 $\rightarrow 0$
			& 2d grid
			& Ring \\
			Term 1 $\nrightarrow 0$
			& 3d grid
			& Double tree\\
			\hline
	\end{tabular}
\end{table}
\subsection{Recurrent networks} 
\label{rec_sec}
We start by introducing the class of recurrent networks.
\begin{definition}[Recurrent random walk]
	A random walk is recurrent if, for every starting point, it visits its starting node infinitely often with probability one \cite[Section 21.1]{levin2017markov}.
\end{definition}

\begin{definition}[Recurrent network]
	An infinite resistor network $\mathcal{G}_R=(\mathcal{N},\mathcal{L}, W)$ is recurrent if the random walk on the network is recurrent.
\end{definition}

The next theorem states that the bound gap vanishes asymptotically on recurrent networks if the degree of every node is finite. Note that the boundedness of the degree of all the nodes is guaranteed under Assumption~\ref{sparse}.
\begin{theorem}
	Let $\mathcal{G}_R=(\mc{N},\mc{L},W)$ be an infinite recurrent resistor network, and let $w^*<+\infty$. Then, for every $l$ in $\mc{L}$,
	\begin{equation*}
		\lim_{d \rightarrow +\infty} (r^{U_d}_{l}-r^{L_d}_{l})=0,
	\end{equation*}
	\label{recurrent}
\end{theorem}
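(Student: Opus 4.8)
The plan is to start from the bound supplied by Proposition~\ref{gap} and argue that each of its three factors is under control. The prefactor $w_i/(W_{ij})^2$ is a finite constant once $w^*<+\infty$, since $w_i\le w^*$ and $W_{ij}>0$ (because $\{i,j\}$ is a link of $\mc{G}_R$). Term~2 is a difference of two hitting probabilities, each lying in $[0,1]$, hence is bounded above by $1$ uniformly in $d$. Consequently it suffices to prove that Term~1, namely $p_i(T_{\mc{N}_d}<T_j)$, tends to $0$ as $d\to+\infty$; this is precisely the step the introduction flags as the heart of the argument.

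First I would establish that the events $\{T_{\mc{N}_d}<T_j\}$ are nested and decreasing in $d$. Since $\mc{G}_R$ is connected and each jump of the walk changes the distance to $l$ by at most one, any trajectory that reaches a node at distance $d+1$ from $l$ must previously have visited $\mc{N}_d$; thus $T_{\mc{N}_d}\le T_{\mc{N}_{d+1}}$ and therefore $\{T_{\mc{N}_{d+1}}<T_j\}\subseteq\{T_{\mc{N}_d}<T_j\}$. By continuity of probability from above, $\lim_{d\to+\infty}p_i(T_{\mc{N}_d}<T_j)=p_i\big(\bigcap_{d\ge1}\{T_{\mc{N}_d}<T_j\}\big)$, so the task reduces to showing this limiting event is null.

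The main step is to identify $\bigcap_{d}\{T_{\mc{N}_d}<T_j\}$ with the event that the walk never reaches $j$. I would argue as follows: the finiteness of all transition rates ($w^*<+\infty$) makes the continuous-time chain non-explosive, so in any finite time only finitely many jumps occur and the walk occupies only finitely many nodes, hence stays within some finite distance of $l$. Thus on the event $\{T_j<+\infty\}$ there is a random $D$ with $T_{\mc{N}_d}>T_j$ for every $d>D$, which is incompatible with membership in every $\{T_{\mc{N}_d}<T_j\}$; hence $\bigcap_{d}\{T_{\mc{N}_d}<T_j\}\subseteq\{T_j=+\infty\}$. It then remains to invoke recurrence: for a connected (irreducible) recurrent network the walk started at $i$ visits every node---in particular the neighbor $j$---infinitely often almost surely, so $T_j<+\infty$ with probability one and $p_i(T_j=+\infty)=0$. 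This forces Term~1 $\to0$.

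Collecting the three estimates and using $r^{U_d}_{l}\ge r^{L_d}_{l}$ from Proposition~\ref{prp_ray}, we obtain
\[
0\le r^{U_d}_{l}-r^{L_d}_{l}\le \frac{w_i}{(W_{ij})^2}\,p_i(T_{\mc{N}_d}<T_j),
\]
whose right-hand side tends to $0$, which proves the claim. The delicate point I expect is the passage from the recurrence hypothesis to $T_j<+\infty$ almost surely: one must be careful that recurrence, phrased as visiting the starting node infinitely often, upgrades via irreducibility/connectivity to the walk hitting the \emph{specific} neighbor $j$ with probability one, and that the continuous-time formulation introduces no explosion---both being secured by the standing bounded-degree hypothesis $w^*<+\infty$.
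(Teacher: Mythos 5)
Your proposal is correct and takes essentially the same route as the paper: both rest on the decomposition of Proposition~\ref{gap}, bound Term~2 by $1$ and the prefactor by $w^*/(W_{ij})^2$ (using $w_i\le w^*<+\infty$ and $W_{ij}>0$), and reduce the claim to showing that Term~1, $p_i(T_{\mc{N}_d}<T_j)$, vanishes as $d\to+\infty$. The only difference is that where the paper cites \cite[Proposition 21.3]{levin2017markov} for the equivalence between recurrence and this limit (noting only the monotonicity of the sequence in $d$), you prove the needed direction from first principles---nested events plus continuity from above, non-explosiveness under $w^*<+\infty$ to identify the limiting event with $\{T_j=+\infty\}$, and irreducible recurrence to conclude $p_i(T_j=+\infty)=0$---which is a valid, self-contained substitute for the citation.
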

\begin{proof}
It is proved in \cite[Proposition 21.3]{levin2017markov} that a network is recurrent if and only if 
\begin{equation}
	\lim_{d \rightarrow +\infty}p_i(T_{\mc{N}_d}<T_j)=0 \quad \forall l=\{i,j\} \in \mathcal{L}.
	\label{lim_rec2}
\end{equation}
Observe that, to hit any node in $\mc{N}_{d+1}$, the random walk starting from $i$ has to hit at least one node in $\mc{N}_d$. Hence, the sequence $\big\{p_i(T_{\mc{N}_d}<T_j)\big\}_{d=1}^\infty$ is non-increasing in $d$ and the limit in \eqref{lim_rec2} is well defined.
Then, from \eqref{equat_gap}, \eqref{lim_rec2}, from the fact that $0 \le p^{U_d}_g(T_i < T_j)-p^{L_d}_g(T_i < T_j)\le1$ for every node $g$, and from the assumptions $w^* < +\infty$ and $W_{ij}>0$ (recall that $i$ and $j$ are adjacent nodes), it follows
\begin{equation*}
	\lim_{d \rightarrow +\infty} r^{U_d}_{l}-r^{L_d}_{l}\le \frac{w^*}{(W_{ij})^2} \lim_{d \rightarrow +\infty} p_i(T_{\mc{N}_d}<T_j) = 0,
\end{equation*} 
which completes the proof.
\end{proof}
\begin{corollary}
Let $\mc{G}$ be a transportation network with recurrent associated resistor network $\mc{G}_R$. Then, for every $\control$ in $\Control$, 
$$
\lim_{d \to + \infty}\bigg|\frac{\Delta C^{(\control)}-\Delta C_d^{(\control)}}{\Delta C^{(\control)}}\bigg|=0.
$$
\end{corollary}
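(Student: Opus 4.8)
The plan is to derive this corollary directly from the relative-error bound of Theorem~\ref{thm2} together with the asymptotic vanishing of the bound gap established in Theorem~\ref{recurrent}. First I would fix a single-link intervention $\control = \control_e\delta^{(e)}$ in $\Control$ and reduce to the meaningful case $\Delta C^{(\control)}>0$, so that $\control_e>0$ and $f_e^*>0$; when $\Delta C^{(\control)}=0$ both terms in the numerator vanish and the ratio is either trivially zero or undefined, so there is nothing to prove. For such a fixed $\control$ the link $e$, the coefficient $a_e>0$, the strength $\control_e>0$, and the maximal degree $w^*$ are all constants that do not vary with $d$.

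Next I would invoke the first inequality in Theorem~\ref{thm2},
\begin{equation*}
	\bigg|\frac{\Delta C^{(\control)}-\Delta C_d^{(\control)}}{\Delta C^{(\control)}}\bigg|
	\le \frac{\epsilon_{ed}}{2\Big(\frac{1}{\control_e}+\frac{r_e^{U_d}+r_e^{L_d}}{2a_e}\Big)},
	\qquad \epsilon_{ed}=\frac{r_e^{U_d}-r_e^{L_d}}{a_e},
\end{equation*}
and observe that, since the effective resistances are nonnegative, the denominator satisfies $\frac{1}{\control_e}+\frac{r_e^{U_d}+r_e^{L_d}}{2a_e}\ge \frac{1}{\control_e}>0$ for every $d$. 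Thus the denominator is bounded away from zero uniformly in $d$, and the whole task reduces to showing that the numerator $\epsilon_{ed}$ tends to zero as $d\to+\infty$.

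To establish $\epsilon_{ed}\to 0$ I would apply Theorem~\ref{recurrent} to the link $l=M(e)$ of the recurrent resistor network $\mc{G}_R$. Since $\mc{G}_R$ has finite node degrees (so that $w^*<+\infty$, as holds for the resistor network associated with a transportation network under Assumption~\ref{sparse}), Theorem~\ref{recurrent} yields $\lim_{d\to+\infty}(r_e^{U_d}-r_e^{L_d})=0$, hence $\epsilon_{ed}\to 0$ because $a_e>0$ is fixed. Combining this with the uniform lower bound on the denominator via a squeeze argument gives
\begin{equation*}
	\lim_{d\to+\infty}\bigg|\frac{\Delta C^{(\control)}-\Delta C_d^{(\control)}}{\Delta C^{(\control)}}\bigg|
	\le \lim_{d\to+\infty}\frac{\control_e}{2}\,\epsilon_{ed}=0.
\end{equation*}

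I do not expect any substantive obstacle: the corollary is essentially a concatenation of two already-proved estimates. The only points requiring care are verifying that the denominator cannot collapse as $d\to+\infty$ (ensured by $\control_e>0$ and the positivity of $r_e^{U_d}$ and $r_e^{L_d}$, which lets me avoid relying on the $w^*$-dependent second bound of Theorem~\ref{thm2}) and confirming that the finite-degree hypothesis $w^*<+\infty$ of Theorem~\ref{recurrent} is inherited from the bounded node degrees of transportation networks.
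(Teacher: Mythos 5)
Your proof is correct and follows essentially the same route as the paper, which likewise derives the corollary by combining the relative-error bound of Theorem~\ref{thm2} with Theorem~\ref{recurrent} to conclude $\lim_{d\to+\infty}\epsilon_{ed}=0$. Your additional care (handling the trivial case $\Delta C^{(\control)}=0$, bounding the denominator below by $1/\control_e$, and checking $w^*<+\infty$) merely makes explicit details the paper's one-line proof leaves implicit.
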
\medskip
\begin{proof}
	The proof follows from Theorem~\ref{thm2} and Theorem~\ref{recurrent}, which imply $\lim_{d \rightarrow +\infty} \epsilon_{ed}=0$ for every $e$ in $\mc{E}$.
\end{proof}

Recurrence is a sufficient condition for the approximation error of a link effective resistance to vanish asymptotically, but is not necessary, as discussed in the next section.

\subsection{Beyond recurrence}
\label{beyond}
We here provide examples of infinite resistor networks for all of the cases in Table~\ref{table_bounds}.
Observe that, for every link $l = \{i,j\}$ in $\mathcal{L}$, the network cut at distance $d$ from $l$ and the network shorted at distance $d$ from $l$ differ for one node only (denoted by $s$), which is the result of shorting in a unique node all the nodes at distance greater than $d$ from $l$.
Intuitively speaking, our conjecture is that Term 2 in \eqref{equat_gap} is small when the network has many short paths. In this case, adding the node $s$ leads to a small variation of the probability, starting from any node in $\mc{N}_d$, of hitting $i$ before $j$, thus making Term 2 small. This intuition can be clarified with the next examples.
\subsubsection{2d grid}
Consider an infinite unweighted bidimensional grid as in Figure~\ref{quad3}. This network is very relevant for NDPs, since many transportation networks have similar topologies. The network is known to be recurrent \cite[Example 21.8]{levin2017markov}, hence Theorem~\ref{recurrent} guarantees that Term 1 vanishes asymptotically for every link $l = \{i,j\}$. Our conjecture, confirmed by numerical simulations, is that for every node $g$ in $\mc{N}_d$,
\begin{equation*}
	\begin{gathered}
		\lim_{d \rightarrow +\infty} p^{U_d}_g(T_i < T_j) =  1/2, \quad
		\lim_{d \rightarrow +\infty} p^{L_d}_g(T_i < T_j) =  1/2.
	\end{gathered}
\end{equation*} 
Hence, this is recurrent network for which also Term 2 vanishes asymptotically.
\subsubsection{3d grid}
Consider an infinite unweighted tridimensional grid. This network is not recurrent \cite[Example 21.9]{levin2017markov}, therefore Term 1 does not vanish asymptotically, and we cannot conclude from Theorem~\ref{recurrent} that for every $l = \{i,j\}$ the bound gap vanishes asymptotically. Nonetheless, numerical simulations show that, similarly to the bidimensional grid, for every node $g$ in $\mc{N}_d$,
\begin{equation*}
	\begin{gathered}
		\lim_{d \rightarrow +\infty} p^{U_d}_g(T_i < T_j) =  1/2, \quad
		\lim_{d \rightarrow +\infty} p^{L_d}_g(T_i < T_j) =  1/2.
	\end{gathered}
\end{equation*} 
Hence, this is a non-recurrent network for which Term 2 (and therefore the bound gap $r_l^{U_d}-r_l^{L_d}$) vanishes asymptotically in the limit of infinite distance $d$.
\subsubsection{Ring}
Consider an infinite unweighted ring network, and let us focus on nodes $5$ and $6$ in Figure~\ref{upper_lower_ring}. Then, 
\begin{equation*}
	\begin{gathered}
		p^{U_d}_{5}(T_1 < T_2)=1, \quad
		p^{U_d}_6(T_1 < T_2)=0.
	\end{gathered}
\end{equation*}
for each $d$ (even $d \rightarrow + \infty$), whereas, 
\begin{equation*}
	\begin{gathered}
		p^{L_d}_5(T_1 < T_2)=\frac{d}{2d+1} \xrightarrow[d \rightarrow + \infty]{} \frac{1}{2}, \\
		p^{L_d}_5(T_1 < T_2)=\frac{d+1}{2d+1} \xrightarrow[d \rightarrow + \infty]{} \frac{1}{2},
	\end{gathered}
\end{equation*}
since this case is equivalent to the gambler's ruin problem (see \cite[Proposition 2.1]{levin2017markov}).
Hence, Term 2 does not vanish for the ring. This is due to the fact that all the paths from $5$ to $2$ in $\mc G_l^{L_2}$ not including node $1$ include the node $s$. Still, Term 1 (and thus the bound gap $r_l^{U_d}-r_l^{L_d}$) vanishes asymptotically by Theorem~\ref{recurrent}, because the ring network is recurrent.

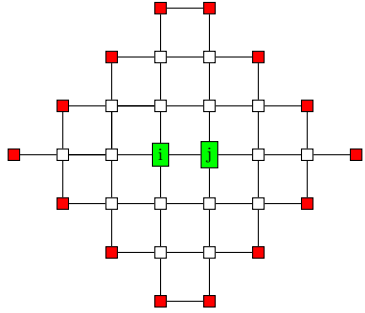
\begin{figure}
	\centering
		\begin{tikzpicture}[scale=0.65, transform shape]
			\node[draw,fill=green] (1) at (0,0)  {i};
			\node[draw,fill=green] (2) at (1,0)  {j};
			\node[draw,fill=none] (3) at (-1,0) {};
			\node[draw,fill=none] (4) at (0,1) {};
			\node[draw,fill=none] (5) at (1,1)  {};
			\node[draw,fill=none] (6) at (2,0)  {};
			\node[draw,fill=none] (7) at (1,-1)  {};
			\node[draw,fill=none] (8) at (0,-1)  {};
			\node[draw,fill=none] (9) at (-2,0) {};
			\node[draw,fill=none] (10) at (-1,1) {};
			\node[draw,fill=none] (11) at (0,2)  {};
			\node[draw,fill=none] (12) at (1,2)  {};
			\node[draw,fill=none] (13) at (2,1)  {};
			\node[draw,fill=none] (14) at (3,0)  {};
			\node[draw,fill=none] (15) at (2,-1)  {};
			\node[draw,fill=none] (16) at (1,-2)  {};
			\node[draw,fill=none] (17) at (0,-2)  {};
			\node[draw,fill=none] (18) at (-1,-1)  {};
			\node[draw,fill=red] (19) at (-3,0) {};
			\node[draw,fill=red] (20) at (-2,1) {};
			\node[draw,fill=red] (21) at (-1,2)  {};
			\node[draw,fill=red] (22) at (0,3)  {};
			\node[draw,fill=red] (23) at (1,3)  {};
			\node[draw,fill=red] (24) at (2,2)  {};
			\node[draw,fill=red] (25) at (3,1)  {};
			\node[draw,fill=red] (26) at (4,0)  {};
			\node[draw,fill=red] (27) at (3,-1)  {};
			\node[draw,fill=red] (28) at (2,-2)  {};
			\node[draw,fill=red] (29) at (1,-3) {};
			\node[draw,fill=red] (30) at (0,-3) {};
			\node[draw,fill=red] (31) at (-1,-2)  {};
			\node[draw,fill=red] (32) at (-2,-1)  {};
			\path (1) edge (2);
			\path (1) edge (3);
			\path (1) edge (4);
			\path (1) edge (8);
			\path (2) edge (5);
			\path (2) edge (6);
			\path (2) edge (7);
			\path (3) edge (9);
			\path (3) edge (10);
			\path (10) edge (4);
			\path (4) edge (11);
			\path (4) edge (5);
			\path (5) edge (12);
			\path (11) edge (12);
			\path (3) edge (9);
			\path (3) edge (10);
			\path (10) edge (4);
			\path (5) edge (13);
			\path (6) edge (13);
			\path (6) edge (14);
			\path (6) edge (15);
			\path (15) edge (7);
			\path (16) edge (7);
			\path (16) edge (17);
			\path (8) edge (17);
			\path (8) edge (18);
			\path (7) edge (8);
			\path (3) edge (18);
			\path (19) edge (9);
			\path (9) edge (20);
			\path (10) edge (20);
			\path (10) edge (21);
			\path (11) edge (21);
			\path (11) edge (22);
			\path (22) edge (23);
			\path (12) edge (23);
			\path (12) edge (24);
			\path (24) edge (13);
			\path (25) edge (13);
			\path (25) edge (14);
			\path (26) edge (14);
			\path (27) edge (14);
			\path (15) edge (28);
			\path (15) edge (27);
			\path (16) edge (28);
			\path (16) edge (29);
			\path (29) edge (30);
			\path (17) edge (30);
			\path (17) edge (31);
			\path (31) edge (18);
			\path (32) edge (9);
			\path (32) edge (18);
	\end{tikzpicture}
	\caption{Bidimensional square grid, cut at distance $d=3$ from $l=\{i,j\}$. The red nodes belong to $\mc{N}_d$. As $d$ increases, $p_g(T_1 < T_2)$ approaches $1/2$ for every node $g$ in $\mc{N}_d$. \label{quad3}}
\end{figure}
\begin{figure}
			\centering{\hspace{0.6cm} $\mathcal{G}_{l}^{U_2}$ \hspace{1.8cm} $\mathcal{G}_{l}^{L_2}$}\\
			\vspace{0.2cm}
			\begin{tikzpicture}[scale=0.75, transform shape]
				\node[draw, circle, fill=green] (1) at (-1,0)  {$1$};
				\node[draw, circle, fill=green] (2) at (1,0) {$2$};
				\node[draw, circle] (3) at (-1,1)  {$3$};
				\node[draw, circle] (4) at (1,1) {$4$};
				\node[draw, circle, fill=red] (5) at (-1,2) {$5$};
				\node[draw, circle, fill=red] (6) at (1,2) {$6$};;
				\path   (1) edge [bend right=0]
				node [above] {$l$} (2);
				\path   (1) edge [bend right=0]
				node [above] {} (3);
				\path   (2) edge [bend left=0]
				node [above] {} (4);
				\path (3) edge [bend right=0]
				node [above] {} (5);
				\path   (4) edge [bend right=0]
				node [below] {} (6);
			\end{tikzpicture}
			\qquad
			\begin{tikzpicture}[scale=0.56, transform shape]
				\node[draw, circle, fill=green] (1) at (-1,0)  {$1$};
				\node[draw, circle, fill=green] (2) at (1,0) {$2$};
				\node[draw, circle] (3) at (-1,1)  {$3$};
				\node[draw, circle] (4) at (1,1) {$4$};
				\node[draw, circle, fill=red] (5) at (-1,2) {$5$};
				\node[draw, circle, fill=red] (6) at (1,2) {$6$};
				\node[draw, circle] (7) at (0,3) {$s$};
				\path   (1) edge [bend right=0]
				node [above] {$l$} (2);
				\path   (1) edge [bend right=0]
				node [above] {} (3);
				\path   (2) edge [bend left=0]
				node [above] {} (4);
				\path (3) edge [bend right=0]
				node [above] {} (5);
				\path   (4) edge [bend right=0]
				node [below] {} (6);
				\path   (5) edge [bend left=0]
				node [below] {} (7);
				\path  (6) edge [bend left=0]
				node [below] {} (7);
			\end{tikzpicture}
	\caption{\emph{Left}: ring cut at distance $d=2$ from $l$. \emph{Right}: ring shorted at distance $d=2$ from link $l=\{1,2\}$.} \label{upper_lower_ring}
\end{figure}
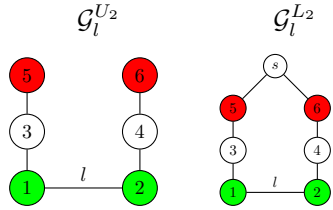
\subsubsection{Double tree network}
\label{section_tree}
The last examples illustrates an infinitely large network in which the bound gap does not vanish asymptotically. This network is not relevant for traffic applications, since it admits one path only between every pair of nodes, but provides an interesting counterexample where the bound gap does not converge asymptotically.
The network is composed of two infinite trees starting from node $i$ and $j$, connected by a link $l = \{i,j\}$ (see Figure~\ref{counterex}), and is unweighted. It can be shown that on this network the probability that a random walk, starting from $i$, returns on $i$ is equal to the same quantity computed on a biased random walk over an infinite line (for more details see Appendix~\ref{app:tree}). Since the biased random walk on a line is not recurrent \cite[Example 21.2]{levin2017markov}, then the double tree network is non-recurrent, and Term 1 $\nrightarrow 0$. Moreover, we show in Appendix~\ref{app:tree} that
\begin{equation*}
	\lim_{d \rightarrow +\infty} r^{U_d}_{l}-r^{L_d}_{l}=\frac{1}{3},
\end{equation*}
thus implying that Term 2 $\nrightarrow 0$.
\begin{figure}
	\centering
	\begin{tikzpicture}[scale=0.65, transform shape]
		\node[draw, circle, fill=green] (1) at (-2,0)  {$i$};
		\node[draw, circle, fill=green] (2) at (2,0) {$j$};
		\node[draw, circle] (3) at (-2.8,1)  {};
		\node[draw, circle] (4) at (-1.2,1) {};
		\node[draw, circle] (5) at (1.2,1) {};
		\node[draw, circle] (6) at (2.8,1) {};
		\node[draw, circle] (7) at (-3.2,2)  {};
		\node[draw, circle] (8) at (-2.4,2) {};
		\node[draw, circle] (9) at (-1.6,2) {};
		\node[draw, circle] (10) at (-0.8,2) {};
		\node[draw, circle] (11) at (0.8,2)  {};
		\node[draw, circle] (12) at (1.6,2) {};
		\node[draw, circle] (13) at (2.4,2) {};
		\node[draw, circle] (14) at (3.2,2) {};
		\node[draw=none] (15) at (-3.5,3)  {};
		\node[draw=none] (16) at (-2.9,3) {};
		\node[draw=none] (17) at (-2.7,3) {};
		\node[draw=none] (18) at (-2.1,3) {};
		\node[draw=none] (19) at (-1.9,3)  {};
		\node[draw=none] (20) at (-1.3,3) {};
		\node[draw=none] (21) at (-1.1,3) {};
		\node[draw=none] (22) at (-0.5,3) {};
		\node[draw=none] (23) at (0.5,3)  {};
		\node[draw=none] (24) at (1.1,3) {};
		\node[draw=none] (25) at (1.3,3) {};
		\node[draw=none] (26) at (1.9,3) {};
		\node[draw=none] (27) at (2.1,3)  {};
		\node[draw=none] (28) at (2.7,3) {};
		\node[draw=none] (29) at (2.9,3) {};
		\node[draw=none] (30) at (3.5,3) {};
		\path   (1) edge [bend right=0]
		node [above] {$l$} (2);
		\path   (1) edge [bend right=0]
		node [above] {} (3);
		\path   (1) edge [bend left=0]
		node [above] {} (4);
		\path (2) edge [bend right=0]
		node [above] {} (5);
		\path   (2) edge [bend right=0]
		node [below] {} (6);
		\path   (3) edge [bend left=0]
		node [below] {} (7);
		\path  (3) edge [bend left=0]
		node [below] {} (8);
		\path   (4) edge [bend right=0]
		node [above] {} (9);
		\path   (4) edge [bend right=0]
		node [above] {} (10);
		\path   (5) edge [bend left=0]
		node [above] {} (11);
		\path (5) edge [bend right=0]
		node [above] {} (12);
		\path   (6) edge [bend right=0]
		node [below] {} (13);
		\path   (6) edge [bend left=0]
		node [below] {} (14);
		\path   (7) edge [dashed]
		node [above] {} (15);
		\path   (7) edge [dashed]
		node [above] {} (16);
		\path   (8) edge [dashed]
		node [above] {} (17);
		\path (8) edge [dashed]
		node [above] {} (18);
		\path   (9) edge [dashed]
		node [below] {} (19);
		\path   (9) edge [dashed]
		node [below] {} (20);
		\path  (10) edge [dashed]
		node [below] {} (21);
		\path   (10) edge [dashed]
		node [above] {} (22);
		\path   (11) edge [dashed]
		node [above] {} (23);
		\path   (11) edge [dashed]
		node [above] {} (24);
		\path (12) edge [dashed]
		node [above] {} (25);
		\path   (12) edge [dashed]
		node [below] {} (26);
		\path   (13) edge [dashed]
		node [below] {} (27);
		\path (13) edge [dashed]
		node [above] {} (28);
		\path   (14) edge [dashed]
		node [below] {} (29);
		\path   (14) edge [dashed]
		node [below] {} (30);
	\end{tikzpicture}
	\caption{The double tree is an infinite non-recurrent network. 
		\label{counterex}}
\end{figure}
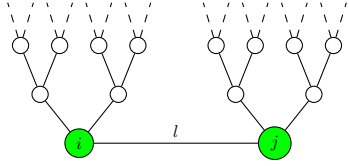

\section{Numerical simulations}
\label{sec:simulations}
This section is devoted to numerical simulations.
In Section~\ref{sec:res_approx} we analyze the bound gap for finite distance $d$, both on real and synthetic transportation networks.
Then, we discuss in Section~\ref{sec:relax_ass} how to adapt our method to more general
NDPs with non-linear delay functions, and provide numerical simulations showing that our algorithm may be applied in real scenarios even if the regularity assumption on the Wardrop equilibrium (i.e., Assumption \ref{assumption}) is violated.

\subsection{Effective resistance approximation}
\label{sec:res_approx}
\subsubsection{Infinite grids}
Infinite regular grids are relevant networks to test the performance of the bounds on the effective resistance, since they are good proxy for transportation networks.
In Table~\ref{table} the bound gap in a square grid network with unitary conductances is shown. Similar results are obtained in any regular infinite grid. 
Numerical simulations show that for every link $l$ in $\mc{L}$,
\begin{equation*}
	\frac{r^{U_d}_{l}-r_{l}}{r_{l}}=\frac{r_{l}-r^{L_d}_{l}}{r_{l}}=O(1/d^2).
\end{equation*}
We emphasize that, despite the network being infinitely large, even at $d=5$ the bounds are close to the true value effective resistance, which is 1/2 \cite{bartis1967let}.
\begin{table}[]
	\caption{Table of upper and lower bound in infinite square grid. \label{table}}
	\centering
	\begin{tabular}{llllll}
		& $d=1$ & $d=2$ & $d=3$ & $d=4$ & $d=5$\\
		\hline
		$(r_{l}^{U_d}-r_{l})/r_{l}$ & $1/5$ & $0.0804$ & $0.0426$ & $0.0262$ & $0.0178$\\
		\hline
		$(r_{l}-r_{l}^{L_d})/r_{l}$ & $1/5$ & $0.0804$ & $0.0426$ & $0.0262$ & $0.0178$\\
		\hline
	\end{tabular}
\end{table}

\subsubsection{Oldenburg
	transportation network}
\begin{figure}
	\centering
	\includegraphics[width=5cm]{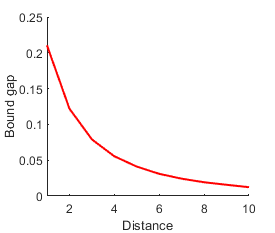}
	\caption{Average relative gap of the bounds on Oldenburg network as a function of distance $d$. \label{old_fig}}
\end{figure}
In this section we illustrate the performance of our bounds on the effective resistance of links of the resistor network associated to the transportation network of Oldenburg \cite{brinkhoff2002framework}. The transportation network is composed of $6105$ nodes and $7035$ links, and the diameter of the associated resistor network (i.e., maximum distance between pair of nodes) is $104$. We assume for simplicity $a_e=1$ for every link $e \in \mc{E}$, but numerical results prove to be robust with respect to some variability in those parameters. 
The average relative bound gap on the associated resistor network, defined as
\begin{equation*}
	\Delta R_d:=\frac{1}{\mathrm{L}}\sum_{l \in \mathcal{L}}\frac{r^{U_d}_{l}-r^{L_d}_{l}}{r_{l}}
\end{equation*} is shown in Table~\ref{oldenburg_lower_bound} and Figure~\ref{old_fig}. 
\begin{table}[]
	\caption{Table of average relative error bound gap at distance $d$ on the Oldenburg network. \label{oldenburg_lower_bound}}
	\centering
		\begin{tabular}{llllllll}
			& d=1 & d=2 & d=3 & d=4 & d=5 & d=6 & d=7\\
			\hline
			$\Delta R_d$
			& 0.21
			& 0.12
			& 0.079
			& 0.056
			& 0.041
			& 0.031
			& 0.024\\
			\hline
	\end{tabular}
\end{table}
We observe that also in this network the bound gap decreases quickly compared to the diameter of the network.

\subsection{Relaxing assumptions}
\label{sec:relax_ass}
The goal of this section is two-fold. We first show how to adapt Theorem \ref{thm} when the delay functions are non-affine, and validate by numerical analysis the proposed method. We then show that violating Assumption \ref{assumption} is not a practical issue in real case scenarios. The numerical example is based on the highway network of Los Angeles (see Figure~\ref{fig:mapLA} \cite{los_angeles}). To handle non-linear delay functions, the main idea is to adapt Theorem~\ref{thm} by constructing a resistor network and then follow same steps as in Algorithm~\ref{alg}. To this end, let us write the KKT conditions of \eqref{convex_prob} as follows:
$$
\left[ 
\begin{array}{ c  c } 
	\text{diag}\Big(\Big\{\frac{\tau_e(f_e^*)}{f_e^*}\Big\}_{e \in \mc E}\Big) & -(B_-)^T \\ 
	-B_- & \mathbb{0} 
\end{array} 
\right]
\left[
\begin{array}{c}
	f^* \\ \gamma^*_-
\end{array}
\right]
=
-\left[
\begin{array}{c}
	\tau_e(0) \\ \nu_-
\end{array}
\right],
$$
where $f^*$ and $\gamma^*$ denote the Wardrop equilibrium and the optimal Lagrangian multipliers before the intervention. The KKT conditions suggest that in non-affine routing games the term $\tau_e(f_e^*)/f_e^*$ plays the role of $a_e$ in affine routing games (see the proof of Theorem~\ref{thm} in Appendix~\ref{app:proofs} for more details). Hence, by following similar steps as in affine routing games, we construct a resistor network with conductance matrix
\begin{equation}
	\label{eq:new_W}
W_{ij}= \begin{cases}
	\sum_{\substack{e \in \mathcal{E}: \\
			\xi(e)=i, \theta(e)=j, \ \text{or} \\
			\xi(e)=j, \theta(e)=i}} \frac{f_e^*}{\tau_e(f_e^*)} & \text{if}\ i \neq j \\
	0 & \text{if}\ i=j. 
\end{cases}
\end{equation}
The social cost variation for single-link interventions is then computed by using Theorem \ref{thm} with respect to the new resistor network with conductance matrix \eqref{eq:new_W}. Observe that, in contrast with the affine case, this method is not exact for non-linear delay functions, since the Wardrop equilibrium (and thus the elements of $W$) are modified by interventions, not allowing to leverage Sherman-Morrison theorem to compute the social cost variation.

To validate our method we assume that delay functions are in the form $\tau_e(f_e) = a_e (f_e)^4+b_e$, and consider interventions in the form $\control = 3 \delta^{(e)}$ for every $e$ in $\mc{E}$. Numerical parameters are not reported in the paper due to limited space, but the obtained results are robust with respect to a change of numerical values. For every intervention, we compare the social cost variation computed by two methods: \emph(i) by solving the convex optimization~\eqref{eq:f(k)} and plugging the new equilibrium $f^*(\control)$ into the social cost function (\emph{exact}); \emph(ii) via the electrical formulation, i.e., by leveraging Theorem~\ref{thm} with conductance matrix \eqref{eq:new_W} and ignoring the fact that Assumption~\ref{assumption} may be violated (\emph{approximated}). Figure~\ref{fig:non_linear} illustrates the social cost variation computed by the two methods corresponding to interventions on the five links of the network that yield the largest cost variation. The numerical simulations show that support of the equilibrium varies with the intervention. Nonetheless, the proposed method approximates quite well the social cost variation and selects the optimal link for the intervention. The implication of combining the results of this section and Section \ref{sec:res_approx} is that Algorithm~\ref{assumption} should manage to select optimal (or weakly suboptimal) interventions in large transportation networks also when the delay functions are non-linear, Assumption~\ref{assumption} is violated, and effective resistances are computed at small distance~$d$.
\begin{figure}[t!]
	\centering
	\includegraphics[scale=0.22]{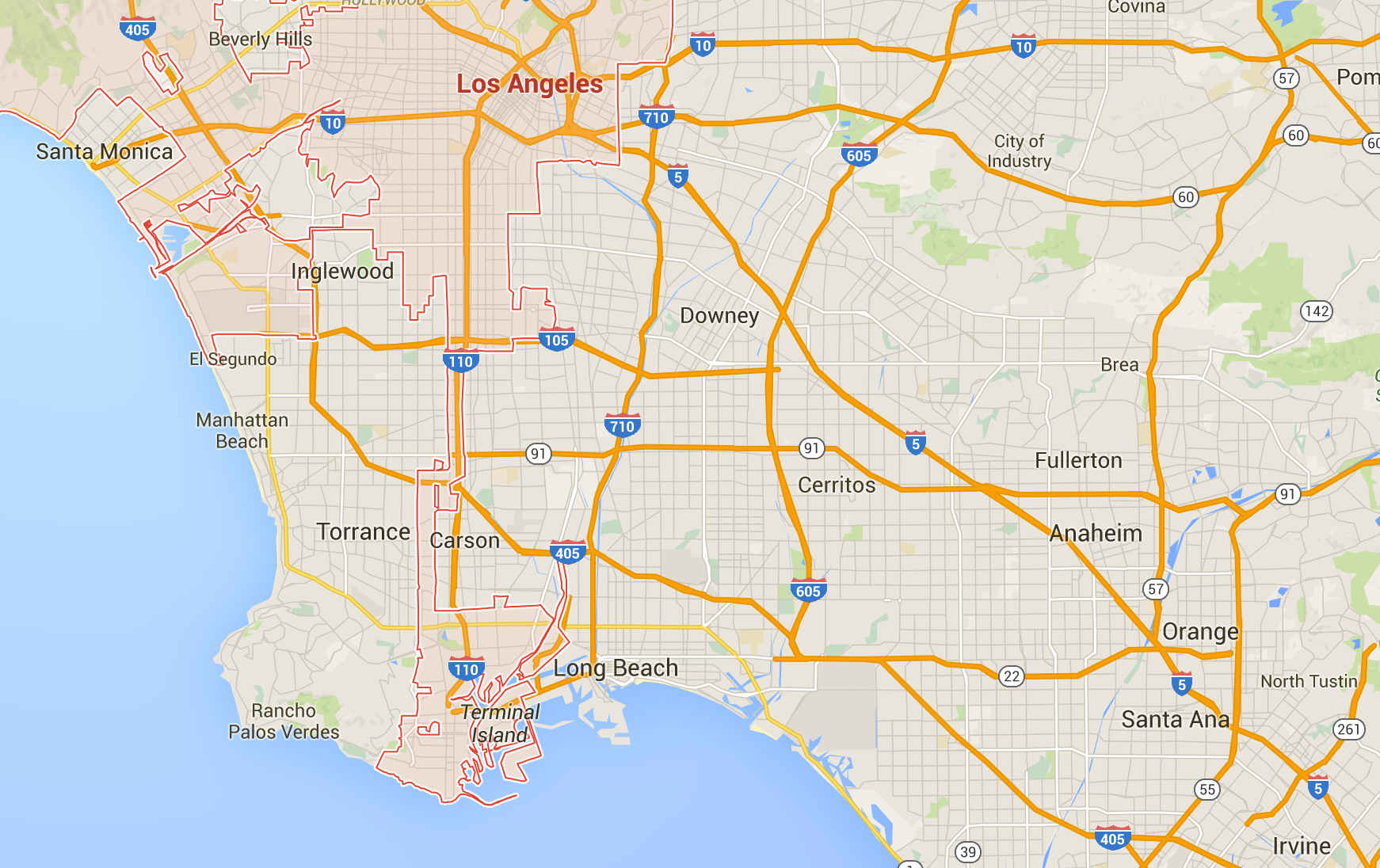}\\[12pt]
	\begin{tikzpicture}[scale=0.55,transform shape]
		\node[draw, circle] (1) at (1,0) {1}; 
		\node[draw, circle] (2) at (4,0) {2}; 
		\node[draw, circle] (3) at (6,0) {3}; 
		\node[draw, circle] (4) at (10,0) {4};
		\node[draw, circle] (5) at (14,-1) {5}; 
		\node[draw, circle] (6) at (2,-4) {6};  
		\node[draw, circle] (7) at (4,-4) {7};
		\node[draw, circle] (8) at (6,-4) {8};
		\node[draw, circle] (9) at (8,-3) {9}; 
		\node[draw, circle] (10) at (4,-8) {10}; 
		\node[draw, circle] (11) at (6,-8) {11}; 
		\node[draw, circle] (12) at (8, -8) {12}; 
		\node[draw, circle] (13) at  (12, -9) {13}; 
		\node[draw, circle] (14) at (14, -9) {14}; 
		\node[draw, circle] (15) at (6, -10) {15}; 
		\node[draw, circle] (16) at (9,-10.5) {16};
		\node[draw, circle] (17) at  (14, -11)  {17}; 
		\path[->,shorten >=1pt,auto, node distance = 0.5cm, semithick]
		(1) edge node {$l_1$} (2) 
		(2) edge node {$l_2$}  (3)
		(3) edge node {$l_3$} (4)
		(4) edge node {$l_4$} (5)
		(1) edge node {$l_5$} (6)
		(6) edge node {$l_6$} (7)
		(7) edge node {$l_7$} (8)
		(8) edge node {$l_8$} (9)
		(9) edge node {$l_{9}$} (13)
		(2) edge node {$l_{10}$} (7)
		(3) edge node {$l_{11}$} (8)
		(3) edge node {$l_{12}$} (9)
		(4) edge node {$l_{13}$} (9)
		(5) edge node {$l_{14}$} (14)
		(6) edge node {$l_{15}$} (10)
		(10) edge node {$l_{16}$} (11)
		(10) edge[left] node {$l_{17}$} (15)
		(7) edge node {$l_{18}$} (10)
		(8) edge node {$l_{19}$} (11)
		(9) edge node {$l_{20}$} (12)
		(11) edge node {$l_{21}$} (12)
		(12) edge node {$l_{22}$} (13)
		(13) edge node {$l_{23}$} (14)
		(11) edge node {$l_{24}$} (15)
		(13) edge[left] node {$l_{25}$} (17)
		(14) edge node {$l_{26}$} (17)
		(15) edge[below] node {$l_{27}$} (16)
		(16) edge[below] node {$l_{28}$} (17);
	\end{tikzpicture}
	\caption{\emph{Top}: the highway network in Los Angeles. \emph{Bottom}: a graph representation of the network, where node $1$ (Santa Monica) and $17$ (Santa Ana) are respectively the origin and the destination. \label{fig:mapLA}}
\end{figure} 
\begin{figure}
	\centering
	\includegraphics[width=7cm]{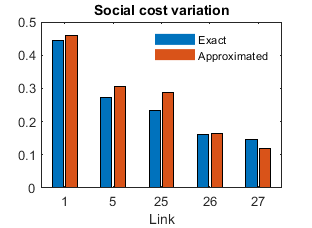}
	\caption{\emph{Top}: Social cost variation for interventions in the form $\control = 3 \delta^{(e)}$ for a routing game on the graph of Figure~\ref{fig:mapLA} with delay functions in the form $\tau_e(f_e) = a_e (f_e)^4 + b_e$. The cost variation is computed by solving convex optimization (\emph{exact}) and by adapting Theorem~\ref{thm} to the case of non-linear delay functions (\emph{approximated}), as explained in Section~\ref{sec:relax_ass}. The plot illustrates the social cost variation for the five links that maximize the cost variation.\label{fig:non_linear}}
\end{figure}

\section{Conclusion}
In this work we study a network design problem where a single link can be improved. Under the assumption that the support of the Wardrop equilibrium is not modified with an intervention, we reformulate the problem in terms of electrical quantities computed on a related resistor network, in particular in terms of the effective resistance of a link. We then provide a method to approximate such an effective resistance by performing only local computation, which may be of separate interest. 
Based on the electrical formulation and our approximation method for the effective resistance we propose an efficient algorithm to solve efficiently the network design problem. We then show by numerical examples that our method can be adapted to routing games with non-linear delay functions, and achieves good performance even if the support of the equilibrium is modified by the intervention.

An interesting direction for the future is a deeper analysis on tightness of the bounds on effective resistance for finite distance $d$. Future research lines also include extending the analysis to the case of multiple interventions. Indeed, the general problem is not submodular, thus guarantees on the performance of greedy algorithm are not given. A possible direction is to exploit the closed formula for the social cost derivative to implement gradient descents algorithms. Other directions include extending the theoretical framework to the case of multiple origin-destination pairs and heterogeneous preferences \cite{Cianfanelli.ea:19,Cianfanelli.ea:22}.

\bibliographystyle{plain}
\bibliography{bibliography.bib}

\begin{thebibliography}{10}

\bibitem{los_angeles}
\url{http://pems.dot.ca.gov/}.
\newblock [Online; accessed 10-Oct-2022].

\bibitem{bartis1967let}
Francis~J Bartis.
\newblock Let's analyze the resistance lattice.
\newblock {\em American Journal of Physics}, 35(4):354--355, 1967.

\bibitem{beckmann1956studies}
Martin Beckmann, Charles~B McGuire, and Christopher~B Winsten.
\newblock Studies in the economics of transportation.
\newblock Technical report, 1956.

\bibitem{brinkhoff2002framework}
Thomas Brinkhoff.
\newblock A framework for generating network-based moving objects.
\newblock {\em GeoInformatica}, 6(2):153--180, 2002.

\bibitem{brown2017studies}
Philip~N Brown and Jason~R Marden.
\newblock Studies on robust social influence mechanisms: Incentives for
  efficient network routing in uncertain settings.
\newblock {\em IEEE Control Systems Magazine}, 37(1):98--115, 2017.

\bibitem{chiou2005bilevel}
Suh-Wen Chiou.
\newblock Bilevel programming for the continuous transport network design
  problem.
\newblock {\em Transportation Research Part B: Methodological}, 39(4):361--383,
  2005.

\bibitem{Cianfanelli.ea:19}
Leonardo Cianfanelli and Giacomo Como.
\newblock On stability of users equilibria in heterogeneous routing games.
\newblock In {\em 2019 IEEE 58th Conference on Decision and Control (CDC)},
  pages 355--360. IEEE, 2019.

\bibitem{Cianfanelli.ea:22}
Leonardo Cianfanelli, Giacomo Como, and Tommaso Toso.
\newblock Stability and bifurcations in transportation networks with
  heterogeneous users.
\newblock In {\em Proceedings of 61st IEEE Conference on Decision and Control,
  CDC'22}, 2022.

\bibitem{cohen2014solving}
Michael~B Cohen, Rasmus Kyng, Gary~L Miller, Jakub~W Pachocki, Richard Peng,
  Anup~B Rao, and Shen~Chen Xu.
\newblock Solving sdd linear systems in nearly m log1/2 n time.
\newblock In {\em Proceedings of the forty-sixth annual ACM symposium on Theory
  of computing}, pages 343--352, 2014.

\bibitem{cole2006much}
Richard Cole, Yevgeniy Dodis, and Tim Roughgarden.
\newblock How much can taxes help selfish routing?
\newblock {\em Journal of Computer and System Sciences}, 72(3):444--467, 2006.

\bibitem{Como.ea:16}
G.~Como, E.~Lovisari, and K.~Savla.
\newblock Convexity and robustness of dynamic network traffic assignment and
  control of freeway networks.
\newblock {\em Transportation Research Part B: Methodological}, 91:446--465,
  2016.

\bibitem{Como.Maggistro:22}
Giacomo Como and Rosario Maggistro.
\newblock Distributed dynamic pricing of multiscale transportation networks.
\newblock {\em IEEE Transactions on Automatic Control}, 67(4):1625--1638, 2022.

\bibitem{dafermos1984some}
Stella Dafermos and Anna Nagurney.
\newblock On some traffic equilibrium theory paradoxes.
\newblock {\em Transportation Research Part B: Methodological}, 18(2):101--110,
  1984.

\bibitem{das2017reducing}
Sanmay Das, Emir Kamenica, and Renee Mirka.
\newblock Reducing congestion through information design.
\newblock In {\em 2017 55th Annual Allerton Conference on Communication,
  Control, and Computing (Allerton)}, pages 1279--1284. IEEE, 2017.

\bibitem{doyle1984random}
Peter~G Doyle and J~Laurie Snell.
\newblock {\em Random walks and electric networks}, volume~22.
\newblock American Mathematical Soc., 1984.

\bibitem{ellens2011effective}
Wendy Ellens, FM~Spieksma, P~Van~Mieghem, A~Jamakovic, and RE~Kooij.
\newblock Effective graph resistance.
\newblock {\em Linear algebra and its applications}, 435(10):2491--2506, 2011.

\bibitem{congestion}
{European Union}.
\newblock {Urban mobility}.
\newblock
  \url{https://ec.europa.eu/transport/themes/urban/urban$\_$mobility$\_$en}.
\newblock [Online; accessed 31-Aug-2021].

\bibitem{farahani2013review}
Reza~Zanjirani Farahani, Elnaz Miandoabchi, Wai~Yuen Szeto, and Hannaneh
  Rashidi.
\newblock A review of urban transportation network design problems.
\newblock {\em European Journal of Operational Research}, 229(2):281--302,
  2013.

\bibitem{fleischer2004tolls}
Lisa Fleischer, Kamal Jain, and Mohammad Mahdian.
\newblock Tolls for heterogeneous selfish users in multicommodity networks and
  generalized congestion games.
\newblock In {\em 45th Annual IEEE Symposium on Foundations of Computer
  Science}, pages 277--285. IEEE, 2004.

\bibitem{fontaine2017dynamic}
Pirmin Fontaine and Stefan Minner.
\newblock A dynamic discrete network design problem for maintenance planning in
  traffic networks.
\newblock {\em Annals of Operations Research}, 253(2):757--772, 2017.

\bibitem{fotakis2012efficient}
Dimitris Fotakis, Alexis~C Kaporis, and Paul~G Spirakis.
\newblock Efficient methods for selfish network design.
\newblock {\em Theoretical Computer Science}, 448:9--20, 2012.

\bibitem{gao2005solution}
Ziyou Gao, Jianjun Wu, and Huijun Sun.
\newblock Solution algorithm for the bi-level discrete network design problem.
\newblock {\em Transportation Research Part B: Methodological}, 39(6):479--495,
  2005.

\bibitem{hayashi2016efficient}
Takanori Hayashi, Takuya Akiba, and Yuichi Yoshida.
\newblock Efficient algorithms for spanning tree centrality.
\newblock In {\em IJCAI}, volume~16, pages 3733--3739, 2016.

\bibitem{hoefer2008taxing}
Martin Hoefer, Lars Olbrich, and Alexander Skopalik.
\newblock Taxing subnetworks.
\newblock In {\em International Workshop on Internet and Network Economics},
  pages 286--294. Springer, 2008.

\bibitem{horn2012matrix}
Roger~A Horn and Charles~R Johnson.
\newblock {\em Matrix analysis}.
\newblock Cambridge university press, 2012.

\bibitem{jelinek2014computing}
Tomas Jelinek, Marcus Klaas, and Guido Sch{\"a}fer.
\newblock Computing optimal tolls with arc restrictions and heterogeneous
  players.
\newblock In {\em 31st International Symposium on Theoretical Aspects of
  Computer Science (STACS 2014)}. Schloss Dagstuhl-Leibniz-Zentrum fuer
  Informatik, 2014.

\bibitem{klimm2019computing}
Max Klimm and Philipp Warode.
\newblock Computing all wardrop equilibria parametrized by the flow demand.
\newblock In {\em Proceedings of the Thirtieth Annual ACM-SIAM Symposium on
  Discrete Algorithms}, pages 917--934. SIAM, 2019.

\bibitem{leblanc1975algorithm}
Larry~J LeBlanc.
\newblock An algorithm for the discrete network design problem.
\newblock {\em Transportation Science}, 9(3):183--199, 1975.

\bibitem{levin2017markov}
David~A Levin and Yuval Peres.
\newblock {\em Markov chains and mixing times}, volume 107.
\newblock American Mathematical Soc., 2017.

\bibitem{li2012global}
Changmin Li, Hai Yang, Daoli Zhu, and Qiang Meng.
\newblock A global optimization method for continuous network design problems.
\newblock {\em Transportation Research Part B: Methodological},
  46(9):1144--1158, 2012.

\bibitem{meigs2020optimal}
Emily Meigs, Francesca Parise, Asuman Ozdaglar, and Daron Acemoglu.
\newblock Optimal dynamic information provision in traffic routing.
\newblock {\em arXiv preprint arXiv:2001.03232}, 2020.

\bibitem{poorzahedy2007hybrid}
Hossain Poorzahedy and Omid~M Rouhani.
\newblock Hybrid meta-heuristic algorithms for solving network design problem.
\newblock {\em European Journal of Operational Research}, 182(2):578--596,
  2007.

\bibitem{rebeschini2018locality}
Patrick Rebeschini and Sekhar Tatikonda.
\newblock Locality in network optimization.
\newblock {\em IEEE Transactions on Control of Network Systems}, 6(2):487--500,
  2018.

\bibitem{roughgarden2006severity}
Tim Roughgarden.
\newblock On the severity of braess's paradox: designing networks for selfish
  users is hard.
\newblock {\em Journal of Computer and System Sciences}, 72(5):922--953, 2006.

\bibitem{steinberg1983prevalence}
Richard Steinberg and Willard~I Zangwill.
\newblock The prevalence of braess' paradox.
\newblock {\em Transportation Science}, 17(3):301--318, 1983.

\bibitem{wang2014models}
Guangmin Wang, Ziyou Gao, Meng Xu, and Huijun Sun.
\newblock Models and a relaxation algorithm for continuous network design
  problem with a tradable credit scheme and equity constraints.
\newblock {\em Computers \& Operations Research}, 41:252--261, 2014.

\bibitem{wang2013global}
Shuaian Wang, Qiang Meng, and Hai Yang.
\newblock Global optimization methods for the discrete network design problem.
\newblock {\em Transportation Research Part B: Methodological}, 50:42--60,
  2013.

\bibitem{wardrop1952road}
John~Glen Wardrop.
\newblock Road paper. some theoretical aspects of road traffic research.
\newblock {\em Proceedings of the institution of civil engineers},
  1(3):325--362, 1952.

\bibitem{wu2019information}
Manxi Wu and Saurabh Amin.
\newblock Information design for regulating traffic flows under uncertain
  network state.
\newblock In {\em 2019 57th Annual Allerton Conference on Communication,
  Control, and Computing (Allerton)}, pages 671--678. IEEE, 2019.

\bibitem{wu2021value}
Manxi Wu, Saurabh Amin, and Asuman~E Ozdaglar.
\newblock Value of information in bayesian routing games.
\newblock {\em Operations Research}, 69(1):148--163, 2021.

\bibitem{yang1998models}
Hai Yang and Michael~G H.~Bell.
\newblock Models and algorithms for road network design: a review and some new
  developments.
\newblock {\em Transport Reviews}, 18(3):257--278, 1998.

\bibitem{yue2015reducing}
Jia~Shuo Yue, Chinmoy~V Mandayam, Deepak Merugu, Hossein~Karkeh Abadi, and
  Balaji Prabhakar.
\newblock Reducing road congestion through incentives: a case study.
\newblock In {\em Transportation Research Board 94th Annual Meeting,
  Washington, DC}, 2015.

\bibitem{zhao2006line}
Yong Zhao and Kara~Maria Kockelman.
\newblock On-line marginal-cost pricing across networks: Incorporating
  heterogeneous users and stochastic equilibria.
\newblock {\em Transportation Research Part B: Methodological}, 40(5):424--435,
  2006.

\end{thebibliography}

\appendix

\section{Preliminaries on connection between Green's function, random walks and effective resistance}
\label{preliminaries}
Let $\mathcal{G}_R=(\mathcal{N},\mathcal{L},W)$ denote a connected resistor network (this is without loss of generality for resistor network associated to transportation networks), and $P=\mb{I}_w^{-1}W$ the transition probability matrix of the jump chain of the continuous-time Markov chain with rates $W$. We denote by $_kP$ the matrix obtained by deleting from $P$ the row and the column referring to the node $k$. $_kP$ can be thought of as the transition matrix of a killed random walk obtained by creating a cemetery in the node $k$. We then define the Green's function 
as
\begin{equation}
	_kG:=\sum_{t=0}^{\infty} (_kP)^t=(\mathds{I}-{_kP})^{-1}.
	\label{G}
\end{equation}
The last inequality in \eqref{G} follows from the connectedness of $\mathcal{G}_R$, which implies that $_kP$ is substochastic and irreducible. Hence, it has spectral radius and the inversion is well defined \cite{horn2012matrix}.
Since $((_kP)^t)_{ij}$ is the probability that the killed random walk starting from $i$ is in $j$ after $t$ steps, $_kG_{ij}$ indicates the expected number of times that the killed random walk visits $j$ starting from $i$ before being absorbed in $k$ \cite{ellens2011effective}.
It is known that the Green's function of the random walk on a resistor network can be related to electrical quantities \cite{ellens2011effective}. In particular, with the convention that
\begin{equation}
	_kG_{ik}={_kG_{ki}}={_kG_{kk}}=0 \quad \forall i \in \mathcal{N},
	\label{green0}
\end{equation}
it is known that for any node $k$ and link $l = \{i,j\}$ in $\mc L$,
\begin{equation}
	\begin{aligned}
	r_{l}&=
	\frac{_kG_{ii}-{_kG}_{ji}}{w_{i}}+\frac{_kG_{jj}-{_kG}_{ij}}{w_{j}}\\
	&=\frac{1}{w_{i}p_i(T_j<T_i^+)}=\frac{_jG_{ii}}{w_{i}},
	\label{reff}
	\end{aligned}
\end{equation}
where $p_i(T_j<T_i^+)$ is defined in Section~\ref{performance}, and $r_{l}$ is the effective resistance of link $l$ as defined in Definition~\ref{def:effective}.

\section{Proofs}
\label{app:proofs}

\subsection{Proof of Lemma~\ref{cost}}
From \eqref{kkt2}, for all the used links $e$ in $\mc{E}$,
\begin{equation*}
	\gamma_{\xi(e)}^* - \gamma_{\theta(e)}^* = \tau_e(f_e^*).
\end{equation*}
Consider a path $p=(e_1,e_2,...e_s)$, with $\xi(e_1)=\mathrm{o}$, $\theta(e_s)=\mathrm{d}$, and $\theta(e_i)=\xi(e_{i+1})$ for every $1 \le i < s$. Thus, from \eqref{cost_path},
\begin{equation*}
	c_p(f^*)=\sum_{i=1}^s \tau_{e_i}(f_{e_i}^*)
	=\sum_{i=1}^{s}( \gamma^*_{\xi(e_i)}-\gamma^*_{\theta(e_{i})})
	=\gamma_\mathrm{o}^*-\gamma_\mathrm{d}^*.
\end{equation*}
Hence, all the used paths at the equilibrium have the same cost $\gamma_\mathrm{o}^*-\gamma_\mathrm{d}^*$. Then, the social cost is
\begin{equation*}
	\begin{split}
		C^{(0)}&=\sum_{e \in \mathcal{E}} f^*_e \tau_e(f^*_e)=\sum_{e \in \mathcal{E}} \tau_e(f_e^*)\sum_{p \in \mathcal{P}} A_{ep}z^*_p\\
		&=\sum_{p \in \mathcal{P}}z^*_p \sum_{e \in \mathcal{E}}A_{ep}\tau_e(f_e^*)=
		\sum_{p \in \mathcal{P}}z^*_p c_p(f^*)\\
		&=(\gamma_\mathrm{o}^*-\gamma_\mathrm{d}^*)\sum_{p \in \mathcal{P}} z^*_p = m (\gamma_\mathrm{o}^*-\gamma_\mathrm{d}^*),
	\end{split}
\end{equation*} 
where the second equivalence follows from \eqref{incidence_path}, the fourth one from \eqref{cost_path}, 
and the last one from \eqref{constraints}.

\subsection{Proof of Theorem~\ref{thm}}
Consider the KKT conditions \eqref{kkt2}, and let us remove the links in $\mathcal{E}_+$. Thus, the last three conditions of \eqref{kkt2} can be ignored without affecting the solution. With a slight abuse of notation, from now on let $\mathcal{E}$ denote $\mathcal{E} \setminus \mathcal{E}_+$.
Using the fact that the delay functions are affine, the KKT conditions become:
\begin{align*}
	\begin{cases}
		a_e f_e^*+b_e+\gamma_{\theta(e)}^*-\gamma^*_{\xi(e)} =0 & \quad \forall e \in \mathcal{E},\\
		\sum_{e \in \mathcal{E}: \theta(e)=i} f_e^* - \sum_{e \in \mathcal{E}: \xi(e)=i} f_e^* + \nu_i=0 & \quad \forall i \in \mathcal{N},
	\end{cases}
	\label{kkt}
\end{align*}
where the constraint $f^*_{e}\ge 0$ can now be removed since the solution of the new KKT conditions gives $f^*_{e}\ge0$ for every link $e$ not in $\mathcal{E}_+$.
Observe that the optimal flow $f^*_{e}$ depends on $\gamma^*$ only via the difference $\gamma^*_{\xi(e)} - \gamma^*_{\theta(e)}$, so that $\gamma^*$ remains a solution if a constant vector is added to it. This is due to the fact that the matrix $B$ is not full rank. Observe that removing the last row of $B$ is equivalent to imposing $\gamma^*_\mathrm{d}=0$.
We let $\gamma_-$ and $\nu_-$ denote respectively $\gamma$ and $\nu$ where the last element of both vectors is removed, and let $B_- \in \mathds{R}^{(\mathrm{N}-1) \times \mathrm{E}}$ denote the node-link incidence matrix where the last row is removed. Finally, we define $H \in \mathds{R}^{({\mathrm{N}+\mathrm{E}-1})\times({\mathrm{N}+\mathrm{E}-1})}$ as
\begin{equation*}
	H:=
	\left[ 
	\begin{array}{ c  c } 
		\mb{I}_a & -(B_-)^T \\ 
		
		-B_- & \mathbb{0} 
	\end{array} 
	\right].
	\label{H}
\end{equation*}
With this notation in mind, and assuming $\gamma_\mathrm{d}^*=0$, the KKT conditions may be written in compact form as
\begin{equation}
	H\left[
	\begin{array}{c}
		f^* \\ \gamma_-^*
	\end{array}
	\right]=-\left[
	\begin{array}{c}
		b \\ \nu_-
	\end{array}
	\right].    
	\label{compact_kkt2}
\end{equation}
Since we take $\gamma^*_\mathrm{d}=0$, the system has unique solution, i.e.,
\begin{equation}
	\left[
	\begin{array}{c}
		f^* \\ \gamma_-^*
	\end{array}
	\right]=\left[\begin{array}{ c c } 
		KQ^{-1}K^T-\mb{I}_a^{-1} & KQ^{-1} \\ 
		
		Q^{-1}K^T & Q^{-1}
	\end{array}\right]\left[
\begin{array}{c}
b \\ \nu_-
\end{array}
\right],
	\label{linear}
\end{equation}
where $K:=\mb{I}_a^{-1}B_-^T \in \mathds{R}^{\mathrm{E} \times (\mathrm{N}-1)}$ and $Q:=B_-\mb{I}_a^{-1}B_-^T \in \mathds{R}^{(\mathrm{N}-1) \times (\mathrm{N}-1)}$.
The invertibility of $H$ follows from the invertibility of $\mb{I}_a$ (the delays are strictly increasing) and from the invertibility of $Q$ (see \cite{horn2012matrix}), which will be proved in a few lines.
From the definitions of $B_-$ and $a$, it follows that for every link $e$,
\begin{equation} 
	K_{e:}=\frac{(\delta^{(\xi(e))})^T-(\delta^{(\theta(e))})^T}{a_e},
	\label{K}
\end{equation}
with the convention that $\delta^{(\mathrm{d})}=0\cdot \mathbf{1}$ (since we removed the destination in $B_-$). Moreover,
\begin{equation*}
	Q_{ij}= \begin{cases}
		-\sum_{\substack{e \in \mathcal{E}: \\
				\xi(e)=i, \theta(e)=j, \ \text{or} \\
				\xi(e)=j, \theta(e)=i}}\frac{1}{a_e} & \text{if}\ i \neq j\\
		\sum_{l \in \partial i} \frac{1}{a_e} & \text{if}\ i = j.
	\end{cases} \quad \forall i,j \in \mathcal{N} \setminus \mathrm{d},
	\label{Q}
\end{equation*}
where $\partial i$ denotes the in and out neighborhood links of $i$, i.e.,
\begin{equation*}
	\partial i := \{e \in \mathcal{E}: B_{ie}\neq 0 \}.
\end{equation*} 
Let $L = I_{w} - W$ denote the Laplacian of the associated resistor network $\mc G _R$, and $_{\dd}L$ denote its restriction to $\mc{N} \setminus \dd$. We remark that $\partial i$ includes also links pointing to the destination. This allows to observe that $Q={_\dd L}$, which implies the invertibility of $Q$.
Let $\mb{I}_a^{(\control)}$, $H^{(\control)}, Q^{(\control)}$ and $K^{(\control)}$ denote the matrix $\mb{I}_a, H, Q$ and $K$ corresponding to the intervention $\control$.
Note that an intervention on link $e$ corresponds to a rank-1 perturbation of $Q$. In particular,
	\begin{equation*}
		Q^{(\control_e \delta^{(e)})} = Q + \frac{\control_e}{a_e} B_{-}^e (B_{-}^e)^T,
	\end{equation*}
	where $B_{-}^e$ denotes the $e-$th column of $B_-$. Thus, by Sherman-Morrison formula,
	\begin{equation}
		\label{eq:Qe}
		(Q^{(\control_e \delta^{(e)})})^{-1} = Q^{-1} - \frac{Q^{-1}B_-^e (B_{-}^e)^T Q^{-1}}{\frac{a_e}{\control_e}+(B_{-}^e)^T Q^{-1} B_{-}^e}.
	\end{equation}
	Let for simplicity of notation assume $\xi(e)=i, \theta(e)=j$. Then,
	\begin{equation}
		\label{eq:Ke}
		K^{(\control_e \delta^{(e)})}-K = \frac{\control_e}{a_e} \delta^{(e)}(\delta^{(i)}-\delta^{(j)})^T= \frac{\control_e}{a_e} \delta^{(e)}(B_-^e)^T.
	\end{equation}
	By \eqref{compact_kkt2}, \eqref{eq:Qe}, and \eqref{eq:Ke}, we thus get
	\begin{equation}
		\label{eq:delta_gamma}
		\begin{aligned}
			\gamma^*_{\mathrm{o}} - \gamma^*_{\mathrm{o}}(\control_e \delta^{(e)}) &=-\frac{\control_e}{a_e} Q^{-1}B_-^e(\delta^{(e)})^T b +\\
			&+ \frac{Q^{-1}B_-^e (B_{-}^e)^T Q^{-1}}{\frac{a_e}{\control_e}+(B_{-}^e)^T Q^{-1} B_{-}^e} \times \\
			&\times \left(K^T b + \frac{\control_e}{a_e}B_-^e(\delta^{(e)})^T b +\nu_- \right).
		\end{aligned}
	\end{equation}
	We now give an interpretation to the terms in equation \eqref{eq:delta_gamma}. Let $\tilde{\mb{I}}_w$ and $\tilde{W}$ denote the restriction of $\mb{I}_w$ and $W$ over $\mc{N} \setminus \dd$, respectively.
	Note that $ _\mathrm{d}P=\tilde{\mb{I}}_w^{-1}\tilde{W}$,
	where $_\mathrm{d}P$ is defined as in Section~\ref{preliminaries}. Note also that is
	$_\mathrm{d}P$ is sub-stochastic, since the rows referring to nodes pointing to the destination sum to less than one.
	The inverse of $Q$ may be written as follows.
	\begin{equation*}
		\begin{aligned}
		Q^{-1}&=(\tilde{\mb{I}}_w-\tilde{W})^{-1}=(\tilde{\mb{I}}_w(\mb{I}-{_\mathrm{d}P}))^{-1}=(\mb{I}-{_\mathrm{d}P})^{-1}\tilde{\mb{I}}_w^{-1}\\
		&=\sum_{t=0}^\infty ({_\mathrm{d}P})^t \tilde{\mb{I}}_w^{-1}=  {_\mathrm{d}G}\tilde{\mb{I}}_w^{-1},
		\end{aligned}
	\end{equation*}
	where the first equivalence follows from $Q={_\dd L}$, and the penultimate one follows from connectedness of $\mathcal{G}_R$ and \eqref{G}. We now construct $\hat{Q}^{-1} \in \mathds{R}^{\mathrm{N} \times \mathrm{N}}$ and ${_\mathrm{d}\hat{G}} \in \mathds{R}^{\mathrm{N} \times \mathrm{N}}$ by adding a zero column and a zero row to $Q^{-1}$ and ${_\mathrm{d}G}$, and construct $\hat{K} \in \mathds{R}^{\mathrm{E} \times \mathrm{N}}$ by adding a zero column to $K$ corresponding to the destination. By construction,  $\hat{Q}^{-1}={_\mathrm{d}\hat{G}}\tilde{\mb{I}}_w^{-1}$. Consider now a link $e$ with $\xi(e)=i$, $\theta(e)=j$. It follows
	\begin{equation}
		\label{eq:rij}
		\begin{aligned}
			(B_-^e)^T Q^{-1} B_-^e &= (B^e)^T \hat{Q}^{-1} B^e \\ &=\left(\delta^{(i)}-\delta^{(j)}\right)^T {_\mathrm{d}\hat{G}\tilde{\mb{I}}_w^{-1}} (\delta^{(i)} - \delta^{(j)})\\
			&=\frac{_\mathrm{d}\hat{G}_{ii}-{_\mathrm{d}\hat{G}}_{ji}}{w_{i}}+\frac{_\mathrm{d}\hat{G}_{jj}-{_\mathrm{d}\hat{G}}_{ij}}{w_{j}}=r_{e},
		\end{aligned}
	\end{equation}
	where we recall that $r_e$ denotes the effective resistance of link $M(e)=\{i,j\}$ in $\mc{L}$, and the last equivalence follows from \eqref{reff} and from noticing that the definition of $_\mathrm{d}{\hat{G}}$ is coherent with \eqref{green0}. Let $v_-$ denote the restriction of $v$ on $\mc{N} \setminus \{\dd\}$. Definition \ref{def:voltage} and $Q={_\dd L}$ imply that 
	\begin{equation}
		\label{eq:compact_v}
		v_- = m Q^{-1}\delta^{(\mathrm{o})}.
	\end{equation}
	Plugging this equivalence and \eqref{eq:rij} in \eqref{eq:delta_gamma}, we get
	\begin{equation}
		\label{eq:delta_gamma2}
		\begin{aligned}
				\gamma^*_\mathrm{o} - \gamma^*_\mathrm{o}(\control_e \delta^{(e)})
			&= -\frac{\control_e}{a_e}(\delta^{(\mathrm{o})})^T Q^{-1}B_-^e(\delta^{(e)})^T b +\\
			&+ \frac{(\delta^{(\mathrm{o})})^TQ^{-1}B_-^e (B_{-}^e)^T Q^{-1}}{\frac{a_e}{\control_e}+(B_{-}^e)^T Q^{-1} B_{-}^e}\cdot\\
			&\cdot \left(K^T b + \frac{\control_e}{a_e}B_-^e(\delta^{(e)})^T b + \nu_- \right)\\
			&=-\frac{\control_e}{m} \frac{b_e}{a_e}(v_i-v_j)+\\
			&+\frac{1}{m}\frac{v_i-v_j}{\frac{a_e}{\control_e}+r_e} \big((B_-^e)^T \gamma^*_-+\control_e\frac{b_e}{a_e} r_e\big)\\
			&= \frac{1}{m}\frac{v_i-v_j}{\frac{a_e}{\control_e}+r_e} \left(-b_e+\gamma_i^*-\gamma_j^*\right)\\
			& = \frac{1}{m}\frac{v_i-v_j}{\frac{1}{\control_e}+\frac{r_e}{a_e}} f_e^*
		\end{aligned}
	\end{equation}
	where the second equivalence follows from KKT conditions $Q^{-1}(K^Tb+\nu_-)=\gamma^*_-$, the last one from $\gamma_i^*-\gamma_j^*=a_ef_e^*+b_e$, and $v$ is used instead of $v_-$, coherently with the convention $\delta^{(\mathrm{d})}=0\cdot \mb{1}$. The statement then follows from Lemma~\ref{cost} from $\gamma_{\dd}=0$, and from Ohm's law, i.e., $v_i-v_j = a_ey_e$.

\subsection{Proof of Proposition~\ref{sp_prp}}
A sufficient condition under which $\mathcal{E}_+=\emptyset$ is that the first $\mathrm{E}$ components of \eqref{linear}, corresponding to equilibrium link flows, are nonnegative. Indeed,
since \eqref{convex_prob} is strictly convex, if the flow $f^*$ obtained by \eqref{linear} is non-negative, then $f^*$ is feasible and is the unique Wardrop equilibrium, with $\mathbf{\lambda^*}=\mathbf{0}$. Links $e$ with $\lambda_e^*>0$ are those such that $f_e^*$ computed by \eqref{linear} is strictly negative.
Hence, we aim at finding conditions under which $f_e^*\ge 0$ for every $e$ in $\mc E$ according to \eqref{linear}. Let us define $\tilde{v}=v/m$.
From \eqref{linear}, \eqref{eq:compact_v}, and $\nu_-=m \delta^{(\mathrm{o})}$, it follows that for every link $e$,
\begin{align*}
	f_e^*&=-\frac{b_e}{a_e}+[KQ^{-1}K^T]_{e:}b+[KQ^{-1}]_{e:}(\nu_-)\\
	&=-\frac{b_e}{a_e}+[KQ^{-1}K^T]_{e:}b+m\frac{\tv_{\xi(e)}-\tv_{\theta(e)}}{a_e}.
\end{align*}
Let $\overline{m}_e = (b_e-a_e[K Q^{-1}K^T]_{e:}b)/ \Delta \tv_e$.
If $\Delta \tv_e>0$, then for every $m\ge\overline{m}_e$
it holds $f_e^*\ge0$, which in turn implies that if $m \ge \overline{m}:= \{\overline{m}_e\}_{e=1}^{\mathrm{E}}$, then $\mathcal{E}_+=\emptyset$.
Moreover, if the delays are linear, $\Delta \tv_e \ge 0$ implies $f_e^*\ge 0$ and $\mathcal{E}_+=\emptyset$ for every $m \ge 0$, because $b=\mb{0}$. We have now to prove that $\Delta \tv_e>0$. Note by Ohm's law that $\Delta \tv_e \cdot a_e = \ty_e$, where $\ty_e$ denotes the current flowing on $\mc G_R$ from node $\xi(e)$ to node $\theta(e)$ when unitary current is injected from $\oo$ to $\dd$. Then, it suffices to show that $\ty_e>0$. To this end, observe that if the transportation network is series-parallel, it has single link $e: \xi(e)=\mathrm{o}, \theta(e)=\mathrm{d}$, or it can obtained by connecting in series or in parallel two series-parallel networks. Thus, a series-parallel network can be reduced to a single link from $\oo$ to $\dd$ by recursively i) merging two links $e_1$ and $e_2$ connected in series (i.e., $\xi(e_2)=\theta(e_1)$) into a single link $e_3$, or ii) merging two links $e_1$ and $e_2$ connected in parallel, i.e., with same head and tail, into a single link $e_3$. The transformation (i) results in an associated resistor network where the links $M(e_1)$ and $M(e_2)$ are replaced by their series composition $M(e_3)=\{\xi(e_1),\theta(e_2)\}$ with current $\ty_{e_3}=\ty_{e_1}=\ty_{e_2}$. Instead, the transformation (ii) results in an associated resistor network where the links $M(e_1)$ and $M(e_2)$ are replaced by their parallel composition $M(e_3)$, with $\ty_{e_3}>0$ if and only if $\ty_{e_1},\ty_{e_2}>0$. Thus, in both the cases (i) and (ii), $\ty_{e_3}>0$ if and only if $\ty_{e_1}>0$ and $\ty_{e_2}>0$. Obviously, when the transportation network is reduced to a single link from $\mathrm{o}$ to $\mathrm{d}$, the flow on the unique link is positive because $m>0$. Then, by applying those arguments recursively, for every link $e$ in $\mathcal{E}$, we get $\ty_{e}>0$, which implies by Ohm's law that $\Delta \tv_e>0$. Thus, if $m \ge \overline{m}$ then $f_e^* \ge 0$ and $\mathcal{E}_+ =\emptyset$, concluding the proof.

\subsection{Proof of Theorem~\ref{thm2}}
Consider an intervention $\control = \control_e \delta^{(e)}$. Then,
\begin{equation*}
	\begin{aligned}
		|\Delta C^{(\control)}-\Delta C^{(\control)}_d| &= a_ef_e^*\Bigg|\frac{y_e}{\frac{1}{\control_e}+\frac{r_e}{a_e}}-\frac{y_e} {\frac{1}{\control_e}+\frac{r_e^{U_d}+r_e^{L_d}}{2a_e}} \Bigg|\\
		& =\Bigg|\frac{a_ef_e^*y_e}{\frac{1}{\control_e}+\frac{r_e}{a_e}}\Bigg|\cdot \Bigg|\frac{\frac{r_e^{U_d}+r_e^{L_d}-2r_e}{2a_e}}{\frac{1}{\control_e}+\frac{r_e^{U_d}+r_{e}^{L_d}}{2a_e}}\Bigg|,
	\end{aligned}
\end{equation*}
Notice also that 
\begin{equation*}
	\begin{aligned}
	&\frac{|r^{U_d}_{e}+r^{L_d}_{e}-2r_{e}|}{a_e} \le \frac{|r^{U_d}_{e}-r_{e}|+|r_{e}-r^{L_d}_{e}|}{a_e}\\
	=&\frac{r^{U_d}_{e}-r_{e}+r_{e}-r^{L_d}_{e}}{a_e}=\frac{r^{U_d}_{e}-r^{L_d}_{e}}{a_e}=\epsilon_{ed}.
	\end{aligned}
\end{equation*}
Putting those two together, and using \eqref{el_form}, we get
\begin{equation*}
	\begin{aligned}
	\bigg|\frac{\Delta C^{(\control)}-\Delta C_d^{(\control)}}{\Delta C^{(\control)}}\bigg| &\le \frac{\epsilon_{ed}}{2\Big(\frac{1}{\control_e}+\frac{r_{e}^{U_d}+r_{e}^{L_d}}{2a_e}\Big)} \\
	&\le \frac{\epsilon_{ed}}{2\Big(\frac{1}{\control_e}+\frac{1}{w^* \cdot a_e}\Big)},
	\end{aligned}
\end{equation*}
where the last inequality follows from \eqref{remark}. Finally, \eqref{min_gain} follows from Theorem~\ref{thm} and $r_{e}^{U_d}\ge r_{e}^{L_d}$, concluding the proof.

\subsection{Proof of Proposition~\ref{speed}}
The cut and shorted networks are obtained by finding the neighbors within distance $d$ and $d+1$ from ${i,j}$, respectively. The neighbors of a node $i$ can be found by checking the non-zero elements of $W(i,:)$. The neighbors within distance $d$ can be found by iterating such operation $d$ times. Hence, the time to construct the cut and the shorted network depends on the local structure, which, under Assumption~\ref{sparse}, does not depend on the network size. Since the bounds of the effective resistance are computed on these subnetwork, their time complexity and tightness depends on local structure, which, under Assumption~\ref{sparse}, is independent of the network size.

\subsection{Proof of Proposition~\ref{gap}}
We introduce the following notation:
\begin{itemize}
	\item The index $U_d$ and $L_d$ indicate that the random walk takes place over $\mathcal{G}^{U_d}_{l}$ and $\mathcal{G}^{L_d}_{l}$, respectively. So, for instance, $_kG_{ij}^{U_d}$ denotes the expected number of times that the random walk on the network $\mathcal{G}^{U_d}_{l}$, starting from $i$, hits $j$ before hitting $k$.
	
	\item $p_i(T_j = T_\mc{S})$, with $j$ in $\mc{S}$, denotes the probability that the random walk starting from $i$ hits the node $j$ in $\mc{S}$ before hitting any other node in $\mc{S}$.
\end{itemize} 
By applying \eqref{reff} to the effective resistance of link $l=\{i,j\}$ in the shorted and the cut network, it follows
\begin{equation*}
	\begin{gathered}
		r^{U_d}_{l}= \frac{_jG^{U_{d}}_{ii}}{w_{i}}, \quad \quad
		r^{L_d}_{l}= \frac{_jG^{L_{d}}_{ii}}{w_{i}},
	\end{gathered}
\end{equation*}
where we recall that $_jG^{U_{d}}_{ii}$ and $_jG^{L_{d}}_{ii}$ are the expected number of visits on $i$, before hitting $j$, starting from $i$, of the random walk defined on $\mathcal{G}^{U_d}_{l}$ and $\mathcal{G}^{L_d}_{l}$ respectively. The visits on $i$ before hitting $j$ can be divided in two disjoint sets: the visits before hitting $j$ and before visiting any node in $\mc{N}_d$, and the visits before hitting $j$ but after at least a node in $\mc{N}_d$ has been visited.
Let $G^{<\mc{N}_d}_{ii}$ denote the expected number of visits to $i$, starting from $i$, before hitting any node in $\mc{N}_d$ and before hitting the absorbing node $j$ (for simplicity of notation we omit the index $j$ from now on). Note that $\mathcal{G}^{U_d}_{l}$ and $\mathcal{G}^{L_d}_{l}$ differ only in the node $s$, which is the node obtained by shorting all the nodes at distance greater than $d$ from $i$ and $j$. Since $s$ cannot be reached before hitting nodes in $\mc{N}_d$ before, $G^{<\mc{N}_d}_{ii}$ is equivalent when computed on $\mathcal{G}^{U_d}_{l}$ and $\mathcal{G}^{L_d}_{l}$. Thus, we can write the following decomposition,
\begin{equation*}
	\begin{aligned}
		&G_{ii}^{U_d}=G_{ii}^{<\mc{N}_d}+G_{ii}^{U>\mc{N}_d}, \\
		&G_{ii}^{L_d}=G_{ii}^{<\mc{N}_d}+G_{ii}^{L>\mc{N}_d},
	\end{aligned}
\end{equation*}
where $G_{ii}^{U>\mc{N}_d}$ and $G_{ii}^{L>\mc{N}_d}$ indicate respectively the expected visits in $i$, starting from $i$, before hitting $j$ and after hitting any node in $\mc{N}_d$, on $\mc{G}_l^{U_d}$ and $\mc{G}_l^{L_d}$ respectively.
This implies by \eqref{reff}
\begin{equation}
	r_{l}^{U_d}-r_{l}^{L_d}=\frac{G_{ii}^{U>\mc{N}_d}-G_{ii}^{L>\mc{N}_d}}{w_{i}}.
	\label{tight_G}
\end{equation}
Notice that $G_{ii}^{U>\mc{N}_d}$ can be written as the sum over the nodes $g$ in $\mc{N}_d$ of the probability, starting from $i$, of hitting $g$ and going back to $i$ without hitting $j$, multiplied by the expected number of visits on $i$ starting from $i$, before hitting $j$, which is the derivative of a geometric sum. 
Therefore,
\begin{equation*}
	\begin{split}
		G_{ii}^{U>\mc{N}_d} &=\sum_{g \in \mc{N}_d} \underbrace{p_i(T_g = T_{j \cup \mc{N}_d})}_{\text{(1)}} \underbrace{p^{U_d}_g(T_i < T_j)}_{\text{(2)}}\cdot \\
		&\cdot \sum_{k=1}^\infty k \underbrace{\big(p^{U_d}_i(T_i^+ < T_j)\big)^{k-1}}_{\text{(3)}}\underbrace{\big(1-p^{U_d}_i(T_i^+ < T_j)\big)}_{\text{(4)}} \\
		&=\frac{\sum_{g \in \mc{N}_d}  p_i(T_g = T_{j \cup \mc{N}_d}) p^{U_d}_g(T_i < T_j)}{1-p^{U_d}_i(T_i^+ < T_j)},
	\end{split}
\end{equation*}
where:
\begin{enumerate}
	\item probability of hitting $g$ before hitting $j$ and any other node in $\mc{N}_d$ starting from $i$;
	\item probability of hitting $i$ before $j$ starting from $g$;
	\item probability of hitting $k-1$ times $i$ before hitting $j$ starting from $i$;
	\item probability of hitting $j$ before returning in $i$ starting from $i$.
\end{enumerate}
Similarly,
\begin{equation*}
	\begin{split}
		G_{ii}^{L>\mc{N}_d}&=\sum_{g \in \mc{N}_d} p_i(T_g = T_{j \cup \mc{N}_d}) p^{L_d}_g(T_i < T_j)\cdot \\
		&\cdot \sum_{k=1}^\infty k \big(p^{L_d}_i( T_i^+ < T_j)\big)^{k-1}\big(1-p^{L_d}_i( T_i^+ < T_j)\big) \\
		&=\frac{\sum_{g \in \mc{N}_d}  p_i(T_g = T_{j \cup \mc{N}_d}) p^{L_d}_g(T_i < T_j)}{1-p^{L_d}_i( T_i^+ < T_j)}.
	\end{split}
\end{equation*}
Substituting in \eqref{tight_G}, we get
\begin{equation*}
	\begin{aligned}
	r^{U_d}_{l}-r^{L_d}_{l}&=\frac{1}{w_{i}}\sum_{g \in \mc{N}_d} p_i(T_g = T_{j \cup \mc{N}_d})\cdot \\ &\cdot \bigg(\frac{p^{U_d}_g( T_i < T_j)}{1-p^{U_d}_i( T_i^+ < T_j)}
	-\frac{p^{L_d}_g(T_i < T_j)}{1-p^{L_d}_i( T_i^+ < T_j)}\bigg).
	\end{aligned}
\end{equation*}
From \eqref{reff}, it follows
\begin{align*}
	&r^{U_d}_{l}=\frac{1}{w_{i}p^{U_d}_i( T_j < T_i^+)}=\frac{1}{w_{i}\big(1-p^{U_d}_i( T_i^+ < T_j)\big)}, \\
	&r^{L_d}_{l}=\frac{1}{w_{i}p^{L_d}_i(T_j < T_i^+)}=\frac{1}{w_{i}\big(1-p^{L_d}_i( T_i^+ < T_j)\big)}.
\end{align*}
Therefore, $r^{U_d}_{l}-r^{L_d}_{l}$ reads
\begin{equation*}
	\begin{split}
		&\sum_{g \in \mc{N}_d} p_i(T_g = T_{j \cup \mc{N}_d}) \big(p^{U_d}_g(T_i < T_j)r_{l}^{U_d}
		-p^{L_d}_g(T_i < T_j)r_{l}^{L_d}\big)\\
		=&\sum_{g \in \mc{N}_d} p_i(T_g = T_{j \cup \mc{N}_d})\big(p^{U_d}_g(T_i < T_j)-p^{L_d}_g(T_i < T_j)\big)r_{l}^{U_d}+\\
		+&\sum_{g \in \mc{N}_d} p_i(T_g = T_{j \cup \mc{N}_d})
		p^{L_d}_g(T_i < T_j)(r_{l}^{U_d}-r_{l}^{L_d}) \\
		\le&\sum_{g \in \mc{N}_d} p_i(T_g = T_{j \cup \mc{N}_d})\big(p^{U_d}_g(T_i < T_j)-p^{L_d}_g(T_i < T_j)\big)r_{l}^{U_d}+\\
		+&\sum_{g \in \mc{N}_d} p_i(T_g = T_{j \cup \mc{N}_d})
		(r_{l}^{U_d}-r_{l}^{L_d})\\
		=&\sum_{g \in \mc{N}_d} p_i(T_g = T_{j \cup \mc{N}_d})\big(p^{U_d}_g(T_i < T_j)-p^{L_d}_g(T_i < T_j)\big)r_{l}^{U_d}+\\
		+ & p_i(T_{\mc{N}_d} < T_j)
		(r_{l}^{U_d}-r_{l}^{L_d}),
	\end{split}
\end{equation*}
where the last inequality follows from $p^L_g(T_i < T_j) \le 1$ and the last equality from the fact that $p_i( T_{\mc{N}_d} < T_j)=\sum_{g \in \mc{N}_d} p_i(T_g = T_{j \cup \mc{N}_d})$.
It thus follows
\begin{equation*}
	\begin{split}
		r^{U_d}_{l}-r^{L_d}_{l}&\scriptstyle{\le \frac{\sum_{g \in \mc{N}_d} p_i(T_g = T_{j \cup \mc{N}_d})\left(p^U_g(T_i < T_j)-p^L_g(T_i < T_j)\right)r_{l}^{U_d}}{1-p_i(T_{\mc{N}_d} < T_j)}} \\
		&\scriptstyle{\le \sum_{g \in \mc{N}_d} p_i(T_g = T_{j \cup \mc{N}_d})\big(p^U_g(T_i < T_j)-p^L_g(T_i < T_j)\big)r_{l}^{U_d}\frac{w_{i}}{W_{ij}}}\\
		&\scriptstyle{\le p_i(T_{\mc{N}_d} < T_j) \underset{g \in \mc{N}_d} \max \big(p^U_g(T_i < T_j)-p^L_g(T_i < T_j)\big)  \frac{w_{i}}{(W_{ij})^2}.}
	\end{split}
\end{equation*}
where the second inequality follows from $1-p_i(T_{\mc{N}_d} < T_j)=p_i(T_j < T_{\mc{N}_d})\ge P_{ij}=W_{ij}/w_{i}$, and the last one from $r_{ij}^{U_d}\le 1/W_{ij}$ (as shown in \eqref{remark}) and from $p_i( T_{\mc{N}_d} < T_j)=\sum_{g \in \mc{N}_d} p_i(T_g = T_{j \cup \mc{N}_d})$.

\subsection{More details on Section~\ref{section_tree}}
\label{app:tree}
\begin{figure}
	\centering
	\begin{tikzpicture}[scale=0.75, transform shape]
			\node[draw, circle, fill=green] (1) at (0,0) {$i$};
			\node[draw, circle, fill=green] (2) at (1.5 ,0) {$j$};
			\node[draw, circle] (3) at (-1.5,0) {};
			\node[draw, circle] (4) at (3,0) {};
			\node[draw, circle] (5) at (-3,0) {};
			\node[draw, circle] (6) at (4.5,0) {};
			\node[draw=none] (7) at (-4.5,0) {};
			\node[draw=none] (8) at (6,0) {};
			\path [->, >=latex]  (1) edge[bend left =20]   node [above] {$1/3$} (2);
			\path [->, >=latex]  (2) edge[bend left =20]   node [below] {$1/3$} (1);
			\path [->, >=latex]  (2) edge[bend left =20]   node [above] {$2/3$} (4);
			\path [->, >=latex]  (1) edge[bend left =20]   node [below] {$2/3$} (3);
			\path [->, >=latex]  (4) edge[bend left =20]   node [above] {$2/3$} (6);
			\path [->, >=latex]  (3) edge[bend left =20]   node [below] {$2/3$} (5);
			\path [->, >=latex, dashed]  (6) edge[bend left =20]   node [above] {$2/3$} (8);
			\path [->, >=latex, dashed]  (5) edge[bend left =20]   node [below] {$2/3$} (7);
			\path [->, >=latex]  (4) edge[bend left =20]   node [below] {$1/3$} (2);
			\path [->, >=latex]  (3) edge[bend left =20]   node [above] {$1/3$} (1);
			\path [->, >=latex]  (6) edge[bend left =20]   node [below] {$1/3$} (4);
			\path [->, >=latex]  (5) edge[bend left =20]   node [above] {$1/3$} (3);
			\path [->, >=latex, dashed]  (8) edge[bend left =20]   node [below] {$1/3$} (6);
			\path [->, >=latex, dashed]  (7) edge[bend left =20]   node [above] {$1/3$} (5);
	\end{tikzpicture}
	\caption{The double tree network is equivalent to a biased random walk like this. \label{biased_rw}}
\end{figure}
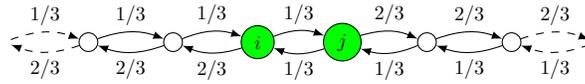
\begin{figure}
		\centering{
		(a) \hspace{0.35cm}
		\begin{tikzpicture}[scale=0.5, transform shape]
			\node[draw, circle, fill=green] (1) at (-2,0)  {$i$};
			\node[draw, circle, fill=green] (2) at (2,0) {$j$};
			\node[draw, circle] (3) at (-2.8,1)  {};
			\node[draw, circle] (4) at (-1.2,1) {};
			\node[draw, circle] (5) at (1.2,1) {};
			\node[draw, circle] (6) at (2.8,1) {};
			\node[draw, circle] (7) at (-3.2,2)  {};
			\node[draw, circle] (8) at (-2.4,2) {};
			\node[draw, circle] (9) at (-1.6,2) {};
			\node[draw, circle] (10) at (-0.8,2) {};
			\node[draw, circle] (11) at (0.8,2)  {};
			\node[draw, circle] (12) at (1.6,2) {};
			\node[draw, circle] (13) at (2.4,2) {};
			\node[draw, circle] (14) at (3.2,2) {};
			
			\node[draw=none] (15) at (-3.5,3)  {};
			\node[draw=none] (16) at (-2.9,3) {};
			\node[draw=none] (17) at (-2.7,3) {};
			\node[draw=none] (18) at (-2.1,3) {};
			\node[draw=none] (19) at (-1.9,3)  {};
			\node[draw=none] (20) at (-1.3,3) {};
			\node[draw=none] (21) at (-1.1,3) {};
			\node[draw=none] (22) at (-0.5,3) {};
			\node[draw=none] (23) at (0.5,3)  {};
			\node[draw=none] (24) at (1.1,3) {};
			\node[draw=none] (25) at (1.3,3) {};
			\node[draw=none] (26) at (1.9,3) {};
			\node[draw=none] (27) at (2.1,3)  {};
			\node[draw=none] (28) at (2.7,3) {};
			\node[draw=none] (29) at (2.9,3) {};
			\node[draw=none] (30) at (3.5,3) {};
			
			\path   (1) edge [bend right=0]
			node [above] {} (2);
			
			\path   (1) edge [bend right=0]
			node [above] {} (3);
			\path   (1) edge [bend left=0]
			node [above] {} (4);
			\path (2) edge [bend right=0]
			node [above] {} (5);
			\path   (2) edge [bend right=0]
			node [below] {} (6);
			
			\path   (3) edge [bend left=0]
			node [below] {} (7);
			\path  (3) edge [bend left=0]
			node [below] {} (8);
			\path   (4) edge [bend right=0]
			node [above] {} (9);
			\path   (4) edge [bend right=0]
			node [above] {} (10);
			\path   (5) edge [bend left=0]
			node [above] {} (11);
			\path (5) edge [bend right=0]
			node [above] {} (12);
			\path   (6) edge [bend right=0]
			node [below] {} (13);
			\path   (6) edge [bend left=0]
			node [below] {} (14);
			
			\path   (7) edge [dashed]
			node [above] {} (15);
			\path   (7) edge [dashed]
			node [above] {} (16);
			\path   (8) edge [dashed]
			node [above] {} (17);
			\path (8) edge [dashed]
			node [above] {} (18);
			\path   (9) edge [dashed]
			node [below] {} (19);
			\path   (9) edge [dashed]
			node [below] {} (20);
			\path  (10) edge [dashed]
			node [below] {} (21);
			\path   (10) edge [dashed]
			node [above] {} (22);
			\path   (11) edge [dashed]
			node [above] {} (23);
			\path   (11) edge [dashed]
			node [above] {} (24);
			\path (12) edge [dashed]
			node [above] {} (25);
			\path   (12) edge [dashed]
			node [below] {} (26);
			\path   (13) edge [dashed]
			node [below] {} (27);
			\path (13) edge [dashed]
			node [above] {} (28);
			\path   (14) edge [dashed]
			node [below] {} (29);
			\path   (14) edge [dashed]
			node [below] {} (30);
		\end{tikzpicture}
	\qquad}

		\centering{
	(b) \qquad
		\begin{tikzpicture}[scale=0.5, transform shape]
			\node[draw, circle, fill=green] (1) at (-2,0)  {$i$};
			\node[draw, circle, fill=green] (2) at (2,0) {$j$};
			\node[draw, circle] (3) at (-2.8,1)  {};
			\node[draw, circle] (4) at (-1.2,1) {};
			\node[draw, circle] (5) at (1.2,1) {};
			\node[draw, circle] (6) at (2.8,1) {};
			\node[draw, circle, fill=red] (7) at (-3.2,2)  {};
			\node[draw, circle, fill=red] (8) at (-2.4,2) {};
			\node[draw, circle, fill=red] (9) at (-1.6,2) {};
			\node[draw, circle, fill=red] (10) at (-0.8,2) {};
			\node[draw, circle, fill=red] (11) at (0.8,2)  {};
			\node[draw, circle, fill=red] (12) at (1.6,2) {};
			\node[draw, circle, fill=red] (13) at (2.4,2) {};
			\node[draw, circle, fill=red] (14) at (3.2,2) {};
			
			\path   (1) edge [bend right=0]
			node [above] {} (2);
			
			\path   (1) edge [bend right=0]
			node [above] {} (3);
			\path   (1) edge [bend left=0]
			node [above] {} (4);
			\path (2) edge [bend right=0]
			node [above] {} (5);
			\path   (2) edge [bend right=0]
			node [below] {} (6);
			
			\path   (3) edge [bend left=0]
			node [below] {} (7);
			\path  (3) edge [bend left=0]
			node [below] {} (8);
			\path   (4) edge [bend right=0]
			node [above] {} (9);
			\path   (4) edge [bend right=0]
			node [above] {} (10);
			\path   (5) edge [bend left=0]
			node [above] {} (11);
			\path (5) edge [bend right=0]
			node [above] {} (12);
			\path   (6) edge [bend right=0]
			node [below] {} (13);
			\path   (6) edge [bend left=0]
			node [below] {} (14);
		\end{tikzpicture}
	\qquad}

\centering{
(c) \qquad
		\begin{tikzpicture}[scale=0.5, transform shape]
			\node[draw, circle, fill=green] (1) at (-2,0)  {$i$};
			\node[draw, circle, fill=green] (2) at (2,0) {$j$};
			\node[draw, circle] (3) at (-2.8,1)  {};
			\node[draw, circle] (4) at (-1.2,1) {};
			\node[draw, circle] (5) at (1.2,1) {};
			\node[draw, circle] (6) at (2.8,1) {};
			\node[draw, circle, fill=red] (7) at (-3.2,2)  {};
			\node[draw, circle, fill=red] (8) at (-2.4,2) {};
			\node[draw, circle, fill=red] (9) at (-1.6,2) {};
			\node[draw, circle, fill=red] (10) at (-0.8,2) {};
			\node[draw, circle, fill=red] (11) at (0.8,2)  {};
			\node[draw, circle, fill=red] (12) at (1.6,2) {};
			\node[draw, circle, fill=red] (13) at (2.4,2) {};
			\node[draw, circle, fill=red] (14) at (3.2,2) {};
			
			\node[draw, circle] (15) at (0,3.6)  {s};
			
			\path   (1) edge [bend right=0]
			node [above] {} (2);
			
			\path   (1) edge [bend right=0]
			node [above] {} (3);
			\path   (1) edge [bend left=0]
			node [above] {} (4);
			\path (2) edge [bend right=0]
			node [above] {} (5);
			\path   (2) edge [bend right=0]
			node [below] {} (6);
			
			\path   (3) edge [bend left=0]
			node [below] {} (7);
			\path  (3) edge [bend left=0]
			node [below] {} (8);
			\path   (4) edge [bend right=0]
			node [above] {} (9);
			\path   (4) edge [bend right=0]
			node [above] {} (10);
			\path   (5) edge [bend left=0]
			node [above] {} (11);
			\path (5) edge [bend right=0]
			node [above] {} (12);
			\path   (6) edge [bend right=0]
			node [below] {} (13);
			\path   (6) edge [bend left=0]
			node [below] {} (14);
			
			\path   (7) edge [bend left=5]
			node [below] {} (15);
			\path  (7) edge [bend left=-5]
			node [below] {} (15);
			\path   (8) edge [bend right=5]
			node [above] {} (15);
			\path   (8) edge [bend right=-5]
			node [above] {} (15);
			\path   (9) edge [bend right=5]
			node [above] {} (15);
			\path (9) edge [bend right=-5]
			node [above] {} (15);
			\path   (10) edge [bend left=5]
			node [below] {} (15);
			\path   (10) edge [bend left=-5]
			node [below] {} (15);
			\path   (14) edge [bend left=5]
			node [below] {} (15);
			\path  (14) edge [bend left=-5]
			node [below] {} (15);
			\path   (13) edge [bend right=5]
			node [above] {} (15);
			\path   (13) edge [bend right=-5]
			node [above] {} (15);
			\path   (12) edge [bend right=5]
			node [above] {} (15);
			\path (12) edge [bend right=-5]
			node [above] {} (15);
			\path   (11) edge [bend left=5]
			node [below] {} (15);
			\path   (11) edge [bend left=-5]
			node [below] {} (15);
		\end{tikzpicture}
	\qquad}

\centering{
(d) \qquad
		\begin{tikzpicture}[scale=0.5, transform shape]
			\node[draw, circle, fill=green] (1) at (-2,0)  {$i$};
			\node[draw, circle, fill=green] (2) at (2,0) {$j$};
			\node[draw, circle] (3) at (-2.8,1)  {};
			\node[draw, circle] (4) at (-1.2,1) {};
			\node[draw, circle] (5) at (1.2,1) {};
			\node[draw, circle] (6) at (2.8,1) {};
			\node[draw, circle, fill=red] (7) at (-3.2,2)  {};
			\node[draw, circle, fill=red] (8) at (-2.4,2) {};
			\node[draw, circle, fill=red] (9) at (-1.6,2) {};
			\node[draw, circle, fill=red] (10) at (-0.8,2) {};
			\node[draw, circle, fill=red] (11) at (0.8,2)  {};
			\node[draw, circle, fill=red] (12) at (1.6,2) {};
			\node[draw, circle, fill=red] (13) at (2.4,2) {};
			\node[draw, circle, fill=red] (14) at (3.2,2) {};
			
			\node[draw, circle, fill=yellow] (15) at (0,2)  {};
			\node[draw, circle, fill=yellow] (16) at (0,2.8) {};
			\node[draw, circle, fill=yellow] (17) at (0,3.6) {};
			\node[draw, circle, fill=yellow] (18) at (0,4.4) {};
			
			\path   (1) edge [bend right=0]
			node [above] {} (2);
			
			\path   (1) edge [bend right=0]
			node [above] {} (3);
			\path   (1) edge [bend left=0]
			node [above] {} (4);
			\path (2) edge [bend right=0]
			node [above] {} (5);
			\path   (2) edge [bend right=0]
			node [below] {} (6);
			
			\path   (3) edge [bend left=0]
			node [below] {} (7);
			\path  (3) edge [bend left=0]
			node [below] {} (8);
			\path   (4) edge [bend right=0]
			node [above] {} (9);
			\path   (4) edge [bend right=0]
			node [above] {} (10);
			\path   (5) edge [bend left=0]
			node [above] {} (11);
			\path (5) edge [bend right=0]
			node [above] {} (12);
			\path   (6) edge [bend right=0]
			node [below] {} (13);
			\path   (6) edge [bend left=0]
			node [below] {} (14);
			
			\path   (7) edge [bend left=10]
			node [below] {} (18);
			\path  (7) edge [bend left=-10]
			node [below] {} (18);
			\path   (8) edge [bend right=10]
			node [above] {} (17);
			\path   (8) edge [bend right=-10]
			node [above] {} (17);
			\path   (9) edge [bend right=10]
			node [above] {} (16);
			\path (9) edge [bend right=-10]
			node [above] {} (16);
			\path   (10) edge [bend left=10]
			node [below] {} (15);
			\path   (10) edge [bend left=-10]
			node [below] {} (15);
			\path   (14) edge [bend left=10]
			node [below] {} (18);
			\path  (14) edge [bend left=-10]
			node [below] {} (18);
			\path   (13) edge [bend right=10]
			node [above] {} (17);
			\path   (13) edge [bend right=-10]
			node [above] {} (17);
			\path   (12) edge [bend right=10]
			node [above] {} (16);
			\path (12) edge [bend right=-10]
			node [above] {} (16);
			\path   (11) edge [bend left=10]
			node [below] {} (15);
			\path   (11) edge [bend left=-10]
			node [below] {} (15);
		\end{tikzpicture}
	\qquad}

	\caption{From above to below: $(a)$ the double tree network; $(b)$ the network cut at distance $2$ from $l=\{i,j\}$; $(c)$ the network shorted at distance $2$ from $l=\{i,j\}$; $(d)$ a network equivalent to the shorted one. In red, the nodes at distance $2$.}
	\label{counterex3}
\end{figure}
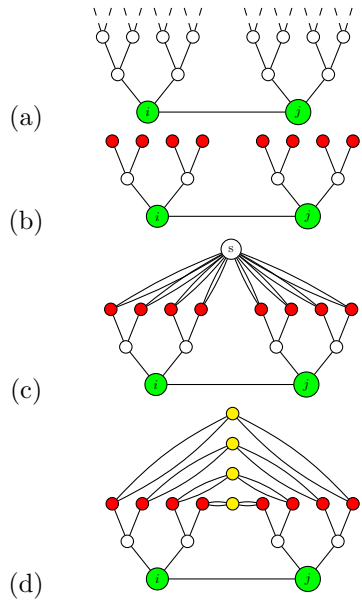
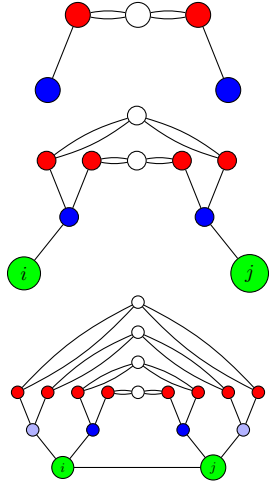
\begin{figure}[h!]
	\centering
		\begin{tikzpicture}
			\node[draw, circle, fill=blue] (4) at (-1.2,1) {};
			\node[draw, circle, fill=blue] (5) at (1.2,1) {};
			\node[draw, circle, fill=red] (10) at (-0.8,2) {};
			\node[draw, circle, fill=red] (11) at (0.8,2)  {};
			\node[draw, circle] (15) at (0,2)  {};

			\path   (4) edge [bend right=0]
			node [below] {} (10);
			\path   (5) edge [bend left=0]
			node [below] {} (11);
			
			\path   (10) edge [bend left=10]
			node [below] {} (15);
			\path   (10) edge [bend left=-10]
			node [below] {} (15);
			\path   (11) edge [bend left=10]
			node [below] {} (15);
			\path   (11) edge [bend left=-10]
			node [below] {} (15);
		\end{tikzpicture}\\
	\centering
		\begin{tikzpicture}[scale=0.75, transform shape]
			\node[draw, circle, fill=green] (1) at (-2,0)  {$i$};
			\node[draw, circle, fill=green] (2) at (2,0) {$j$};
			\node[draw, circle, fill=blue] (4) at (-1.2,1) {};
			\node[draw, circle, fill=blue] (5) at (1.2,1) {};
			\node[draw, circle, fill=red] (9) at (-1.6,2) {};
			\node[draw, circle, fill=red] (10) at (-0.8,2) {};
			\node[draw, circle, fill=red] (11) at (0.8,2)  {};
			\node[draw, circle, fill=red] (12) at (1.6,2) {};
			\node[draw, circle] (15) at (0,2)  {};
			\node[draw, circle] (16) at (0,2.8) {};
			
			\path   (1) edge [bend left=0]
			node [above] {} (4);
			\path (2) edge [bend right=0]
			node [above] {} (5);
			
			\path   (4) edge [bend right=0]
			node [above] {} (9);
			\path   (4) edge [bend right=0]
			node [above] {} (10);
			\path   (5) edge [bend left=0]
			node [above] {} (11);
			\path (5) edge [bend right=0]
			node [above] {} (12);
			
			\path   (9) edge [bend right=10]
			node [above] {} (16);
			\path (9) edge [bend right=-10]
			node [above] {} (16);
			\path   (10) edge [bend left=10]
			node [below] {} (15);
			\path   (10) edge [bend left=-10]
			node [below] {} (15);
			\path   (12) edge [bend right=10]
			node [above] {} (16);
			\path (12) edge [bend right=-10]
			node [above] {} (16);
			\path   (11) edge [bend left=10]
			node [below] {} (15);
			\path   (11) edge [bend left=-10]
			node [below] {} (15);
		\end{tikzpicture}\\
	\centering
		\begin{tikzpicture}[scale=0.5, transform shape]
			\node[draw, circle, fill=green] (1) at (-2,0)  {$i$};
			\node[draw, circle, fill=green] (2) at (2,0) {$j$};
			\node[draw, circle, fill=blue!30] (3) at (-2.8,1)  {};
			\node[draw, circle, fill=blue] (4) at (-1.2,1) {};
			\node[draw, circle, fill=blue] (5) at (1.2,1) {};
			\node[draw, circle, fill=blue!30] (6) at (2.8,1) {};
			\node[draw, circle, fill=red] (7) at (-3.2,2)  {};
			\node[draw, circle, fill=red] (8) at (-2.4,2) {};
			\node[draw, circle, fill=red] (9) at (-1.6,2) {};
			\node[draw, circle, fill=red] (10) at (-0.8,2) {};
			\node[draw, circle, fill=red] (11) at (0.8,2)  {};
			\node[draw, circle, fill=red] (12) at (1.6,2) {};
			\node[draw, circle, fill=red] (13) at (2.4,2) {};
			\node[draw, circle, fill=red] (14) at (3.2,2) {};
			
			\node[draw, circle] (15) at (0,2)  {};
			\node[draw, circle] (16) at (0,2.8) {};
			\node[draw, circle] (17) at (0,3.6) {};
			\node[draw, circle] (18) at (0,4.4) {};
			
			\path   (1) edge [bend right=0]
			node [above] {} (2);
			\path   (1) edge [bend right=0]
			node [above] {} (3);
			\path   (1) edge [bend left=0]
			node [above] {} (4);
			\path (2) edge [bend right=0]
			node [above] {} (5);
			\path   (2) edge [bend right=0]
			node [below] {} (6);
			
			\path   (3) edge [bend left=0]
			node [below] {} (7);
			\path  (3) edge [bend left=0]
			node [below] {} (8);
			\path   (4) edge [bend right=0]
			node [above] {} (9);
			\path   (4) edge [bend right=0]
			node [above] {} (10);
			\path   (5) edge [bend left=0]
			node [above] {} (11);
			\path (5) edge [bend right=0]
			node [above] {} (12);
			\path   (6) edge [bend right=0]
			node [below] {} (13);
			\path   (6) edge [bend left=0]
			node [below] {} (14);
			
			\path   (7) edge [bend left=10]
			node [below] {} (18);
			\path  (7) edge [bend left=-10]
			node [below] {} (18);
			\path   (8) edge [bend right=10]
			node [above] {} (17);
			\path   (8) edge [bend right=-10]
			node [above] {} (17);
			\path   (9) edge [bend right=10]
			node [above] {} (16);
			\path (9) edge [bend right=-10]
			node [above] {} (16);
			\path   (10) edge [bend left=10]
			node [below] {} (15);
			\path   (10) edge [bend left=-10]
			node [below] {} (15);
			\path   (14) edge [bend left=10]
			node [below] {} (18);
			\path  (14) edge [bend left=-10]
			node [below] {} (18);
			\path   (13) edge [bend right=10]
			node [above] {} (17);
			\path   (13) edge [bend right=-10]
			node [above] {} (17);
			\path   (12) edge [bend right=10]
			node [above] {} (16);
			\path (12) edge [bend right=-10]
			node [above] {} (16);
			\path   (11) edge [bend left=10]
			node [below] {} (15);
			\path   (11) edge [bend left=-10]
			node [below] {} (15);
		\end{tikzpicture}
	\caption{The network in Figure~\ref{counterex3}(d) is series-parallel. Then, it can be obtained by recursively making parallel and series compositions of series-parallel networks as shown in this figure.}
	\label{sp}
\end{figure}
We prove that the double tree network is not recurrent by showing that $p_i(T_i < T_{\mc{N}_d})$ is the same as in a biased random walk. Indeed, from any $d$ the probability of going from a node at distance $d$ from $i$ to a node at distance $d+1$ and $d-1$ are $2/3$ and $1/3$, respectively. Hence, the double tree is equivalent to a biased random walk on a line as in Figure~\ref{biased_rw}, which is not recurrent \cite[Example 21.2]{levin2017markov}.
Since in the actual network and in the cut network there are no paths between $i$ and $j$ except link $l = \{i,j\}$ (see Figure~\ref{counterex3} (a) and (b)), $r_{l}=r_{l}^{U_d}=1$.
Computing $r_{l}^{L_d}$ is more involved. First, referring to Figure~\ref{counterex3}, we note that, because of the symmetry of the network, the effective resistance between $i$ and $j$ in the shorted network (c), which is $r_{l}^{L_d}$, is equivalent to the effective resistance in (d). Indeed, if we set voltage $v_i=1$ and $v_j=0$, because of symmetry every yellow node has voltage $1/2$. Thus, adding infinite conductance between all of them, i.e., shorting them, does not affect the current in the network (this procedure is also known in literature as \emph{gluing}, see \cite[Section 9.4]{levin2017markov}), and therefore the effective resistance.
The network (d) is series-parallel, so that the effective resistance can be computed iteratively. Specifically, we refer to Figure~\ref{sp} to illustrate the recursion that leads to $r_{l}^{L_d}$. From top to bottom, one can see that the first network has effective resistance between the two blue nodes equal to $3$. The second network is the parallel composition of two of these, in series with two single links. This procedure is iteratively repeated $d-1$ times (in Figure~\ref{sp} only once, since $d=2$), leading to a network that, composed in parallel with a copy of itself and with a single link, is $\mathcal{G}_{l}^{L_d}$. Hence, $r_{l}^{L_d}$ is the result of the following recursion.
\begin{equation*}
	\begin{cases}
		
		r(0)=3, \\
		r(n)=2+\frac{r(n-1)}{2}, \quad d > n \ge1,\\
		r_{l}^{L_d}=(1+\frac{2}{r(d-1)})^{-1},
	\end{cases}
\end{equation*}
which has solution 
\begin{equation*}
	\begin{cases}
		r(n)=(2^{d+2}-1)/2^d, \quad d > n \ge1,\\
		
		r_{l}^{L_d}=\frac{2^{d+1}-1}{2^{d+1}+2^{d}-1} \xrightarrow[d \rightarrow + \infty]{} \frac{2}{3}.
	\end{cases}
\end{equation*}
\end{document}